\providecommand{\tabularnewline}{\\}
\newenvironment{lyxlist}[1]
{\begin{list}{}
{\settowidth{\labelwidth}{#1}
 \setlength{\leftmargin}{\labelwidth}
 \addtolength{\leftmargin}{\labelsep}
 }}
{\end{list}}
 \newenvironment{proofQED}{\begin{proof}}{\qed\end{proof}}
\newtheorem{theorem}{Theorem}
\newtheorem{definition}{Definition}
\newtheorem{lemma}{Lemma}
\newdefinition{remark}{Remark}
\newdefinition{corollary}{Corollary}
\newproof{proof}{Proof}
\newproof{thesis}{Thesis}
\newtheorem{proposition}{Proposition}
\newcommand*{\QED}{\null\hfill$\square$}
\begin{document}

\title{Inconsistency in the ordinal pairwise comparisons method \\
with and without ties}

\author{Konrad Ku\l akowski}

\ead{konrad.kulakowski@agh.edu.pl}

\address{AGH University of Science and Technology, Krak�w, Poland}
\begin{abstract}
Comparing alternatives in pairs is a well-known method of ranking
creation. Experts are asked to perform a series of binary comparisons
and then, using mathematical methods, the final ranking is prepared.
As experts conduct the individual assessments, they may not always
be consistent. The level of inconsistency among individual assessments
is widely accepted as a measure of the ranking quality. The higher
the ranking quality, the greater its credibility. 

One way to determine the level of inconsistency among the paired comparisons
is to calculate the value of the inconsistency index. One of the earliest
and most widespread inconsistency indices is the consistency coefficient
defined by Kendall and Babington Smith. In their work, the authors
consider binary pairwise comparisons, i.e., those where the result
of an individual comparison can only be: better or worse. The presented
work extends the Kendall and Babington Smith index to sets of paired
comparisons with ties. Hence, this extension allows the decision makers
to determine the inconsistency for sets of paired comparisons, where
the result may also be \textquotedbl{}equal.\textquotedbl{} The article
contains a definition and analysis of the most inconsistent set of
pairwise comparisons with and without ties. It is also shown that
the most inconsistent set of pairwise comparisons with ties represents
a special case of the more general set cover problem.
\end{abstract}
\begin{keyword}
pairwise comparisons\sep consistency coefficient \sep inconsistency
\sep AHP
\end{keyword}
\maketitle

\section{Introduction\label{sec:Introduction}}

The use of pairwise comparisons (PC) to form judgments has a long
history. Probably the first who formally defined and used pairwise
comparisons for decision making was \emph{Ramon Llull} (the XIII century)
\cite{Colomer2011rlfa}. He proposed a voting system based on binary
comparisons. The subject of comparisons (alternatives) were people
- candidates for office. Voters evaluated the candidates in pairs,
deciding which one was better. In the XVIII century, \emph{Llull's}
method was rediscovered by \emph{Condorcet} \cite{Condorcet1785eota},
then once again reinvented in the middle of the XX century by \emph{Copeland}
\cite{Colomer2011rlfa,Copeland1951arsw}. At the beginning of the
XX century, \emph{Thurstone} used the pairwise comparisons method
(PC method) quantitatively \cite{Thurstone1927tmop}. In this approach,
the result returned does not only contain information about who or
what is better, but also indicates how strong the preferences are.
Later, both approaches, ordinal (qualitative), as proposed by \emph{Llull,}
and cardinal (quantitative), as used by \emph{Thurstone}, were developed
in parallel. Comparing alternatives in pairs plays an important role
in research into decision making systems \cite{Greco2005mcda,Greco2010dbrs,Kulakowski2015hreg},
ranking theory \cite{Saaty1977asmf,Janicki2012oapc}, social choice
theory \cite{Suzumura2010hosc}, voting systems \cite{Tideman1987ioca,Faliszewski2009lacv,Vargas2013vwio}
and others. 

In general, the PC method is a ranking technique that allows the assessment
of the importance (relevance, usefulness, competence level etc.) of
a number of alternatives. As it is much easier for people to assess
two alternatives at a time than handling all of them at once, the
PC method assumes that, first, all the alternatives are compared in
pairs, then, by using an appropriate algorithm, the overall ranking
is synthesized. The choice of the algorithm is not easy and is still
the subject of research and vigorous debate \cite{Saaty1998rbev,Wang2007peit,Kulakowski2015otpo}.
Of course, it also depends on the nature of the comparisons. The cardinal
methods use different algorithms \cite{Ishizaka2006htdp,Fedrizzi2010otpv}
than the ordinal ones \cite{Janicki2012oapc,Colomer2011rlfa,Janicki1996awoa,Tideman1987ioca}.
Despite the many differences between ordinal and cardinal pairwise
comparisons, both approaches have much in common. For example, both
approaches use the idea of inconsistency among individual comparisons.
The notion of inconsistency introduced by the pairwise comparisons
method is based on the natural expectation that every two comparisons
of any three different alternatives should determine the third possible
comparison among those alternatives. 

To better understand the phenomenon of inconsistency, let us assume
that we have to compare three alternatives $c_{1},c_{2}$ and $c_{3}$
with respect to the same criterion. If after the comparison of $c_{1}$
and $c_{2}$ it is clear to us that $c_{2}$ is more important than
$c_{2}$, and similarly, after comparing $c_{2}$ and $c_{3}$ it
is evident that $c_{3}$ is more important than $c_{1}$ then we may
expect that $c_{3}$ will also turn out to be more important than
$c_{1}.$ The situation in which $c_{1}$ is better than $c_{3}$
would raise our surprise and concern. That is because it seems natural
to assume that the preferential relationship should be transitive.
If it is not, we have to deal with inconsistency. As pairwise comparisons
are performed by experts, who, like all human beings, sometimes make
mistakes, the phenomenon of inconsistency is something natural. The
ranking synthesis algorithm must take it into account. On the other
hand, if a large number of such ``mistakes'' can be found in the
set of paired comparisons, one can have reasonable doubts as to the
credibility of the ranking obtained from such lower quality data. 

Both ordinal and cardinal PC methods developed their own solutions
for determining the degree of inconsistency. Research into the cardinal
PC method resulted in a number of works on inconsistency indices.
Probably the most popular inconsistency index was defined by \emph{Saaty}
in his seminal work on \emph{the Analytic Hierarchy Process} \emph{(AHP)}
\cite{Saaty1977asmf}. His work prompted others to continue the research
\cite{Koczkodaj1993ando,Pelaez2003anwo,Aguaron2003tgci,Stein2007thci,Bozoki2011albi,Brunelli2013iifp}.
The ordinal PC methods also have their own ways of assessing the level
of inconsistency. In their seminal work \cite{Kendall1940otmo} \emph{Kendall}
and \emph{Babington Smith} introduced the \emph{inconsistency index}
(called by the authors the \emph{consistency coefficient}). Their
index allows the inconsistency degree of a set composed of binary
pairwise comparisons to be determined. The results obtained by the
authors were the inspiration for many other researchers in different
fields of science \cite{Kadane1966seci,Maas1995osii,Parizet2002pclt,Bozoki2013aopc,Brunelli2016otce,Siraj2015coij}.

Although the ordinal pairwise comparisons method is a really powerful
and handy tool facilitating the right decision, in practice we very
often face the problem that the two options seem to be equally important.
In such a situation, we can try to get around the problem by a brute
force method of breaking ties. For example, we can do this by \emph{``instructing
the judge to toss a mental coin when he cannot otherwise reach a decision;
or, allowing him the comfort of reserving judgment, we can let a physical
coin decide for him''} \cite[p. 94 - 95]{David1969tmop}. It is clear,
however, that instead of relying on more or less arbitrary methods
of breaking ties, it is better to accept their existence and incorporate
them into the model. Indeed, ties have been inextricably linked with
the ranking theory for a long time \cite{Colomer2011rlfa,Kendall1945ttot,David1969tmop}.
The ordinal pairwise comparisons method with ties has its own techniques
of synthesizing ranking \cite{Glenn1960tipc,Davidson1970oetb,Tideman1987ioca}.
In this perspective, research into the inconsistency of ordinal pairwise
comparisons with ties is quite poor. In particular, the \emph{consistency
coefficient} as defined by \cite{Kendall1940otmo} is not suitable
for determining the inconsistency of PC with ties. The problem was
recognized by \emph{Jensen} and \emph{Hicks} \cite{Jensen1993odaa},
and later by \emph{Iida} \cite{Iida2009octa}. These authors also
made attempts to patch up this gap in the ranking theory, however,
the fundamental question as to what extent the set of PC with ties
can be inconsistent still remains unanswered. 

The purpose of the present article is to answer this question, and
thus to define the inconsistency index for the ordinal PC with ties
in the same manner as \emph{Kendall} and \emph{Babington Smith} did
\cite{Kendall1940otmo} for binary PC. The definition of the inconsistency
index is accompanied by a thorough study of the most inconsistent
sets of pairwise comparisons with and without ties. 

The article is composed of eight sections including the introduction
and four appendices. The PC with ties is formally introduced in the
next section (Sec. \ref{sec:Model-of-inconsistency}). For the purpose
of modeling PC with ties, a generalized tournament graph has also
been defined there. The most inconsistent set of binary PC is studied
in (Sec. \ref{sec:The-most-inconsistent}). It is also proven that
the number of inconsistent triads in such a graph is determined by
Kendall Babington Smith's\emph{ consistency coefficient. }The next
section (Sec. \ref{sec:Properties-of-the}) describes how the most
inconsistent set of PC with ties may look. Thus, it contains several
theorems describing the quantitative relationship between the elements
of the generalized tournament graph. Finally, in (Sec. \ref{sec:The-most-inconsistent-1})
the most inconsistent set of PC with ties is proposed. The generalized
inconsistency index for ordinal PC is also defined (Sec. \ref{sec:Inconsistency-indexes-in}).
The penultimate section (Sec. \ref{sec:Discussion-and-remarks}) contains
a discussion of the subject. In particular, the relationship between
the maximally inconsistent set of PC and the \emph{NP-complete} set
cover problem \cite{Karp1972racp} is shown. A brief summary is provided
in (Sec. \ref{sec:Summary}). 

\section{\label{sec:Model-of-inconsistency}Model of inconsistency}

Let us suppose we have a number of possible choices (alternatives,
concepts) $c_{1},\ldots,c_{n}$ where we are able to decide only whether
one is better (more preferred) than the other or whether both alternatives
are equally preferred. In the first case, we will write that $c_{i}\prec c_{j}$
to denote that $c_{j}$ is more preferred than $c_{i}$, whilst in
the second case, to express that two alternatives $c_{i}$ and $c_{j}$
are equally preferred we write $c_{i}\sim c_{j}$. The preference
relationship is total. Hence, for every two $c_{i}$ and $c_{j}$
it holds that either $c_{i}\prec c_{j}$, $c_{j}\prec c_{i}$ or $c_{i}\sim c_{j}$.
The relationship is reflexive and asymmetric. In particular, we will
assume that if $c_{i}\prec c_{j}$ then not $c_{j}\prec c_{i}$, and
$c_{i}\sim c_{i}$ for every $i,j=1,\ldots,n$. It is convenient to
represent the relationship of preferences in the form of an $n\times n$
matrix. 
\begin{definition}
\label{def-the-ordinal-pc-matrix-with-ties}The $n\times n$ matrix
$M=[m_{ij}]$ where $m_{ij}\in\{-1,0,1\}$ is said to be the \emph{ordinal
PC} \emph{matrix} for $n$ alternatives $c_{1},\ldots,c_{n}$ if a
single comparison $m_{ij}$ takes the value $1$ when $c_{i}$ wins
with $c_{j}$ (i.e. $c_{i}\succ c_{j})$, $-1$ if, reversely, $c_{j}$
is better than $c_{i}$ (i.e. $c_{j}\succ c_{i})$ and $0$ in the
case of a tie between $c_{i}$ and $c_{j}$ ($c_{i}\sim c_{j}$).
The the diagonal values are $0$.
\end{definition}

The \emph{PC matrix} is skew-symmetric except the diagonal, so that
for every $i,j=1,\ldots,n$ it holds that $m_{ij}+m_{ji}=0$. An example
of the ordinal PC matrix for five alternatives is given below (\ref{eq:A_example}).
\begin{equation}
M=\left(\begin{array}{ccccc}
0 & 1 & 0 & 1 & 0\\
-1 & 0 & 1 & 1 & 1\\
0 & -1 & 0 & 1 & -1\\
-1 & -1 & -1 & 0 & 1\\
0 & -1 & 1 & -1 & 0
\end{array}\right)\label{eq:A_example}
\end{equation}

The \emph{PC matrix} can be easily represented in the form of a graph. 
\begin{definition}
\label{def:t-graph-definition}A tournament graph \emph{(t-graph)}
with $n$ vertices is a pair $T=(V,E_{d})$ where $V=c_{1},\ldots,c_{n}$
is a set of vertices and $E_{d}\subset V^{2}$ is a set of ordered
pairs called directed edges, so that for every two distinct vertices
$c_{i}$ and $c_{j}$ either $(c_{i},c_{j})\in E_{d}$  or $(c_{j},c_{i})\in E_{d}$
.
\end{definition}

Let us expand the definition of a tournament graph so that it can
also model the collection of pairwise comparisons with ties. 
\begin{definition}
\label{def:gt-graph-definition}The generalized tournament graph \emph{(gt-graph)}
with $n$ vertices is a triple $G=(V,E_{u},E_{d})$ where $V=c_{1},\ldots,c_{n}$
is a set of vertices, $E_{u}\subset2^{V}$ is a set of unordered pairs
called undirected edges, and $E_{d}\subset V^{2}$ is a set of ordered
pairs called directed edges, so that for every two distinct vertices
$c_{i}$ and $c_{j}$ either $(c_{i},c_{j})\in E_{d}$  or $(c_{j},c_{i})\in E_{d}$
or $\{c_{i},c_{j}\}\in E_{u}$ . 
\end{definition}

Wherever it increases the readability of the text the directed and
undirected edges $(c_{i},c_{j})$, $(c_{j},c_{i})$, $\{c_{i},c_{j}\}$
between $c_{i},c_{j}\in V$ are denoted as $c_{i}\rightarrow c_{j},c_{j}\rightarrow c_{i}$
and $c_{i}-c_{j}$ correspondingly. 

It is easy to see that every tournament graph can easily be extended
to a generalized tournament graph where $E_{u}=\emptyset$. Therefore,
it will be assumed that every \emph{t-graph} is also a \emph{gt-graph,}
but not reversely. 
\begin{definition}
A family of \emph{t-graphs} with $n$ vertices will be denoted as
$\mathscr{T}_{n}^{t}$, where $\mathscr{T}_{n}^{t}=\{(V,E_{d})\,\textit{ is a t-graph, where}\,\left|V\right|=n\}$,
and similarly, a family of \emph{gt-graphs} with $n$ vertices will
be denoted as $\mathscr{T}_{n}^{g}$, where $\mathscr{T}_{n}^{g}=\{(V,E_{u},E_{d})\,\textit{ is a gt-graph, where}\,\left|V\right|=n\}$
\end{definition}

It is obvious that for every $n>0$ it holds that $\mathscr{T}_{n}^{t}\subsetneq\mathscr{T}_{n}^{g}$. 
\begin{definition}
A family of \emph{gt-graphs} with $n$ vertices and $m$ directed
edges will be denoted as $\mathscr{T}_{n,m}^{g}=\{(V,E_{u},E_{d})\,\textit{ is a gt-graph, where}\,\left|V\right|=n\,\textit{and}\,\left|E_{d}\right|=m\}$
\end{definition}

\begin{definition}
A \emph{gt-graph} $G_{M}\in\mathscr{T}_{n}^{g}$ is said to correspond
to the\emph{ $n\times n$ ordinal PC matrix} $M=[m_{ij}]$ if every
directed edge $(c_{i},c_{j})\in E_{d}$ implies $m_{ji}=1$ and $m_{ij}=-1$,
and every undirected edge $\{c_{i},c_{j}\}\in E_{u}$ implies $m_{ij}=0$. 
\end{definition}

\begin{definition}
\label{def:triad-and-triad-covering}All three mutually distinct vertices
$t=\{c_{i},c_{k},c_{j}\}\subseteq V$ are said to be a triad. The
vertex $c$ is said to be contained by a triad $t=\{c_{i},c_{k},c_{j}\}$
if $c\in t$. A triad $t=\{c_{i},c_{k},c_{j}\}$ is said to be covered
by the edge $(p,q)\in E_{d}$ if $p,q\in t$.
\end{definition}

Sometimes it will be more convenient to write a triad $t=\{c_{i},c_{k},c_{j}\}$
as the set of edges, e.g. $\{c_{i}\rightarrow c_{k},c_{k}\,\text{---}\,c_{j},c_{i}\,\text{---}\,c_{j}\}$.
However, both notations are equivalent, the latter one allows the
reader to immediately identify the type of triad. 
\begin{definition}

\begin{figure}
\begin{centering}
\includegraphics{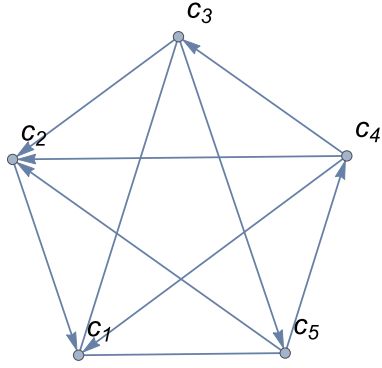}
\par\end{centering}
\caption{The \emph{gt-graph} corresponding to the matrix $M$, see (\ref{eq:A_example}).}
\label{fig:gt-graph-example}
\end{figure}
\end{definition}

In their work, \emph{Kendall and Babington Smith} dealt with the ordinal
pairwise comparisons without ties \cite{Kendall1940otmo}. Hence,
in fact, they do not consider the situation in which $c_{i}\sim c_{j}$.
For the same reason, their \emph{ordinal PC matrices} had no zeros
anywhere outside the diagonal\footnote{In fact, those matrices had no zeros as the authors inserted dashes
on the diagonal \cite{Kendall1940otmo}.}. For the purpose of defining the notion of inconsistency in preferences,
they adopt the transitivity of the preference relationship. According
to this assumption, every triad $c_{i},c_{k},c_{j}$ of three different
alternatives can be classified as consistent or inconsistent (contradictory).
Providing that there are no ties between alternatives, there are two
different kinds of triads (it is easy to verify that any other triad
can be simply boiled down to one of these two by simple index changing).
The first one $c_{i}\rightarrow c_{k},c_{k}\rightarrow c_{j}$ and
$c_{i}\rightarrow c_{j}$ hereinafter referred to as the consistent
triad\footnote{Index $3$ means that this kind of triad is formed by three directed
edges.} $\textit{CT}_{3}$, and $c_{i}\rightarrow c_{k},c_{k}\rightarrow c_{j}$
and $c_{j}\rightarrow c_{i}$ termed hereinafter as the inconsistent
triad $\textit{IT}_{3}$ (Fig. \ref{fig:CT3andIT3}). 

\begin{figure}
\begin{centering}
\subfloat[$\textit{CT}_{3}$ - a consistent triad covered by three directed
edges]{\begin{centering}
\includegraphics{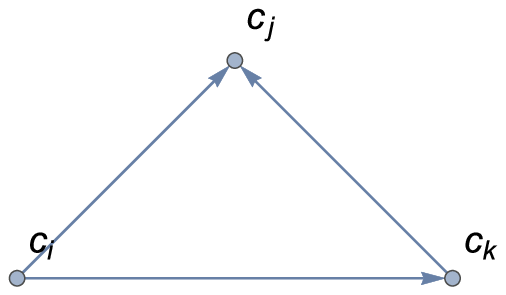}
\par\end{centering}

}~~~~~\subfloat[$\textit{IT}_{3}$ - an inconsistent (circular) triad covered by three
directed edges]{\begin{centering}
\includegraphics{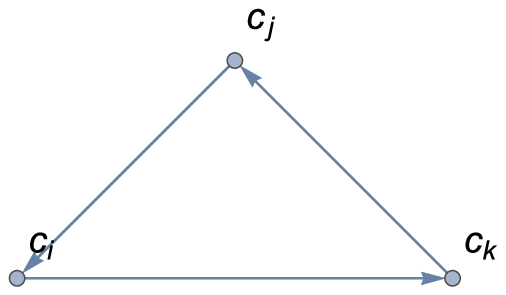}
\par\end{centering}

}
\par\end{centering}
\caption{Triads for paired comparisons without ties}
\label{fig:CT3andIT3}
\end{figure}

Of course, the more inconsistent the triads in the \emph{ordinal}
\emph{PC matrix}, the more inconsistent the set of preferences, hence
the less reliable the conclusions drawn from the set of paired comparisons.
To determine how inconsistent the given set of paired comparisons
is, \emph{Kendall and Babington} \emph{Smith} \cite{Kendall1940otmo}
provide the maximal number of inconsistent triads in the $n\times n$
\emph{PC matrix} without ties. Denoting the actual number of inconsistent
triads in $T_{M}$ by $\left|T_{M}\right|_{i}$, and the maximal possible
number of inconsistent triads in $n\times n$ PC matrix $M$ as $\mathcal{I}(n)$,
we have \footnote{As every $n\times n$ \emph{ordinal PC matrix} $M$ corresponds to
some tournament graph $T_{n}^{\ast}$ we also use the notation $\left|T_{n}^{\ast}\right|_{i}$
to express the number of inconsistent triads in it.}: 
\begin{equation}
\mathcal{I}(n)=\left\{ \begin{array}{ccc}
\frac{n^{3}-n}{24} & \text{when} & \text{n is odd}\\
\frac{n^{3}-4n}{24} & \text{when} & \text{n is even}
\end{array}\right.\label{eq:no_of_inconsistent_triads}
\end{equation}
Therefore, the inconsistency index for $M$ defined in \cite{Kendall1940otmo}
is:
\begin{equation}
\zeta(M)=1-\frac{\left|T_{M}\right|_{i}}{\mathcal{I}(n)}\label{eq:zeta-index}
\end{equation}

Unfortunately, including ties into consideration significantly complicates
the scene. Besides the two types of triads $\textit{CT}_{3}$ and
$\textit{IT}_{3}$ we need to take into consideration an additional
five: 
\begin{lyxlist}{00.00.0000}
\item [{$\textit{CT}_{0}$}] - consistent triad of three equally preferred
alternatives $c_{i},c_{k}$ and $c_{j}$ such that $c_{i}\sim c_{k},c_{k}\sim c_{j}$
and $c_{i}\sim c_{j}$. 
\item [{$\textit{IT}_{1}$}] - inconsistent triad composed of three alternatives
$c_{i},c_{k}$ and $c_{j}$ such that $c_{i}\sim c_{k},c_{k}\sim c_{j}$
and $c_{i}\prec c_{j}$. 
\item [{$\textit{IT}_{2}$}] - inconsistent triad composed of three alternatives
$c_{i},c_{k}$ and $c_{j}$ such that $c_{i}\sim c_{k},c_{k}\prec c_{j}$
and $c_{j}\prec c_{i}$. 
\item [{$\textit{CT}_{2a}$}] - consistent triad composed of three alternatives
$c_{i},c_{k}$ and $c_{j}$ such that $c_{i}\sim c_{k},c_{k}\prec c_{j}$
and $c_{i}\prec c_{j}$. 
\item [{$\textit{CT}_{2b}$}] - consistent triad composed of three alternatives
$c_{i},c_{k}$ and $c_{j}$ such that $c_{i}\sim c_{k},c_{j}\prec c_{k}$
and $c_{j}\prec c_{i}$. 
\end{lyxlist}
The above triads can be easily represented as tournament graphs with
ties (Fig. \ref{fig:triads-specific-for-ties}). With the increased
number of different types of triads in a graph, the maximum number
of inconsistent triads also increases. For example, according to (\ref{eq:no_of_inconsistent_triads})
the maximum number of inconsistent triads in $\mathcal{I}(4)$ without
ties is $2$. When ties are allowed, the maximal number of inconsistent
triads increases to $4$, which is the total number of triads in every
simple graph (i.e. with only one edge between one pair of vertices)
with four vertices. 

\begin{figure}[h]
\begin{centering}
\includegraphics{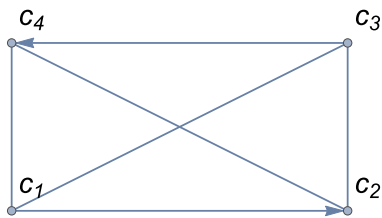}
\par\end{centering}
\caption{$\mathcal{I}(4)$ with four $\textit{IT}_{1}$ triads}
\label{fig:T4with6IT1}
\end{figure}

Let us analyze the graph in (Fig \ref{fig:T4with6IT1}). It is easy
to notice that it contains four $\textit{IT}_{1}$ triads which are:
\{$c_{1}\rightarrow c_{2},\,c_{2}\,\text{---}\,c_{3},\,c_{3}\,\text{---}\,c_{1}$\},
\{$c_{1}\rightarrow c_{2},\,c_{2}\,\text{---}\,c_{4},\,c_{4}\,\text{---}\,c_{1}$\},
\{$c_{1}\text{---}c_{3},\,c_{3}\,\rightarrow\,c_{4},\,c_{4}\,\text{---}\,c_{1}$\},
and \{$c_{2}\,\text{---}\,c_{3},\,c_{3}\,\text{\ensuremath{\rightarrow}}\,c_{4},\,c_{4}\,\text{---}\,c_{1}$\}.
Thus, it is clear that the formulae (\ref{eq:no_of_inconsistent_triads})
and (\ref{eq:zeta-index}) cannot be used to estimate inconsistency
in preferences when ties are allowed. The desire to extend those concepts
to paired comparisons with ties was the main motivation for writing
the work.

\begin{figure}[h]
\begin{centering}
\subfloat[$\textit{CT}_{0}$ - a consistent triad not covered by any directed
edge]{\begin{centering}
\includegraphics{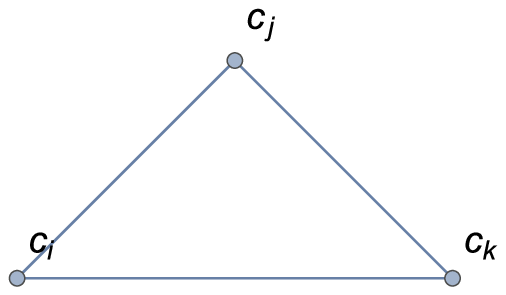}
\par\end{centering}

}~~~~\subfloat[$\textit{IT}_{1}$ - an inconsistent triad covered by one directed
edge]{\begin{centering}
\includegraphics{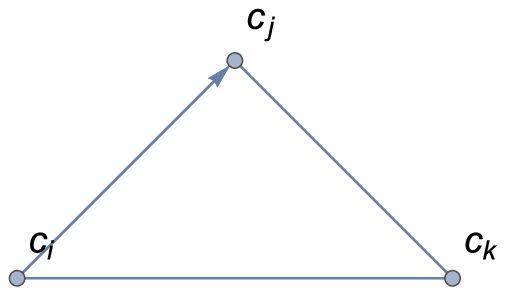}
\par\end{centering}
}
\par\end{centering}
\begin{centering}
\subfloat[$\textit{IT}_{2}$ - an inconsistent triad covered by two directed
edges]{\begin{centering}
\includegraphics{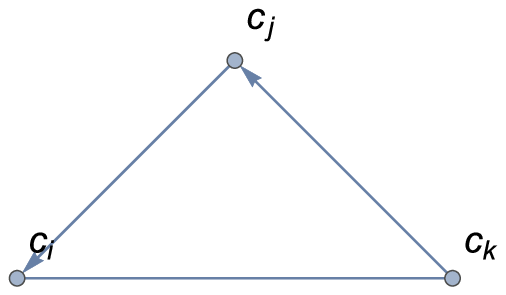}
\par\end{centering}
}~~~~\subfloat[$\textit{CT}_{\textit{2a}}$ - a consistent triad covered by two directed
edges. One alternative is more preferred than two others.]{\begin{centering}
\includegraphics{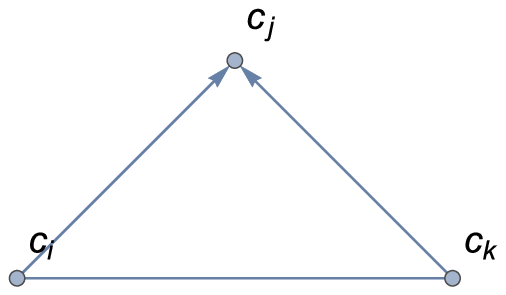}
\par\end{centering}
}
\par\end{centering}
\begin{centering}
\subfloat[$\textit{CT}_{\textit{2b}}$ - a consistent triad covered by two directed
edges. One alternative is less preferred than two others.]{\begin{centering}
\includegraphics{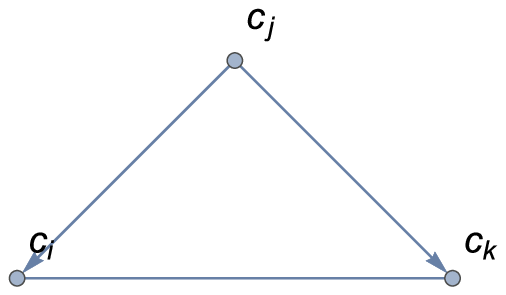}
\par\end{centering}
}
\par\end{centering}
\caption{Triads specific for the pairwise comparisons with ties}
\label{fig:triads-specific-for-ties}
\end{figure}

\section{\label{sec:The-most-inconsistent}The most inconsistent set of preferences
without ties}

To construct the most inconsistent set of pairwise preferences without
ties, let us introduce a few definitions relating to the degree of
vertices. Since every \emph{t-graph} is also a \emph{gt-graph} the
definitions are formulated for the \emph{gt-graph}. 
\begin{definition}
\label{def:The-undirected-degree}Let $G=(V,E_{u},E_{d})$ be a gt-graph
and $c,d\in V$. Then input degree, output degree, undirected degree
and degree of a vertex $c$ are defined as follows: $\text{deg}_{\textit{in}}(c)\overset{\textit{df}}{=}\left|\{d\in V\,:\,d\rightarrow c\in E_{d}\}\right|$,
$\text{deg}_{out}(c)\overset{\textit{df}}{=}\left|\{d\in V\,:\,c\rightarrow d\in E_{d}\}\right|$,
$\text{deg}_{un}(c)\overset{\textit{df}}{=}\left|\{d\in V\,:\,c\,\text{---}\,d\in E_{u}\}\right|$
and $\text{deg}(c)\overset{\textit{df}}{=}\text{deg}_{\textit{in}}(c)+\text{deg}_{out}(c)+\text{deg}_{un}(c)$. 
\end{definition}

\begin{theorem}
\label{the:enforced-triads}Let $G=(V,E_{u},E_{d})$ from $\mathscr{T}_{n}^{g}$.
Then every vertex $c\in V$, for which $\text{deg}_{\textit{in}}(c)=k$
is contained by at least $\binom{k}{2}$ consistent triads of the
type $\textit{CT}_{2a}$ or $\textit{CT}_{3}$. Those triads are said
to be introduced by $c$.
\end{theorem}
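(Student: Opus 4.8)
The plan is to produce the $\binom{k}{2}$ required triads explicitly, one for each unordered pair drawn from the in-neighbourhood of $c$, and then to verify by a short case analysis that each of them is of type $\textit{CT}_{2a}$ or $\textit{CT}_{3}$. First I would fix the $k$ vertices $d_{1},\ldots,d_{k}$ for which $d_{i}\rightarrow c\in E_{d}$; by Definition \ref{def:The-undirected-degree} there are exactly $\text{deg}_{\textit{in}}(c)=k$ of them. Recalling the correspondence between the \emph{gt-graph} and the \emph{PC matrix}, an edge $d_{i}\rightarrow c$ forces $m_{cd_{i}}=1$, that is $c\succ d_{i}$: the head of every arrow is the winner, so the in-neighbours of $c$ are precisely the vertices that $c$ beats.

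Next, for each of the $\binom{k}{2}$ unordered pairs $\{d_{a},d_{b}\}$ I would consider the triad $t=\{c,d_{a},d_{b}\}$. Since $c\notin\{d_{a},d_{b}\}$ and the two remaining vertices are determined by the pair, these $\binom{k}{2}$ triads are pairwise distinct and each contains $c$. Inside $t$ the two edges incident to $c$ are forced to be $c\succ d_{a}$ and $c\succ d_{b}$; only the edge joining $d_{a}$ and $d_{b}$ is free, and by Definition \ref{def:gt-graph-definition} it is either directed (in one of the two orientations) or undirected.

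The case analysis is then immediate. If $d_{a}$ and $d_{b}$ are tied, then $t$ has $d_{a}\sim d_{b}$ together with $d_{a}\prec c$ and $d_{b}\prec c$, which is exactly the pattern of $\textit{CT}_{2a}$, the single dominating vertex being $c$. If instead the edge is directed, say $d_{a}\succ d_{b}$, then $c\succ d_{a}\succ d_{b}$ with the closing edge $c\succ d_{b}$ present, i.e.\ a transitive triangle, so $t$ is $\textit{CT}_{3}$; the symmetric orientation $d_{b}\succ d_{a}$ yields the same conclusion with the roles of $d_{a}$ and $d_{b}$ interchanged. Hence every one of the $\binom{k}{2}$ distinct triads introduced by $c$ is consistent of type $\textit{CT}_{2a}$ or $\textit{CT}_{3}$.

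Finally I would note that the statement asks only for a lower bound, so it suffices that these $\binom{k}{2}$ triads exist; there may well be further $\textit{CT}_{2a}$ or $\textit{CT}_{3}$ triads containing $c$ (for instance those in which $c$ occupies the bottom or the middle position), but they play no role here. The only point requiring care is the orientation convention: it is essential to read $d\rightarrow c$ as ``$c$ beats $d$'' so that $c$ is the dominating vertex and the two beaten in-neighbours fall under the $\textit{CT}_{2a}$/$\textit{CT}_{3}$ templates rather than the $\textit{CT}_{2b}$ one. Once this is pinned down the argument is a routine enumeration, and I do not expect any genuine obstacle.
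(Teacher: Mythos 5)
Your proof is correct and takes essentially the same route as the paper: fix the $k$ in-neighbours of $c$, form one triad per unordered pair, and case-split on the edge joining the pair (tie versus directed). In fact your case analysis is the more careful of the two — the paper's own proof swaps the labels, assigning $\textit{CT}_{2a}$ to the all-directed case and $\textit{CT}_{3}$ to the tied case, whereas your assignment (tie $\Rightarrow\textit{CT}_{2a}$, directed $\Rightarrow\textit{CT}_{3}$) matches the definitions; the theorem's conclusion is unaffected since it counts the two types together.
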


\begin{proof}
Let $c_{1},\ldots,c_{k}\in V$ be the vertices such that the edges
$c_{i}\rightarrow c$ are in $E_{d}$. Since $T$ is a \emph{gt-graph}
with $n$ vertices, then for every $c_{i},c_{j}$ where $i,j=1,\ldots,k$
there must exist an edge $c_{i}\rightarrow c_{j}$, $c_{j}\rightarrow c_{i}$
in $E_{d}$ or $c_{i}\,\text{---}\,c_{j}$ in $E_{u}$. In the first
two cases, the vertices $c_{i},c,c_{j}$ make a consistent triad type
$\textit{CT}_{2a}$, whilst in the latter case the vertices $c_{i},c,c_{j}$
form a consistent triad type $\textit{CT}_{3}$. Since there are $k$
vertices adjacent via the incoming edge to $c$ there are at least
as many different consistent triads containing $c$ as two-element
combinations of $c_{1},\ldots,c_{k}$ i.e. $\binom{k}{2}$. See (Fig.
\ref{fig:consistent_triads_enforced}).
\end{proof}

\begin{figure}[h]
\begin{centering}
\includegraphics{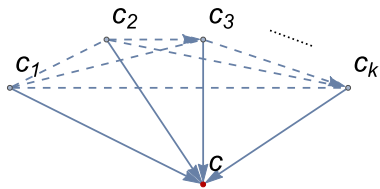}
\par\end{centering}
\caption{Consistent triads introduced by the vertex $c\in V$ with $\text{deg}_{\textit{in}}(c)=k$}
\label{fig:consistent_triads_enforced}
\end{figure}

In general, the given vertex $c$ can form more consistent triads
than those indicated in the above theorem. This is due to the fact
that there may be two or more edges in the form $c\rightarrow c_{k+1},\dots,c\rightarrow c_{k+r}$.
Thus, in $T$ there may also be some number of consistent triads $\textit{CT}_{2b}$
containing $c$. 

The Theorem \ref{the:enforced-triads} is also true for the ordinary
tournament graph (without ties). However, since the only consistent
triads in such a graph are type $\textit{CT}_{3}$ (i.e. there are
no triads of the type $\textit{CT}_{2a}$ or $\textit{CT}_{2b}$ containing
$c$), the only consistent triads containing $c$ are those introduced
by $c$. This leads to the following observation:
\begin{corollary}
Let $T=(V,E_{d})$ from $\mathscr{T}_{n}^{t}$. Then every vertex
$c\in V$, for which $\text{deg}_{in}(c)=k$ is contained by exactly
$\binom{k}{2}$ consistent triads of the type $\textit{CT}_{3}$.
\end{corollary}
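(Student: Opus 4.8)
The plan is to read this Corollary as the exact, tie-free specialization of Theorem~\ref{the:enforced-triads}: the lower bound $\binom{k}{2}$ furnished there becomes an equality, and every consistent triad introduced by $c$ is pinned down to be of type $\textit{CT}_{3}$. First I would invoke Theorem~\ref{the:enforced-triads} to name the relevant triads. For the vertex $c$ with $\text{deg}_{in}(c)=k$, let $c_{1},\ldots,c_{k}$ be its in-neighbours, i.e. the vertices with $c_{i}\rightarrow c\in E_{d}$. The theorem assigns to every two-element subset $\{c_{i},c_{j}\}$ of $\{c_{1},\ldots,c_{k}\}$ one consistent triad $\{c_{i},c,c_{j}\}$ in which $c$ is the common head of both incoming edges. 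This assignment is a bijection between the $\binom{k}{2}$ pairs of in-neighbours and the triads introduced by $c$: distinct pairs give distinct triads, and conversely an introduced triad is recovered from the two in-neighbours it contains. Hence the number of introduced triads is exactly $\binom{k}{2}$.

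The second step is to fix the type of these triads using the hypothesis $T\in\mathscr{T}_{n}^{t}$, i.e. $E_{u}=\emptyset$. With no undirected edges present, the edge joining any two in-neighbours of $c$ must be directed, so the introduced triad $\{c_{i},c,c_{j}\}$ consists of three directed edges, which is by definition a $\textit{CT}_{3}$. The only competing type here, $\textit{CT}_{2a}$, would require the tie $c_{i}\,\text{---}\,c_{j}$ and is therefore impossible in a t-graph. More generally, once $E_{u}=\emptyset$ the only admissible triad types are $\textit{CT}_{3}$ and $\textit{IT}_{3}$, and all five tie-bearing types are vacuous; this is precisely what removes the slack in the theorem's ``at least.''

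The delicate point, and the place I expect the real work to sit, is justifying the word \emph{exactly}: one must argue that $c$ does not acquire further consistent triads beyond the $\binom{k}{2}$ it introduces. The clean way to handle this is to attribute each consistent triad to a distinguished vertex. Every $\textit{CT}_{3}$ triad possesses a unique vertex of in-degree two within the triad, and a triad is introduced by $c$ exactly when $c$ is that vertex; thus the triads introduced by different vertices are pairwise disjoint and, taken together, exhaust all consistent triads, so the tally ``introduced by $c$'' is well defined and non-overlapping. As the remark following Theorem~\ref{the:enforced-triads} observes, the only mechanism by which a vertex could pick up additional consistent triads in the generalized setting is an outgoing edge meeting a tie, producing a $\textit{CT}_{2b}$; with $E_{u}=\emptyset$ this mechanism vanishes, leaving each vertex with precisely its introduced triads. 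Combining the three steps, the consistent triads belonging to $c$ are exactly the $\binom{k}{2}$ it introduces, each of type $\textit{CT}_{3}$, which is the assertion of the Corollary.
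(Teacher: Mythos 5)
Your first two steps are sound, and your partition observation (every $\textit{CT}_{3}$ has a unique vertex of in-degree two inside the triad, so the families of triads introduced by distinct vertices are pairwise disjoint and together exhaust all consistent triads of a \emph{t-graph}) is exactly the fact that the paper's later counting formula (\ref{eq:no-of-consistent-triads-in-tournament}) relies on. The step that fails is the final claim of your third step: that once $E_{u}=\emptyset$ a vertex lies in no consistent triads beyond those it introduces. That is false, and ties have nothing to do with it. Take the transitive tournament on $\{a,b,c\}$ with $a\rightarrow b$, $b\rightarrow c$, $a\rightarrow c$. Its single triad is a consistent $\textit{CT}_{3}$; it contains $a$ (whose in-degree is $0$, so $a$ introduces $\binom{0}{2}=0$ triads) and $b$ (in-degree $1$, also introducing $0$ triads), yet it is introduced only by $c$. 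In general, any vertex with two out-neighbours joined by a directed edge sits inside a $\textit{CT}_{3}$ that it does not introduce, so the ``only mechanism'' you cite for acquiring extra consistent triads (an outgoing edge meeting a tie, producing $\textit{CT}_{2b}$) is not the only one: being the source or the middle vertex of a $\textit{CT}_{3}$ is another, and it survives perfectly well in \emph{t-graphs}.

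In fairness, this defect is inherited from the paper itself. Read literally, with ``contained by'' as defined in Definition \ref{def:triad-and-triad-covering} (plain membership), the Corollary fails on the same three-vertex example, and the paper's own one-sentence justification --- ``the only consistent triads containing $c$ are those introduced by $c$'' --- commits the same conflation of \emph{contained by} with \emph{introduced by}. The statement that is both true and actually used downstream (in $\left|T\right|_{c}=\sum_{c\in V}\binom{\text{deg}_{in}(c)}{2}$ and hence in Theorem \ref{The-number-of}) is the ``introduced'' version: in a \emph{t-graph} every vertex of in-degree $k$ introduces exactly $\binom{k}{2}$ consistent triads, all of type $\textit{CT}_{3}$, and every consistent triad is introduced by exactly one vertex, so the sum over vertices counts each consistent triad exactly once. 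Your steps one and two, combined with your partition remark, prove precisely this --- indeed more carefully than the paper does. The correct repair is therefore not to argue harder for the containment claim but to drop it, and to restate the Corollary (or at least read it) in terms of triads introduced by $c$.
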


Thus, if we would like to construct a tournament graph without ties
which has the maximal number of inconsistent triads, we have to minimize
the number of consistent triads introduced by the vertices, i.e.
\begin{equation}
\left|T\right|_{c}\overset{\textit{df}}{=}\sum_{c\in V}\binom{\text{deg}_{\textit{in}}(c)}{2}\label{eq:no-of-consistent-triads-in-tournament}
\end{equation}
 Since there are no other consistent triads in the tournament graph
than those introduced by the vertices, the expression (\ref{eq:no-of-inconsistent-triads-in-tournament})
denotes, in fact, the number of inconsistent triads in some $T\in\mathscr{T}_{n}^{t}$.
Thus, 
\begin{equation}
\left|T\right|_{i}=\binom{n}{3}-\sum_{c\in V}\binom{\text{deg}_{\textit{in}}(c)}{2}\label{eq:no-of-inconsistent-triads-in-tournament}
\end{equation}

It is commonly known that the sum of degrees in any undirected graph
$G=(V,E)$ equals $2|E|$ \cite[p. 5]{Diestel2005gt}. For the same
reason in $T\in\mathscr{T}_{n}^{t}$ the sum of incoming edges into
vertices is\footnote{Every directed edge corresponds to one victory.}
$|E|=\binom{n}{2}$, i.e.: 
\begin{equation}
\sum_{c\in V}\text{deg}_{\textit{in}}(c)=\binom{n}{2}\label{eq:in-degs-in-tournament}
\end{equation}
Hence, we would like to minimize (\ref{eq:no-of-inconsistent-triads-in-tournament})
providing that the expression (\ref{eq:in-degs-in-tournament}) holds.
Intuitively $\left|T\right|_{i}$ is the largest (\ref{eq:no-of-inconsistent-triads-in-tournament})
i.e. $\left|T\right|_{c}$ is the smallest (\ref{eq:no-of-consistent-triads-in-tournament})
when the input degrees of vertices in a graph are the most evenly
distributed\footnote{As it will be explained latter the input degrees are the most evenly
distributed if for two different vertices $c,d$ holds that $\left|\deg_{\textit{in}}(c)-\deg_{\textit{in}}(d)\right|\leq1$. }. 
\begin{definition}
A \emph{gt-graph} with $n$ vertices is said to be maximal with respect
to the number of inconsistent triads, or briefly \emph{maximal} if
it has the highest possible number of inconsistent triads among the
\emph{gt-graphs} with the size $n$. The fact that the \emph{gt-graph}
is maximal will be denoted $G\in\overline{\mathscr{T}_{n}^{g}}$ or
$T\in\overline{\mathscr{T}_{n}^{t}}$, depending on whether ties are
or are not allowed. $\overline{\mathscr{T}_{n}^{t}}$ and $\overline{\mathscr{T}_{n}^{g}}$
denote families of \emph{gt-graphs} with the highest possible number
of inconsistent triads, i.e. 
\begin{equation}
\overline{\mathscr{T}_{n}^{t}}=\{T\in\mathscr{T}_{n}^{t}\,\textit{such that}\,\left|T\right|_{i}=\underset{T_{r}\in\mathscr{T}_{n}^{t}}{\text{max}}\left|T_{r}\right|_{i}\}\label{eq:familiy-gt-graphs-without-ties}
\end{equation}

\begin{equation}
\overline{\mathscr{T}_{n}^{g}}=\{G\in\mathscr{T}_{n}^{g}\,\textit{such that}\,\left|G\right|_{i}=\underset{G_{r}\in\mathscr{T}_{n}^{g}}{\text{max}}\left|G_{r}\right|_{i}\}\label{eq:familiy-gt-graphs-with-ties}
\end{equation}
\end{definition}

Before we prove the Theorem (\ref{The-number-of}) about the maximal
\emph{t-graph} let us notice that for $r\in\mathbb{N}_{+}$ it holds
that:
\begin{equation}
\binom{2r+1}{2}=r\cdot\left(2r+1\right)\label{eq:podzial-nieparzysty}
\end{equation}

and 
\begin{equation}
\binom{2r}{2}=r\cdot r+r(r-1)\label{eq:podzial-pazysty}
\end{equation}
The above expression (\ref{eq:podzial-nieparzysty}) means that by
adopting $n=2r+1$ as the number of vertices in a graph, we may assign
exactly $r$ incoming edges to every vertex $c$ in $V$ when $n$
is odd. Similarly (\ref{eq:podzial-pazysty}), providing that $n=2r$
is even, we can assign $r$ incoming edges to $r$ vertices and $r-1$
incoming edges to the next $r$ vertices. 
\begin{theorem}
\label{The-number-of}The number of inconsistent triads in the t-graph
$T=(V,E_{d})$ is maximal i.e. $T\in\overline{\mathscr{T}_{n}^{t}}$
if and only if 
\begin{enumerate}
\item for every $c$ in $V$ $\text{deg}_{\textit{in}}(c)=r$ when $n=2r+1$
\item there are $r$ vertices $c_{1},\ldots,c_{r}$ in $V$ such that $\text{deg}_{\textit{in}}(c_{i})=r$,
and $r$ vertices $c_{r+1},\ldots,c_{n}$ such that $\text{deg}_{\textit{in}}(c_{j})=r-1$,
where $n=2r$ and $1\leq i\leq r<j\leq n$.
\end{enumerate}
\end{theorem}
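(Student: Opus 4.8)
The plan is to recast the maximization of $\left|T\right|_{i}$ as a pure minimization problem over integer degree sequences and then dispatch it by convexity. Writing $d_{c}=\text{deg}_{\textit{in}}(c)$, equation (\ref{eq:no-of-inconsistent-triads-in-tournament}) gives $\left|T\right|_{i}=\binom{n}{3}-\sum_{c\in V}\binom{d_{c}}{2}$, so maximizing $\left|T\right|_{i}$ is exactly the same as minimizing $S(T)=\sum_{c\in V}\binom{d_{c}}{2}$, while (\ref{eq:in-degs-in-tournament}) imposes the single constraint $\sum_{c\in V}d_{c}=\binom{n}{2}$. I would first forget about graphs entirely and treat $(d_{c})_{c\in V}$ as an arbitrary $n$-tuple of nonnegative integers with this prescribed sum, minimizing $S$ over all such tuples; the question of which tuples are realizable by an actual \emph{t-graph} is postponed to the end.

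The core of the argument is a smoothing step exploiting strict convexity of $x\mapsto\binom{x}{2}$. If a sequence has two entries with $d_{i}\ge d_{j}+2$, then replacing the pair $(d_{i},d_{j})$ by $(d_{i}-1,d_{j}+1)$ preserves the total sum but, using the identity $\binom{k}{2}-\binom{k-1}{2}=k-1$, decreases $S$ by exactly $d_{i}-d_{j}-1\ge 1$. Consequently, at any minimizer every two entries differ by at most one, and since the sum is fixed this determines the multiset of degrees uniquely: it is $r$ copies of $r$ when $n=2r+1$, because $\binom{2r+1}{2}=r(2r+1)$ by (\ref{eq:podzial-nieparzysty}) forces the common value $r$, and it is $r$ copies of $r$ together with $r$ copies of $r-1$ when $n=2r$, as recorded by $\binom{2r}{2}=r\cdot r+r(r-1)$ in (\ref{eq:podzial-pazysty}). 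Read with its strict inequality, the same computation yields uniqueness: any sequence that is \emph{not} balanced in this sense produces a value of $S$ strictly larger than the minimal value $S^{\ast}$.

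It remains to transfer this abstract minimum back to \emph{t-graphs}, and this realizability step is where I expect the only genuine work to lie: I must exhibit a \emph{t-graph} whose in-degree sequence is the balanced one, so that the minimum $S^{\ast}$ taken over all integer sequences is actually attained inside $\mathscr{T}_{n}^{t}$ and the bound is tight. A single circulant-type construction handles both parities. Label the vertices $c_{0},\ldots,c_{n-1}$ and set $c_{i}\rightarrow c_{j}$ whenever $(j-i)\bmod n\in\{1,\ldots,r\}$ for odd $n=2r+1$; this makes every vertex have in-degree $r$. For even $n=2r$ put $c_{i}\rightarrow c_{j}$ whenever $(j-i)\bmod n\in\{1,\ldots,r-1\}$ and orient each remaining diameter pair $\{c_{i},c_{i+r}\}$ by $c_{i}\rightarrow c_{i+r}$ for $0\le i<r$; a direct count then gives in-degree $r-1$ for $c_{0},\ldots,c_{r-1}$ and in-degree $r$ for $c_{r},\ldots,c_{2r-1}$, which is exactly the required split.

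With realizability secured, both implications follow immediately. For the \emph{if} direction, any \emph{t-graph} carrying the balanced sequence has $S=S^{\ast}$, so by the convexity lower bound $\left|T'\right|_{i}\le\binom{n}{3}-S^{\ast}=\left|T\right|_{i}$ for every $T'\in\mathscr{T}_{n}^{t}$, and $T$ is maximal. For the \emph{only if} direction, a maximal $T$ must attain $S=S^{\ast}$, and by the uniqueness of the minimizer its in-degree sequence must be the balanced one, which is precisely conditions (1) and (2). The main obstacle to watch is that without the explicit balanced \emph{t-graph} of the previous paragraph the equality $\min_{\mathscr{T}_{n}^{t}}S=S^{\ast}$ is not justified, and the characterization could in principle fail; the convexity bound and its equality case are otherwise routine once the smoothing increment $d_{i}-d_{j}-1$ is written down.
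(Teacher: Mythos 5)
Your proof is correct, and it takes a genuinely different route from the paper's. Both arguments rest on the same convexity computation (your increment $\binom{k}{2}-\binom{k-1}{2}=k-1$ is exactly the paper's $f(p)-f(p-1)>0\iff 2p>z-1$), but you perform the smoothing on \emph{abstract integer sequences} with fixed sum and then close the gap with an explicit circulant realization, whereas the paper performs the smoothing \emph{inside} $\mathscr{T}_{n}^{t}$ by physically reversing an edge $c_{j}\rightarrow c_{i}$ between a vertex of too-high and a vertex of too-low in-degree. The paper's choice buys it freedom from any construction: a minimizer over t-graphs exists by finiteness, and the edge-flip argument forces it to be balanced, so existence of a balanced t-graph is a by-product rather than an input. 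The cost is a step the paper leaves unjustified: its flip presupposes that the edge between $c_{i}$ (in-degree $p>r$) and $c_{j}$ (in-degree $q<r$) happens to be oriented $c_{j}\rightarrow c_{i}$; if it points the other way, a single reversal moves both degrees in the wrong direction, and one needs a small patch (e.g., pick $w$ with $c_{j}\rightarrow w\rightarrow c_{i}$, which exists since $p\geq q+2$, and reverse both edges). Your relaxation sidesteps this orientation issue entirely, and in exchange you must exhibit a balanced tournament, which you do correctly for both parities (your odd-case construction is in fact the circulant scheme of Kendall and Babington Smith that the paper mentions after the theorem). A further merit of your version is that the ``only if'' direction is made airtight by the explicit uniqueness of the balanced multiset among minimizing sequences; the paper's proof establishes maximal $\Rightarrow$ balanced and leaves the converse (balanced $\Rightarrow$ maximal, via the fact that $S$ depends only on the degree multiset) implicit.
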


\begin{proofQED}
To prove the theorem, it is enough to show that (\ref{eq:no-of-consistent-triads-in-tournament})
is minimized by the distributions of the vertex degrees mentioned
in the thesis of the theorem. Let us suppose that $n=2r+1$ and (\ref{eq:no-of-consistent-triads-in-tournament})
is minimal but not all the vertices have input degrees equal $r$.
Thus, there must be at least one $c_{i}\in V$ such that $\text{deg}_{\textit{in}}(c_{i})\neq r$.
Let us suppose that $\text{deg}_{\textit{in}}(c_{i})=p>r$ (the second
case is symmetric). Formulae (\ref{eq:in-degs-in-tournament}) and
(\ref{eq:podzial-nieparzysty}) imply that there must also be at least
one $c_{j}\in V$ such that $\text{deg}_{\textit{in}}(c_{j})=q<r$
. Therefore we can decrease $p$ and increase $q$ by one without
changing the sum (\ref{eq:in-degs-in-tournament}) just by replacing
$c_{j}\rightarrow c_{i}$ to $c_{i}\rightarrow c_{j}$. Since $p+q=z$
and $z$ is constant, the sum of consistent triads introduced by $c_{i}$
and $c_{j}$ (Theorem \ref{the:enforced-triads}) is given as: 
\begin{equation}
\binom{p}{2}+\binom{q}{2}=\binom{p}{2}+\binom{z-p}{2}=p(p-z)+\frac{z(z-1)}{2}\label{eq:th-max-nu-triads-in-trn-1}
\end{equation}

Since $z(z-1)/2$ is constant let 
\begin{equation}
f(p)\overset{\textit{df}}{=}p(p-z)+\frac{z(z-1)}{2}\label{eq:th-max-nu-triads-in-trn-2}
\end{equation}

The value $f(p)$ decreases alongside a decreasing $p$ if 
\begin{equation}
f(p)-f(p-1)>0\label{eq:th-max-nu-triads-in-trn-3}
\end{equation}

which is true if and only if 
\begin{equation}
2p>(z-1)\label{eq:th-max-nu-triads-in-trn-4}
\end{equation}

Since $p>q$ and $p+q=z$ the last statement is true, which implies
that, by decreasing $\text{deg}_{\textit{in}}(c_{i})$ and increasing
$\text{deg}_{\textit{in}}(c_{j})$ by one, we can decrease the expression
(\ref{eq:no-of-consistent-triads-in-tournament}). This fact is contrary
to the assumption that (\ref{eq:no-of-consistent-triads-in-tournament})
is minimal, but not all the vertices have input degrees equal $r$. 

The proof for $n=2r$ is analogous to the case when $n=2r+1$ except
the fact that as $c_{i}$ we should adopt such a vertex for which
$\text{deg}_{\textit{in}}(c_{i})\neq r$ and $\text{deg}_{\textit{in}}(c_{i})\neq r-1$.
Note that there must be one if we reject the second statement of the
thesis and, at the same time, we claim that (\ref{eq:no-of-consistent-triads-in-tournament})
is minimal.
\end{proofQED}

The proof of (Theorem \ref{The-number-of}) also suggests an algorithm
that converts any tournament graph into a graph with the maximal number
of inconsistent triads. In every step of such an algorithm, it is
enough to find a vertex $c_{i}$ whose input degree differs from $r$
(when $n$ is odd) or differs from $r$ and $r-1$ (when $n$ is even)
and decreases (or increases) its input degree in parallel with increases
(or decreases) in the input degree of $c_{j}$. If it is impossible
to find such a pair ($c_{i},c_{j}$) this means that the graph is
maximal. The algorithm satisfies the stop condition as with every
iteration the number of inconsistent triads in a graph gets higher
whilst the total number of triads in a graph is bounded by $\binom{n}{3}.$

\emph{Kendall} and \emph{Babington Smith} \cite{Kendall1940otmo}
suggest a way of constructing the most inconsistent graph that brings
to mind \emph{circulant graphs} \cite{Pemmaraju2003cdmc}. Namely,
first add to a graph the cycle $c_{1}\rightarrow c_{2}\rightarrow c_{3}\rightarrow\ldots\rightarrow c_{n}\rightarrow c_{1}$
then the cycle $c_{1}\rightarrow c_{3}\rightarrow c_{5}\rightarrow\ldots\rightarrow c_{n}\rightarrow c_{2}\rightarrow\ldots$
if $n$ is even or two cycles $c_{1}\rightarrow c_{3}\rightarrow\ldots\rightarrow c_{n-1}\rightarrow c_{1}$
and $c_{2}\rightarrow c_{4}\rightarrow\ldots\rightarrow c_{n}\rightarrow c_{2}$
if $n$ is odd, and so on. Adding cycles with more and more skips
needs to be continued until the insertion of all $\binom{n}{2}$ edges.
An example of the maximally inconsistent graphs $T_{X}\in\mathscr{T}_{6}^{t}$
and $T_{Y}\in\mathscr{T}_{7}^{t}$ can be found in (Fig. \ref{fig:most-inconsistent-tournaments}).
Those graphs correspond to the matrices $X$ and $Y$ (\ref{eq:the_most_inconsistent_matrix_tourn}).
\begin{equation}
X=\left(\begin{array}{cccccc}
0 & 1 & 1 & 1 & -1 & -1\\
-1 & 0 & 1 & 1 & 1 & -1\\
-1 & -1 & 0 & 1 & 1 & 1\\
-1 & -1 & -1 & 0 & 1 & 1\\
1 & -1 & -1 & -1 & 0 & 1\\
1 & 1 & -1 & -1 & -1 & 0
\end{array}\right)\,\,\,\,\,Y=\left(\begin{array}{ccccccc}
0 & 1 & 1 & 1 & -1 & -1 & -1\\
-1 & 0 & 1 & 1 & 1 & -1 & -1\\
-1 & -1 & 0 & 1 & 1 & 1 & -1\\
-1 & -1 & -1 & 0 & 1 & 1 & 1\\
1 & -1 & -1 & -1 & 0 & 1 & 1\\
1 & 1 & -1 & -1 & -1 & 0 & 1\\
1 & 1 & 1 & -1 & -1 & -1 & 0
\end{array}\right)\label{eq:the_most_inconsistent_matrix_tourn}
\end{equation}
The Theorem \ref{The-number-of} clearly indicates the form of the
most inconsistent tournament graph, but it does not specify the number
of inconsistent triads in such a graph. This number, however, can
be easily computed using the formula (\ref{eq:no_of_inconsistent_triads}).
To see that the results obtained so far are consistent with (\ref{eq:no_of_inconsistent_triads})
as defined in \cite{Kendall1940otmo} let us prove the following theorem.
\begin{theorem}
\label{theorem:maximal-t-graph-no-inc-triad}For every t-graph $T=(V,E_{d})$
where $T\in\overline{\mathscr{T}_{n}^{t}}$, $n\geq3$ which has the
form defined by the Theorem \ref{The-number-of} it holds that 
\begin{equation}
\left|T\right|_{i}=\mathcal{I}(n)\label{eq:equiv-theorem:1}
\end{equation}
\end{theorem}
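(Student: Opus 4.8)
The plan is to reduce the statement to a direct algebraic verification, since Theorem \ref{The-number-of} already fixes the in-degree sequence of every maximal $t$-graph and formula (\ref{eq:no-of-inconsistent-triads-in-tournament}) already expresses $\left|T\right|_{i}$ purely in terms of that sequence. First I would recall that in a $t$-graph (no ties) the only consistent triads are the $\textit{CT}_{3}$ triads introduced by the vertices, so that
\begin{equation*}
\left|T\right|_{i}=\binom{n}{3}-\sum_{c\in V}\binom{\text{deg}_{\textit{in}}(c)}{2}.
\end{equation*}
Everything then hinges on evaluating the sum $\sum_{c\in V}\binom{\text{deg}_{\textit{in}}(c)}{2}$ for the two degree distributions prescribed by the theorem and comparing the outcome with the two branches of (\ref{eq:no_of_inconsistent_triads}).

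For the odd case $n=2r+1$, every vertex has in-degree exactly $r$, so the sum collapses to $(2r+1)\binom{r}{2}$. I would substitute $\binom{2r+1}{3}$ for $\binom{n}{3}$, factor out the common $(2r+1)$, and simplify the bracketed difference to $\tfrac{r(r+1)}{6}$, obtaining $\left|T\right|_{i}=\tfrac{r(r+1)(2r+1)}{6}$. To close this case it remains only to check that $\mathcal{I}(2r+1)=\tfrac{(2r+1)^{3}-(2r+1)}{24}$ rewrites, via $n^{3}-n=n(n-1)(n+1)$, to the same quantity.

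For the even case $n=2r$, the theorem gives $r$ vertices of in-degree $r$ and $r$ vertices of in-degree $r-1$, so the sum becomes $r\binom{r}{2}+r\binom{r-1}{2}$, which factors neatly as $r(r-1)^{2}$. Substituting $\binom{2r}{3}$ and simplifying yields $\left|T\right|_{i}=\tfrac{r(r-1)(r+1)}{3}$, and I would verify that this agrees with $\mathcal{I}(2r)=\tfrac{(2r)^{3}-4(2r)}{24}$ after factoring $n^{3}-4n=n(n-2)(n+2)$.

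The argument is in essence a verification rather than a discovery, so there is no genuine conceptual obstacle; the only thing to watch is the bookkeeping in the factorizations and the separate treatment of the parity of $n$. The one subtlety worth stating explicitly is that the reduction to formula (\ref{eq:no-of-inconsistent-triads-in-tournament}) depends crucially on the absence of ties: in a $t$-graph there are no $\textit{CT}_{2a}$ or $\textit{CT}_{2b}$ triads, so the in-degree sum counts every consistent triad exactly once with no over- or under-counting, whereas this would fail in a general \emph{gt-graph}. The hypothesis $n\ge 3$ simply guarantees that triads exist at all.
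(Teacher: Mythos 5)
Your proposal is correct and follows essentially the same route as the paper: substituting the in-degree distributions prescribed by Theorem \ref{The-number-of} into formula (\ref{eq:no-of-inconsistent-triads-in-tournament}), treating the odd and even cases separately, and verifying algebraically that the result equals $\mathcal{I}(n)$ in both branches of (\ref{eq:no_of_inconsistent_triads}). Your intermediate factorizations (e.g.\ $r(r-1)^{2}$ for the even-case sum and $\tfrac{r(r+1)(2r+1)}{6}$ for the odd case) agree with the paper's expressions, and your explicit remark that the reduction relies on the absence of $\textit{CT}_{2a}$/$\textit{CT}_{2b}$ triads in a \emph{t-graph} is a valid clarification of a point the paper makes only implicitly.
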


\begin{proofQED}
According to (\ref{eq:no-of-inconsistent-triads-in-tournament}) 

\begin{equation}
\left|T\right|_{i}=\binom{2r+1}{3}-\sum_{c\in V}\binom{\text{deg}_{\textit{in}}(c)}{2}\label{eq:equiv-theorem:2}
\end{equation}

Let $n=2r+1$ and $r\in\mathbb{N}_{+}$. Then due to (Theorem \ref{The-number-of}) 

\begin{equation}
\left|T\right|_{i}=\binom{2r+1}{3}-\underset{2r+1}{\left(\underbrace{\binom{r}{2}+\ldots+\binom{r}{2}}\right)}\label{eq:equiv-theorem:3}
\end{equation}

\begin{equation}
\left|T\right|_{i}=\frac{r(2r-1)(2r+1)}{3}-\frac{(r-1)r(2r+1)}{2}\label{eq:equiv-theorem:4}
\end{equation}

\begin{equation}
\left|T\right|_{i}=\frac{r\left(2r^{2}+3r+1\right)}{6}=\frac{(2r+1)^{3}-(2r+1)}{24}\label{eq:equiv-theorem:5}
\end{equation}

\begin{equation}
\left|T\right|_{i}=\frac{(2r+1)^{3}-(2r+1)}{24}=\frac{n^{3}-n}{24}=\mathcal{I}(n)\label{eq:equiv-theorem:6}
\end{equation}

Similarly, when $n=2r$ and $r\in\mathbb{N}_{+}$. Then due to (Th.
\ref{The-number-of}) 

\begin{equation}
\left|T\right|_{i}=\binom{2r}{3}-\underset{r}{\left(\underbrace{\binom{r}{2}+\ldots+\binom{r}{2}}\right)}-\underset{r}{\left(\underbrace{\binom{r-1}{2}+\ldots+\binom{r-1}{2}}\right)}\label{eq:equiv-theorem:7}
\end{equation}

\begin{equation}
\left|T\right|_{i}=\frac{r(2r-2)(2r-1)}{3}-\frac{(r-1)r^{2}}{2}-\frac{(r-2)(r-1)r}{2}\label{eq:equiv-theorem:8}
\end{equation}

\begin{equation}
\left|T\right|_{i}=\frac{r\left(r^{2}-1\right)}{3}=\frac{(2r)^{3}-4(2r)}{24}=\frac{n^{3}-4n}{24}=\mathcal{I}(n)\label{eq:equiv-theorem:9}
\end{equation}

which completes the proof of the theorem.
\end{proofQED}

\begin{figure}
\begin{centering}
\subfloat[$T_{X}\in\overline{\mathscr{T}_{6}^{t}}$]{\begin{centering}
\includegraphics{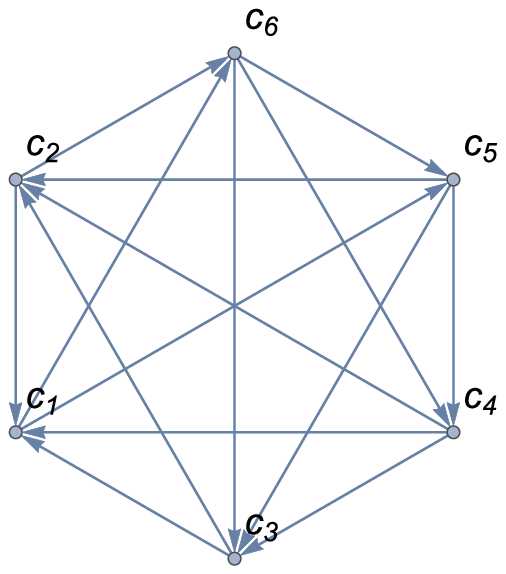}
\par\end{centering}
}~~~~~~\subfloat[$T_{Y}\in\overline{\mathscr{T}_{7}^{t}}$]{\begin{centering}
\includegraphics{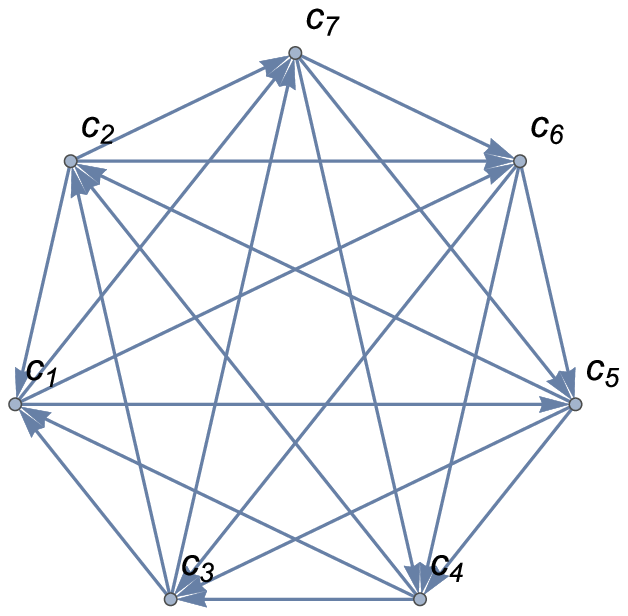}
\par\end{centering}
}
\par\end{centering}
\caption{An example of the most inconsistent tournament graphs with six and
seven vertices}
\label{fig:most-inconsistent-tournaments}
\end{figure}

The above theorem shows that the number of inconsistent triads in
the tournament graph in which input degrees of their vertices are
most evenly distributed is expressed by the formula provided by \emph{Kendall
}and\emph{ Babington Smith} \cite{Kendall1940otmo}. This result,
of course, is the natural consequence of the fact that such a graph
is maximal as regards the number of inconsistent triads, as proven
in (Theorem \ref{The-number-of}).

\section{\label{sec:Properties-of-the}Properties of the most inconsistent
set of preferences with ties}

The graph representation of the set of paired comparisons with ties
is the \emph{gt-graph}. As it may contain two different types of edges,
and hence, essentially more different kinds of triads (Fig. \ref{fig:triads-specific-for-ties}),
the problem of finding the maximum number of inconsistent triads in
such a graph is appropriately more difficult. The reasoning presented
in this section is composed of three parts. In the first part, the
properties of the \emph{gt-graph} are discussed. Next, the maximally
inconsistent \emph{gt-graph} is proposed, and then, we prove that
the proposed graph is indeed maximal with respect to the number of
inconsistent triads. 

The most straightforward example of the fully consistent \emph{gt-graph}
is a complete undirected graph of $n$ vertices (undirected $n$-clique).
It contains only undirected edges, thus all the triads contained in
it are type $\textit{CT}_{0}$. At first glance it seems that by successive
replacing of undirected edges into directed ones we can make the graph
more and more inconsistent. At the beginning, we will try to choose
isolated edges i.e. those which are not adjacent to any directed edge.
It is easy to observe that such edges alone cover $n-2$ different
triads. Hence, by replacing isolated undirected edges into directed
ones we increase the number of inconsistent triads by $n-2$. Unfortunately,
we can insert at most $\left\lfloor \frac{n}{2}\right\rfloor $ isolated
directed edges (every isolated edge needs two vertices out of $n$
only for itself). Then we have to replace not isolated undirected
edges into directed ones, and finally, we decide to make such replacements,
which results in increasing the number of inconsistent triads in a
graph, but also increases input degrees for some vertices. After several
experiments carried out according to the above scheme, one may observe
that it is not easy to choose the edge to replace. However, studying
the above greedy algorithm is not useless. The first thing to notice
is the fact that every \emph{gt-graph} containing more than a certain
number of edges should always have some number of consistent triads.
Another finding is the observation that when constructing a maximal
\emph{gt-graph} one should strive to put at least one directed edge
in each triad. Otherwise, the triad remains consistent, increasing
the chance that the resulting \emph{gt-graph} is not maximal. Both
intuitive observations lead to the conclusion that the construction
of the maximal \emph{gt-graph} is a matter of finding a balance between
too many directed edges resulting in the appearance of consistent
triads of the type $\textit{CT}_{2a}$ and $\textit{CT}_{2b}$ and
too few directed edges resulting in the existence of consistent triads
of the type $\textit{CT}_{0}$. Let us try to formulate this conclusion
in a more formal way. 
\begin{theorem}
\label{theorem:enforced-triads_theorem}Each gt-graph $G\in\mathscr{T}_{n,m}^{g}$
contains at least $\mathcal{C}(n,m)$ consistent triads of the type
$\textit{CT}_{2a}$ or $\textit{CT}_{3}$ where 
\begin{equation}
\mathcal{C}(n,m)=\frac{1}{2}\left\lfloor \frac{m}{n}\right\rfloor \left(2m-n\left\lfloor \frac{m}{n}\right\rfloor -n\right)\label{eq:number-of-enforced-triads}
\end{equation}
\end{theorem}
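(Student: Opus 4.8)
The plan is to reduce the claim to a pure optimisation problem over in-degree sequences, in the same spirit as the tie-free analysis of Theorem~\ref{The-number-of}. First I would observe that every consistent triad of type $\textit{CT}_{2a}$ or $\textit{CT}_{3}$ has a unique \emph{apex}: the single vertex of the triad into which the other two vertices point along directed edges, i.e.\ the vertex of local in-degree two. In a $\textit{CT}_{3}$ triad the three directed edges form a transitive tournament, which has exactly one such vertex (the overall winner); in a $\textit{CT}_{2a}$ triad the lone alternative preferred to the two tied ones again receives two directed edges, while the third edge is undirected. Conversely, Theorem~\ref{the:enforced-triads} shows that for a fixed vertex $c$ each pair of its in-neighbours $c_{i},c_{j}$ produces exactly one such triad $\{c_{i},c,c_{j}\}$ --- of type $\textit{CT}_{3}$ when $c_{i},c_{j}$ are joined by a directed edge and of type $\textit{CT}_{2a}$ when they are joined by an undirected one. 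Since the apex is unique, the triads introduced by distinct vertices are distinct, and hence the number of $\textit{CT}_{2a}/\textit{CT}_{3}$ triads in $G$ is exactly $\sum_{c\in V}\binom{\text{deg}_{\textit{in}}(c)}{2}$.

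Next I would record the only constraint tying the in-degrees to $m$: each of the $m$ directed edges contributes one to the in-degree of its head, so $\sum_{c\in V}\text{deg}_{\textit{in}}(c)=m$. The statement then becomes a minimisation: over all non-negative integer tuples $(d_{1},\ldots,d_{n})$ with $\sum_{c}d_{c}=m$, bound $\sum_{c}\binom{d_{c}}{2}$ from below. This suffices, because the actual in-degree sequence of $G$ is one feasible tuple, so the number of triads counted above is at least the minimum of $\sum_{c}\binom{d_{c}}{2}$.

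For the minimisation I would run the smoothing argument of Theorem~\ref{The-number-of}, but purely on integers, so that no question of graph realisability arises. If a feasible tuple had two entries with $d_{i}\geq d_{j}+2$, the transfer $(d_{i},d_{j})\mapsto(d_{i}-1,d_{j}+1)$ keeps the sum fixed and strictly lowers the objective, since $\binom{d_{i}}{2}+\binom{d_{j}}{2}-\binom{d_{i}-1}{2}-\binom{d_{j}+1}{2}=d_{i}-1-d_{j}>0$. Therefore the minimum is attained exactly when the entries are as equal as possible: with $q=\lfloor m/n\rfloor$ and $s=m-qn$, the optimal tuple has $s$ entries equal to $q+1$ and $n-s$ entries equal to $q$, giving the value $s\binom{q+1}{2}+(n-s)\binom{q}{2}$.

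Finally I would confirm the closed form by direct algebra: factoring out $q/2$ gives $\tfrac{q}{2}\big[s(q+1)+(n-s)(q-1)\big]=\tfrac{q}{2}\big[2s+n(q-1)\big]$, and substituting $s=m-qn$ reduces the bracket to $2m-nq-n$, so the minimum equals $\tfrac{1}{2}q(2m-nq-n)=\mathcal{C}(n,m)$. The step I would treat most carefully is the counting in the first paragraph: one must check both that no triad is counted twice (uniqueness of the apex) and that a pair of in-neighbours never produces any triad outside the types $\textit{CT}_{2a},\textit{CT}_{3}$, so that the per-vertex count is exactly $\binom{\text{deg}_{\textit{in}}(c)}{2}$ and the vertex contributions do not overlap. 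Everything after that is convexity together with routine computation.
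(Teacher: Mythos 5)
Your proof is correct and follows essentially the same route as the paper's: bound the number of $\textit{CT}_{2a}/\textit{CT}_{3}$ triads below by $\sum_{c\in V}\binom{\text{deg}_{\textit{in}}(c)}{2}$, use the fact that the in-degrees sum to $m$, minimize that sum over integer sequences (attained at the near-equal distribution), and finish with the same algebra giving $\mathcal{C}(n,m)$. The only differences are ones of explicitness, in your favor: the paper cites Theorem \ref{The-number-of} for the even-distribution minimality and leaves the no-double-counting issue implicit, whereas you justify distinctness via apex uniqueness and rerun the smoothing argument directly for arbitrary $m$.
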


\begin{proofQED}
The theorem is a straightforward consequence of (Theorem \ref{the:enforced-triads}
and \ref{The-number-of}). The first of them estimates the number
of triads $\textit{CT}_{2a}$ or $\textit{CT}_{3}$ for a given vertex,
whilst the second one shows that the sum of triads $\textit{CT}_{2a}$
or $\textit{CT}_{3}$ introduced by the vertices is minimal when the
input degrees are evenly distributed. As we would like to determine
the lower bound for the number of consistent triads in $G$, we therefore
have to assume that the input degrees are evenly distributed. Since
there are $m$ directed edges in $G$ (it occurs that $m$ times one
alternative is better than the other), then the sum of input degrees
of vertices is $m$. Therefore, adopting an even distribution postulate,
every vertex has at least $\left\lfloor \frac{m}{n}\right\rfloor $
victories assigned (their input degree is at least $\left\lfloor \frac{m}{n}\right\rfloor $).
Of course, the input degree of some of them may be larger by one.
In other words, in the considered \emph{gt-graph} there are $p$ vertices
whose input degree is $\left\lfloor \frac{m}{n}\right\rfloor $ and
$n-p$ vertices whose input degree might be $\left\lfloor \frac{m}{n}\right\rfloor +1$.
According to (Theorem \ref{the:enforced-triads}) such a graph has
at least $\mathcal{C}(n,m)$ consistent triads, where 
\begin{equation}
\mathcal{C}(n,m)=p\binom{\left\lfloor \frac{m}{n}\right\rfloor }{2}+(n-p)\binom{\left\lfloor \frac{m}{n}\right\rfloor +1}{2}\label{eq:th4eq1}
\end{equation}

We know that the sum of input degrees of vertices is $m$, so 
\begin{equation}
p\left\lfloor \frac{m}{n}\right\rfloor +(n-p)\left(\left\lfloor \frac{m}{n}\right\rfloor +1\right)=m\label{eq:th4eq2}
\end{equation}

Hence, 

\begin{equation}
p=n\left(\left\lfloor \frac{m}{n}\right\rfloor +1\right)-m\label{eq:th4eq3}
\end{equation}

Therefore (\ref{eq:th4eq1}) can be written as 
\begin{equation}
\mathcal{C}(n,m)=\left(n\cdot\left(\left\lfloor \frac{m}{n}\right\rfloor +1\right)-m\right)\cdot\binom{\left\lfloor \frac{m}{n}\right\rfloor }{2}+\left(m-n\cdot\left\lfloor \frac{m}{n}\right\rfloor \right)\cdot\binom{\left\lfloor \frac{m}{n}\right\rfloor +1}{2}\label{eq:th4eq4}
\end{equation}

which, after appropriate transformations leads to (\ref{eq:number-of-enforced-triads}). 
\end{proofQED}

The immediate consequence of (Lemma \ref{theorem:enforced-triads_theorem})
is the following corollary: 
\begin{corollary}
\label{corollary:inconsistency-upper-bound}Each \emph{gt-graph} $G\in\mathscr{T}_{n,m}^{g}$
contains at most 
\begin{equation}
\binom{n}{3}-\mathcal{C}(n,m)\label{eq:upper-bound-inconsistent-triads}
\end{equation}

inconsistent triads.

\end{corollary}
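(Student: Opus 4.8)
The plan is to argue by complementarity, converting the lower bound on consistent triads supplied by Lemma \ref{theorem:enforced-triads_theorem} into an upper bound on inconsistent ones. First I would establish that every gt-graph $G\in\mathscr{T}_{n,m}^{g}$ has exactly $\binom{n}{3}$ triads in total. This follows immediately from Definition \ref{def:gt-graph-definition}: for any three mutually distinct vertices, each of the three pairs among them is joined by precisely one edge (a directed edge in one of the two orientations, or an undirected edge), so the three vertices determine one and only one triad in the sense of Definition \ref{def:triad-and-triad-covering}. Hence the triads of $G$ are in bijection with the $3$-element subsets of $V$, of which there are $\binom{n}{3}$.

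Next I would use the fact that every triad is classified as either consistent or inconsistent, and that these two categories are mutually exclusive and jointly exhaustive. Writing $\left|G\right|_i$ for the number of inconsistent triads and letting $c$ denote the number of consistent triads, this partition yields the identity
\[
\left|G\right|_i = \binom{n}{3} - c.
\]

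Finally I would invoke Lemma \ref{theorem:enforced-triads_theorem}, which guarantees that $G$ contains at least $\mathcal{C}(n,m)$ consistent triads of type $\textit{CT}_{2a}$ or $\textit{CT}_{3}$. Since these triads form a subset of all consistent triads, we have $c \ge \mathcal{C}(n,m)$. Substituting into the identity above gives
\[
\left|G\right|_i = \binom{n}{3} - c \le \binom{n}{3} - \mathcal{C}(n,m),
\]
which is precisely the bound (\ref{eq:upper-bound-inconsistent-triads}). There is essentially no obstacle in this argument; the only point requiring care is that the Lemma bounds only two of the several consistent triad types, so one must note explicitly that the full count $c$ of consistent triads dominates this partial count before the subtraction can be justified.
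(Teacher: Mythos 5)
Your proposal is correct and coincides with the paper's own reasoning: the corollary is stated there as an immediate consequence of Theorem \ref{theorem:enforced-triads_theorem}, obtained by exactly your complementarity argument (total of $\binom{n}{3}$ triads, each either consistent or inconsistent, and at least $\mathcal{C}(n,m)$ of them consistent). Your added care in noting that the lemma counts only $\textit{CT}_{2a}$ and $\textit{CT}_{3}$ triads, so the full consistent count dominates the partial one, is a sound and harmless refinement of the same argument.
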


For the purpose of further consideration, let us denote by $\mathcal{T}$
a set of all the triads in the \emph{gt-graph} and by $\mathcal{T}_{i}$
- a set of triads covered by $i=0,\ldots,3$ directed edges. For brevity,
we denote the sum $\mathcal{T}_{i}\cup\mathcal{T}_{j}$ as $\mathcal{T}_{i,j}$.
In particular, it holds that $\mathcal{T}=\mathcal{T}_{0}\cup\mathcal{T}_{1}\cup\mathcal{T}_{2,3}$.
This allows the formulation of a quite straightforward but useful
observation. 
\begin{corollary}
\label{corollary:triad_coverage}As every two sets out of $\mathcal{T}_{0},\ldots,\mathcal{T}_{3}$
are mutually disjoint, then for every gt-graph $G\in\mathscr{T}_{n}^{g}$
it is true that

\begin{equation}
\binom{n}{3}=\left|\mathcal{T}_{0}\right|+\left|\mathcal{T}_{1}\right|+\left|\mathcal{T}_{2,3}\right|\label{eq:triads-cover-corollary}
\end{equation}
\end{corollary}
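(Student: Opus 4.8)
The plan is to recognize this identity as nothing more than counting the triads of the gt-graph in two different ways: once as a whole, and once after partitioning them according to how many directed edges cover them. First I would establish that the total number of triads equals $\binom{n}{3}$. By Definition \ref{def:triad-and-triad-covering} a triad is simply a set of three mutually distinct vertices, so $\mathcal{T}$ is in bijection with the three-element subsets of $V$, of which there are $\binom{n}{3}$; hence $|\mathcal{T}| = \binom{n}{3}$.

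Next I would verify that $\mathcal{T}_0, \mathcal{T}_1, \mathcal{T}_2, \mathcal{T}_3$ partition $\mathcal{T}$. The crucial point is the defining property of a gt-graph (Definition \ref{def:gt-graph-definition}): each pair of distinct vertices is joined by \emph{exactly one} edge, either directed or undirected. A given triad spans precisely three pairs of vertices, so the number of directed edges covering it (in the sense of Definition \ref{def:triad-and-triad-covering}) is a well-defined integer in $\{0,1,2,3\}$. Thus each triad lies in exactly one of the four classes $\mathcal{T}_i$, which makes them pairwise disjoint with union $\mathcal{T}$.

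The conclusion then follows by additivity of cardinality over a partition: $\binom{n}{3} = |\mathcal{T}| = |\mathcal{T}_0| + |\mathcal{T}_1| + |\mathcal{T}_2| + |\mathcal{T}_3|$. Since $\mathcal{T}_2$ and $\mathcal{T}_3$ are themselves disjoint, their cardinalities sum to $|\mathcal{T}_{2,3}|$, and substituting this into the previous display yields exactly the claimed identity (\ref{eq:triads-cover-corollary}).

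I do not anticipate any genuine obstacle here. The only step requiring care is confirming that the four-way classification is both exhaustive and unambiguous, and this is immediate from the gt-graph axiom that every vertex pair carries exactly one edge. Everything else is routine bookkeeping with disjoint finite sets.
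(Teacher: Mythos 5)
Your proposal is correct and matches the paper's reasoning exactly: the paper treats this corollary as an immediate observation, justified by precisely the fact you spell out, namely that the classification of triads by the number of covering directed edges is exhaustive and unambiguous because each vertex pair in a gt-graph carries exactly one edge. Your write-up merely makes explicit the counting-by-partition argument that the paper leaves implicit, so there is nothing to add or correct.
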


Another important piece of information about the \emph{gt-graph} follows
from the number of undirected edges adjacent to particular vertices.
Such edges may form the triads $\textit{CT}_{0}$ but may also form
the triads $\textit{IT}_{1}$ (Fig. \ref{fig:vertex-where-degun-is-4}).
This observation allows the number of both triad types to be estimated.

\begin{figure}
\begin{centering}
\includegraphics{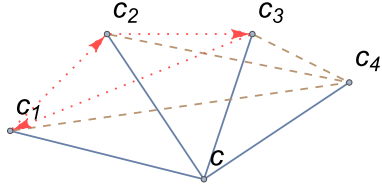}
\par\end{centering}
\caption{Vertex $c$ where $\text{deg}_{\textit{un}}(c)=4$ is contained by
$6$ different triads. Three of them are $\textit{CT}_{0}$ (dashed
edges), the other three are $\textit{IT}_{1}$ (dotted edges). }
\label{fig:vertex-where-degun-is-4}
\end{figure}

\begin{lemma}
\label{lemma:no1}For every \emph{gt-graph} $G\in\mathscr{T}_{n}^{g}$
where $G=(V,E_{u},E_{d})$ it holds that 
\begin{equation}
\sum_{c\in V}\binom{\text{deg}_{\textit{un}}(c)}{2}=3\left|\mathcal{T}_{0}\right|+\left|\mathcal{T}_{1}\right|\label{eq:t0andt1estimationtheorem}
\end{equation}
\end{lemma}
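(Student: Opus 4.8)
The plan is to prove the identity by a double-counting argument. I would count in two ways the number of pairs $(c,\{e_{1},e_{2}\})$ in which $c\in V$ is a vertex and $e_{1},e_{2}$ are two distinct undirected edges of $G$, both incident to $c$.

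Counting by vertices gives the left-hand side immediately. A fixed vertex $c$ is incident to exactly $\text{deg}_{\textit{un}}(c)$ undirected edges, so the number of unordered pairs of such edges is $\binom{\text{deg}_{\textit{un}}(c)}{2}$; summing over all $c\in V$ produces $\sum_{c\in V}\binom{\text{deg}_{\textit{un}}(c)}{2}$.

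Counting by triads gives the right-hand side. First I would note that each pair $(c,\{e_{1},e_{2}\})$ determines a unique triad: writing $e_{1}=\{c,a\}$ and $e_{2}=\{c,b\}$ with $a\neq b$, the three distinct vertices $\{a,b,c\}$ form a triad (in a \emph{gt-graph} every two distinct vertices are joined by an edge, so any three distinct vertices span a triad), and the pair marks the vertex $c$ together with the two undirected edges of that triad incident to it. Conversely, each triad, equipped with a choice of one vertex incident to two undirected edges of the triad, arises exactly once in this way. Hence the count equals the number of such (triad, marked-vertex) incidences, which I would evaluate by the type of the triad. A triad in $\mathcal{T}_{0}$ is of type $\textit{CT}_{0}$ and has all three edges undirected, so each of its three vertices is incident to two undirected edges of the triad, contributing $3$. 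A triad in $\mathcal{T}_{1}$ has exactly one directed edge --- hence is of type $\textit{IT}_{1}$ --- and its two undirected edges share a single common vertex, contributing $1$. A triad in $\mathcal{T}_{2,3}$ has at most one undirected edge and therefore no vertex incident to two undirected edges of the triad, contributing $0$. Summing over all triads yields $3\left|\mathcal{T}_{0}\right|+\left|\mathcal{T}_{1}\right|$.

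Equating the two counts gives (\ref{eq:t0andt1estimationtheorem}). The only step needing care is the per-triad bookkeeping: one must use the fact that any two edges of a triad meet in exactly one vertex, so a triad with precisely two undirected edges is counted once (at the shared vertex) rather than twice, and that a triad covered by exactly one directed edge can only be of type $\textit{IT}_{1}$. Beyond this case analysis the argument is elementary, so I do not anticipate a serious obstacle.
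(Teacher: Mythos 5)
Your proposal is correct and is essentially the same argument as the paper's: the paper also counts, for each vertex $c$, the $\binom{\text{deg}_{\textit{un}}(c)}{2}$ triads formed by pairs of undirected edges at $c$, and observes that triads in $\mathcal{T}_{0}$ are thereby counted three times and triads in $\mathcal{T}_{1}$ exactly once. Your version merely makes the double-counting of (vertex, edge-pair) incidences explicit, which the paper leaves implicit.
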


\begin{proofQED}
Let $c_{1}\,\text{---}\,c,\dots,c_{k}\,\text{---}\,c$ be the undirected
edges in $E_{u}$ adjacent to some $c\in V$. There are $\binom{k}{2}$
triads that contain $c$. The type of triad depends on the edge $(c_{i},c_{j})$.
If $(c_{i},c_{j})\in E_{u}$ then the triad belongs to $\mathcal{T}_{0}$
whilst if $(c_{i},c_{j})\in E_{d}$ then the triad is in $\mathcal{T}_{1}$.
While calculating the sum $\sum_{c\in V}\binom{\text{deg}_{\textit{un}}(c)}{2}$
every uncovered triad is counted three times as there are three vertices
adjacent to two undirected edges forming the triad. For the same reason,
the triads covered by one directed edge are taken into account only
once. 
\end{proofQED}

Similarly as before, we try to generalize the result (\ref{eq:t0andt1estimationtheorem})
to all the graphs that have $m$ directed edges. 
\begin{lemma}
\label{lemma:no2}For each gt-graph $G\in\mathscr{T}_{n,m}^{g}$ where
$G=(V,E_{u},E_{d})$ it holds that 
\begin{equation}
\mathcal{D}(n,m)\leq3\left|\mathcal{T}_{0}\right|+\left|\mathcal{T}_{1}\right|\label{eq:0_1_estimation}
\end{equation}

where
\begin{equation}
\mathcal{D}(n,m)=\frac{1}{2}\left(n-\left\lfloor \frac{2m}{n}\right\rfloor -2\right)\left(n^{2}+n\left(\left\lfloor \frac{2m}{n}\right\rfloor -1\right)-4m\right)\label{eq:estim_module}
\end{equation}
\end{lemma}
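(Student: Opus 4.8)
The plan is to convert the inequality into a pure degree-optimisation and then reuse the convexity argument from (Theorem \ref{The-number-of}). By (Lemma \ref{lemma:no1}) the right-hand side of (\ref{eq:0_1_estimation}) equals $\sum_{c\in V}\binom{\text{deg}_{\textit{un}}(c)}{2}$, so it suffices to show this sum is at least $\mathcal{D}(n,m)$. The structural fact I would use is that in any \emph{gt-graph} every two distinct vertices are joined by exactly one edge, so each vertex has total degree $n-1$ and hence $\text{deg}_{\textit{un}}(c)=n-1-\text{deg}_{\textit{in}}(c)-\text{deg}_{\textit{out}}(c)$. Summing over all vertices, and using that each of the $m$ directed edges adds one unit to a single in-degree and one to a single out-degree so that $\sum_{c\in V}(\text{deg}_{\textit{in}}(c)+\text{deg}_{\textit{out}}(c))=2m$, gives the single global constraint $\sum_{c\in V}\text{deg}_{\textit{un}}(c)=n(n-1)-2m$.

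Next I would bound $\sum_{c\in V}\binom{\text{deg}_{\textit{un}}(c)}{2}$ from below by minimising $\sum_{c\in V}\binom{x_{c}}{2}$ over all non-negative integer tuples $(x_{c})_{c\in V}$ with $\sum_{c\in V}x_{c}=n(n-1)-2m$. This is exactly the convex-minimisation principle proved in (Theorem \ref{The-number-of}) and reused in (Theorem \ref{theorem:enforced-triads_theorem}): since $x\mapsto\binom{x}{2}$ is convex, the sum is minimal precisely when the values differ pairwise by at most one, and passing to this relaxation is legitimate because we only seek a lower bound and so need not verify that the balanced tuple is realised by an actual \emph{gt-graph}. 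Writing $D=\left\lfloor \frac{2m}{n}\right\rfloor$ for the floor of the average directed degree $2m/n$, the balanced configuration assigns directed degree $D+1$ to $r=2m-nD$ vertices (so their undirected degree is $n-2-D$) and directed degree $D$ to the remaining $n-r$ vertices (undirected degree $n-1-D$); here $r=2m\bmod n\in[0,n)$.

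Substituting this balanced configuration yields
\begin{equation}
\sum_{c\in V}\binom{\text{deg}_{\textit{un}}(c)}{2}\;\geq\;(n-r)\binom{n-1-D}{2}+r\binom{n-2-D}{2},\nonumber
\end{equation}
and the rest is routine algebra: factoring $\tfrac{1}{2}(n-2-D)$ out of the right-hand side and substituting $r=2m-nD$ collapses it to $\tfrac{1}{2}(n-D-2)\bigl(n^{2}+n(D-1)-4m\bigr)=\mathcal{D}(n,m)$, i.e. (\ref{eq:estim_module}). The conceptual core, the convexity step, is inherited from (Theorem \ref{The-number-of}), so the only genuinely new work is the floor-function bookkeeping: checking $0\le r<n$, that the two undirected-degree values are non-negative (which follows from $0\le\text{deg}_{\textit{un}}(c)\le n-1$, equivalently $2m\le n(n-1)$), and that the balanced-tuple expression simplifies to the closed form (\ref{eq:estim_module}). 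In the degenerate case $2m=n(n-1)$ every edge is directed, whence $r=0$ and $\mathcal{D}(n,m)$ is non-positive, so the bound holds trivially. The main obstacle is therefore not conceptual but the careful matching of the floored expression to (\ref{eq:estim_module}).
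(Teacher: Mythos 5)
Your proposal is correct and follows essentially the same route as the paper: both reduce the claim via Lemma \ref{lemma:no1} to bounding $\sum_{c\in V}\binom{\text{deg}_{\textit{un}}(c)}{2}$ from below, use $\text{deg}_{\textit{un}}(c)=n-1-(\text{deg}_{\textit{in}}(c)+\text{deg}_{\textit{out}}(c))$ together with $\sum_{c}(\text{deg}_{\textit{in}}(c)+\text{deg}_{\textit{out}}(c))=2m$, invoke the even-distribution (convexity) principle from Theorem \ref{The-number-of}, and then carry out the identical balanced-configuration algebra (your $r$ is the paper's $n-p$). Your explicit justification of the relaxation to arbitrary integer tuples is a small point of added rigor over the paper's phrasing, but it is not a different method.
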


\begin{proofQED}
Similarly as in (Lemma \ref{theorem:enforced-triads_theorem}) the
left side of (\ref{eq:t0andt1estimationtheorem}) is minimal if undirected
degrees are evenly distributed among the vertices. As for every $c\in V$
it holds that $\text{deg}_{\textit{un}}(c)=\text{deg}(c)-\text{deg}_{in}(c)-\text{deg}_{out}(c)$
then $\text{deg}_{\textit{un}}(c)=n-1-\left(\text{deg}_{in}(c)+\text{deg}_{out}(c)\right)$.
Thus, undirected degrees of vertices are evenly distributed if and
only if the number of directed edges adjacent to the vertices are
evenly distributed. 

It is easy to see that in a \emph{gt-graph} having $m$ directed edges
the sum of input and output degrees is $2m$. Thus, for every graph
that minimizes the left side of (\ref{eq:t0andt1estimationtheorem})
it holds that: 

\begin{equation}
p\left\lfloor \frac{2m}{n}\right\rfloor +(n-p)\left(\left\lfloor \frac{2m}{n}\right\rfloor +1\right)=2m\label{eq:th6_p_estimation}
\end{equation}

The above equality means in particular that in such a graph there
are $p\leq n$ vertices $c_{1},\ldots,c_{p}$ for which $\text{deg}_{\textit{in}}(c_{i})+\text{deg}_{out}(c_{i})=\left\lfloor \frac{2m}{n}\right\rfloor $
and $1\leq i\leq p$, and $n-p$ vertices $c_{p+1},\ldots,c_{n}$
for which $\text{deg}_{\textit{in}}(c_{j})+\text{deg}_{out}(c_{j})=\left\lfloor \frac{2m}{n}\right\rfloor +1$
and $p+1\leq j\leq n$. This statement also implies that in every
graph that minimizes the left side of (\ref{eq:t0andt1estimationtheorem})
there are $p$ vertices $c_{1},\ldots,c_{p}$ for which $\text{deg}_{un}(c_{i})=n-1-\left\lfloor \frac{2m}{n}\right\rfloor $
and $1\leq i\leq p$, and also $n-p$ vertices $c_{p+1},\ldots,c_{n}$
for which $\text{deg}_{un}(c_{j})=n-2-\left\lfloor \frac{2m}{n}\right\rfloor $
and $p+1\leq j\leq n$. 

Thus, for every $G\in\mathscr{T}_{n,m}^{g}$ the lower bound of $3\left|\mathcal{T}_{0}\right|+\left|\mathcal{T}_{1}\right|$
is: 

\begin{equation}
\mathcal{D}(n,m)=p\binom{n-1-\left\lfloor \frac{2m}{n}\right\rfloor }{2}+(n-p)\binom{n-2-\left\lfloor \frac{2m}{n}\right\rfloor }{2}\label{eq:lema_3_eq4}
\end{equation}

Since from (\ref{eq:th6_p_estimation}) $p$ equals
\begin{equation}
p=n\left(\left\lfloor \frac{2m}{n}\right\rfloor +1\right)-2m\label{eq:lema_3_eq5}
\end{equation}

Thus, 

\begin{align}
\mathcal{D}(n,m)= & \left(n\left(\left\lfloor \frac{2m}{n}\right\rfloor +1\right)-2m\right)\binom{n-1-\left\lfloor \frac{2m}{n}\right\rfloor }{2}\nonumber \\
 & +\left(2m-n\left\lfloor \frac{2m}{n}\right\rfloor \right)\binom{n-2-\left\lfloor \frac{2m}{n}\right\rfloor }{2}\label{eq:lema_3_eq6}
\end{align}

The above expression simplifies to 

\begin{equation}
\mathcal{D}(n,m)=\frac{1}{2}\left(-\left\lfloor \frac{2m}{n}\right\rfloor +n-2\right)\left(n\left\lfloor \frac{2m}{n}\right\rfloor -4m+(n-1)n\right)\label{eq:lema_3_eq7}
\end{equation}

which completes the proof of the theorem.
\end{proofQED}

Through the analysis of the degree of vertices we can also estimate
the value $\left|\mathcal{T}_{2,3}\right|$. 
\begin{lemma}
\label{lemma:no3}For every \emph{gt-graph} $G\in\mathscr{T}_{n}^{g}$
where $G=(V,E_{u},E_{d})$ it holds that 
\begin{equation}
\frac{1}{3}\sum_{c\in V}\binom{\text{deg}_{\textit{in}}(c)+\text{deg}_{out}(c)}{2}\leq\left|\mathcal{T}_{2,3}\right|\label{eq:t2andt3estimationtheorem}
\end{equation}
\end{lemma}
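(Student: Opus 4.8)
The plan is to mirror the double-counting argument used in Lemma \ref{lemma:no1}, but with directed edges playing the role that the undirected edges played there. Write $d(c)\overset{\textit{df}}{=}\text{deg}_{\textit{in}}(c)+\text{deg}_{out}(c)$ for the number of directed edges incident to a vertex $c$. I would first establish the exact identity
\[
\sum_{c\in V}\binom{d(c)}{2}=\left|\mathcal{T}_{2}\right|+3\left|\mathcal{T}_{3}\right|,
\]
and then deduce the claimed inequality from it.

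First I would fix a vertex $c$ and list the vertices $c_{1},\ldots,c_{k}$ (with $k=d(c)$) joined to $c$ by a directed edge, irrespective of orientation. Every unordered pair $\{c_{i},c_{j}\}$ of these neighbours determines a triad $\{c,c_{i},c_{j}\}$ in which the two edges $c\,\text{---}\,c_{i}$ and $c\,\text{---}\,c_{j}$ are both directed, so this triad lies in $\mathcal{T}_{2}\cup\mathcal{T}_{3}=\mathcal{T}_{2,3}$. Hence $\binom{d(c)}{2}$ counts precisely the pairs of directed edges meeting at $c$, each of which contributes one triad of $\mathcal{T}_{2,3}$.

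The key step is to determine, for a fixed triad in $\mathcal{T}_{2,3}$, how many times it is counted across the whole sum $\sum_{c\in V}\binom{d(c)}{2}$. Since any two edges of a triangle meet in exactly one vertex, a triad in $\mathcal{T}_{2}$ (two directed edges and one undirected edge) has its two directed edges meeting at a single vertex; only at that vertex do they form a counted pair, so such a triad is counted exactly once. A triad in $\mathcal{T}_{3}$ (three directed edges) has a pair of directed edges at each of its three vertices, hence is counted three times. Summing over all triads gives the displayed identity.

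Finally, the inequality follows at once: since $\left|\mathcal{T}_{2}\right|\geq 0$ we have $\left|\mathcal{T}_{2}\right|+3\left|\mathcal{T}_{3}\right|\leq 3\left|\mathcal{T}_{2}\right|+3\left|\mathcal{T}_{3}\right|=3\left|\mathcal{T}_{2,3}\right|$, and dividing by $3$ yields $\frac{1}{3}\sum_{c\in V}\binom{d(c)}{2}\leq\left|\mathcal{T}_{2,3}\right|$. I expect the only subtlety — and the reason the statement is an inequality rather than the clean equality of Lemma \ref{lemma:no1} — is exactly this asymmetric undercounting of the $\mathcal{T}_{2}$ triads by a factor of three relative to the $\mathcal{T}_{3}$ triads; there is no real obstacle beyond pinning down the two multiplicities ($1$ and $3$) correctly.
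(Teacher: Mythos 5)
Your proposal is correct and follows essentially the same double-counting argument as the paper: pairs of directed edges at each vertex are triads of $\mathcal{T}_{2,3}$, with $\mathcal{T}_{2}$ triads counted once and $\mathcal{T}_{3}$ triads counted three times, yielding the bound after dividing by $3$. The only cosmetic difference is that you state the exact identity $\sum_{c\in V}\binom{d(c)}{2}=\left|\mathcal{T}_{2}\right|+3\left|\mathcal{T}_{3}\right|$ explicitly, whereas the paper phrases the same fact as a worst-case observation.
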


\begin{proofQED}
Let $c_{1}\rightarrow c,c\rightarrow c_{2},\dots,c_{k}\rightarrow c$
be the directed edges in $E_{d}$ adjacent to some $c\in V$. There
are $\binom{k}{2}$ triads that contain $c$ where $k=\text{deg}_{\textit{in}}(c)+\text{deg}_{\textit{out}}(c)$,
which are covered by two or three directed edges. While calculating
the sum $\sum_{c\in V}\binom{\text{deg}_{\textit{in}}(c)+\text{deg}_{\textit{out}}(c)}{2}$
triads covered by two directed edges are counted once, whilst all
the triads covered by three directed edges are counted three times.
In the worst case scenario, all the considered triads are covered
by three directed edges. Thus, $\frac{1}{3}\sum_{c\in V}\binom{\text{deg}_{\textit{in}}(c)+\text{deg}_{\textit{out}}(c)}{2}$
is the lower bound for $\left|\mathcal{T}_{2,3}\right|$. This observation
completes the proof.
\end{proofQED}

Similarly as before, let us extend the above Lemma to all \emph{gt-graphs}
that have $n$ vertices and $m$ directed edges. 
\begin{lemma}
\label{lemma:no4}For each gt-graph \textup{$G\in\mathscr{T}_{n,m}^{g}$
where $G=(V,E_{u},E_{d})$} it holds that 
\begin{equation}
\mathcal{E}(n,m)\leq\left|\mathcal{T}_{2,3}\right|\label{eq:t_2_3:eqno:1}
\end{equation}

where
\begin{equation}
\mathcal{E}(n,m)=\frac{1}{6}\left\lfloor \frac{2m}{n}\right\rfloor \left(4m-n\left(\left\lfloor \frac{2m}{n}\right\rfloor +1\right)\right)\label{eq:t_2_3:eqno:2}
\end{equation}
\end{lemma}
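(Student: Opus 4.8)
The plan is to mirror the proof of Lemma~\ref{lemma:no2}, because the present claim is its exact counterpart with the undirected degree $\text{deg}_{\textit{un}}(c)$ replaced by the combined directed degree $\text{deg}_{\textit{in}}(c)+\text{deg}_{\textit{out}}(c)$. Lemma~\ref{lemma:no3} already supplies the inequality
\[
\frac{1}{3}\sum_{c\in V}\binom{\text{deg}_{\textit{in}}(c)+\text{deg}_{\textit{out}}(c)}{2}\leq\left|\mathcal{T}_{2,3}\right|,
\]
so it is enough to bound the left-hand sum from below by a quantity valid for \emph{every} $G\in\mathscr{T}_{n,m}^{g}$. Chaining such a uniform bound with Lemma~\ref{lemma:no3} then gives the statement.

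First I would invoke the convexity principle already used in Theorem~\ref{The-number-of}: since $\binom{x}{2}$ is convex, the sum $\sum_{c\in V}\binom{\text{deg}_{\textit{in}}(c)+\text{deg}_{\textit{out}}(c)}{2}$ is minimized, among all graphs carrying a fixed number of directed edges, exactly when the values $\text{deg}_{\textit{in}}(c)+\text{deg}_{\textit{out}}(c)$ are distributed as evenly as possible across the vertices. Hence the sought lower bound is the value of this sum at the even distribution, multiplied by $\tfrac{1}{3}$.

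Next I would pin down the constraint. Each directed edge raises the input degree of its head and the output degree of its tail by one, so $\sum_{c\in V}\bigl(\text{deg}_{\textit{in}}(c)+\text{deg}_{\textit{out}}(c)\bigr)=2m$. Writing $q=\left\lfloor\frac{2m}{n}\right\rfloor$, the even distribution assigns the combined degree $q$ to $p$ vertices and $q+1$ to the remaining $n-p$, where $pq+(n-p)(q+1)=2m$, giving $p=n(q+1)-2m$ exactly as in (\ref{eq:lema_3_eq5}). Substituting into
\[
\frac{1}{3}\left[\,p\binom{q}{2}+(n-p)\binom{q+1}{2}\,\right]
\]
and collapsing the bracket to $\tfrac{q}{2}\bigl(4m-n(q+1)\bigr)$ produces precisely $\mathcal{E}(n,m)$ as defined in (\ref{eq:t_2_3:eqno:2}).

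I do not expect a genuine obstacle: the only substantive ingredient is the even-distribution principle, which is already established, and the rest is the same routine algebraic simplification carried out in Lemma~\ref{lemma:no2}. The one point worth a remark is that $\mathcal{E}(n,m)$ is a meaningful lower bound rather than a negative artefact. When $q\geq 1$ we have $2m\geq n$, so $4m-n(q+1)=(2m-nq)+(2m-n)\geq 0$ because both summands are non-negative; and when $q=0$ the leading factor $\left\lfloor\frac{2m}{n}\right\rfloor$ forces $\mathcal{E}(n,m)=0$, so the bound holds trivially. Thus in every regime $\mathcal{E}(n,m)\leq\left|\mathcal{T}_{2,3}\right|$.
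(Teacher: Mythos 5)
Your proposal is correct and follows essentially the same route as the paper: chain Lemma \ref{lemma:no3} with the even-distribution (convexity) principle for the combined directed degrees under the constraint $\sum_{c}(\deg_{\textit{in}}(c)+\deg_{\textit{out}}(c))=2m$, substitute $p=n\left(\left\lfloor \frac{2m}{n}\right\rfloor +1\right)-2m$, and simplify to $\mathcal{E}(n,m)$. Your closing non-negativity check of $\mathcal{E}(n,m)$ is a small addition not present in the paper, but the core argument is identical.
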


\begin{proofQED}
Similarly as in (Lemma \ref{lemma:no2}) the left side of (\ref{eq:t2andt3estimationtheorem})
is minimal if the sum of input and output degrees of the vertices
are evenly distributed. It is easy to see that in a \emph{gt-graph}
that has $m$ directed edges the sum of input and output degrees is
$2m$. Thus, for every graph that minimizes the left side of (\ref{eq:t2andt3estimationtheorem})
it holds that (\ref{eq:th6_p_estimation}). This implies that in the
\emph{gt-graph} which minimizes the left side of (\ref{eq:t2andt3estimationtheorem})
there should be $p$ vertices adjacent to $\left\lfloor \frac{2m}{n}\right\rfloor $
directed edges and $n-p$ vertices adjacent to $\left\lfloor \frac{2m}{n}\right\rfloor +1$
directed edges. Based on (\ref{eq:t2andt3estimationtheorem}) we conclude
that
\begin{equation}
\mathcal{E}(n,m)=\frac{1}{3}\left(p\binom{\left\lfloor \frac{2m}{n}\right\rfloor }{2}+\left(n-p\right)\binom{\left\lfloor \frac{2m}{n}\right\rfloor +1}{2}\right)\label{eq:t_2_3:eqno:3}
\end{equation}

Applying (\ref{eq:lema_3_eq5}) we obtain 

\begin{align}
\mathcal{E}(n,m)= & \frac{1}{3}\left\{ \left[n\left(\left\lfloor \frac{2m}{n}\right\rfloor +1\right)-2m\right]\binom{\left\lfloor \frac{2m}{n}\right\rfloor }{2}\right.\nonumber \\
 & +\left.\left[n-\left(n\left(\left\lfloor \frac{2m}{n}\right\rfloor +1\right)-2m\right)\right]\binom{\left\lfloor \frac{2m}{n}\right\rfloor +1}{2}\right\} \label{eq:t_2_3:eqno:4}
\end{align}

Hence, 

\begin{equation}
\mathcal{E}(n,m)=\frac{1}{3}\left\{ \left(n\left\lfloor \frac{2m}{n}\right\rfloor +n-2m\right)\binom{\left\lfloor \frac{2m}{n}\right\rfloor }{2}\right.\left.+\left(2m-n\left\lfloor \frac{2m}{n}\right\rfloor \right)\binom{\left\lfloor \frac{2m}{n}\right\rfloor +1}{2}\right\} \label{eq:t_2_3:eqno:5}
\end{equation}

The above equation simplifies to 

\begin{equation}
\mathcal{E}(n,m)=\frac{1}{6}\left\lfloor \frac{2m}{n}\right\rfloor \left(4m-n\left\lfloor \frac{2m}{n}\right\rfloor -n\right)\label{eq:t_2_3:eqno:6}
\end{equation}

Which completes the proof of the Lemma.
\end{proofQED}

The Corollary (\ref{corollary:triad_coverage}) and Lemmas (\ref{lemma:no1}
- \ref{lemma:no4}) allow us to estimate the minimal number of consistent
triads which are not covered by any directed edge. 
\begin{theorem}
\label{theorem:empty-triad-lower-bound}For each gt-graph \textup{$G\in\mathscr{T}_{n,m}^{g}$
where $G=(V,E_{u},E_{d})$} holds that
\begin{equation}
\mathcal{F}(n,m)\leq\left|\mathcal{T}_{0}\right|\label{eq:empty-triad-theorem-eq-1}
\end{equation}

where

\begin{equation}
\mathcal{F}(n,m)=\frac{1}{2}\left(\mathcal{D}(n,m)+\mathcal{E}(n,m)-\binom{n}{3}\right)\label{eq:empty-triad-theorem-eq-2}
\end{equation}

which is equivalent to 

\begin{equation}
\mathcal{F}(n,m)=\frac{1}{6}\left(-2n\left\lfloor \frac{2m}{n}\right\rfloor ^{2}+(8m-2n)\left\lfloor \frac{2m}{n}\right\rfloor +(n-2)((n-1)n-6m)\right)\label{eq:empty-triad-theorem-eq-3}
\end{equation}
\end{theorem}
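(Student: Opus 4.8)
The plan is to derive the bound purely by combining the three facts already in hand—Corollary~\ref{corollary:triad_coverage}, Lemma~\ref{lemma:no2}, and Lemma~\ref{lemma:no4}—through a short chain of substitutions, and then to confirm that the result coincides with the closed form~(\ref{eq:empty-triad-theorem-eq-3}) by a mechanical simplification. No new structural insight about the graph is needed, since all the degree-distribution reasoning has already been absorbed into the quantities $\mathcal{D}(n,m)$ and $\mathcal{E}(n,m)$.

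First I would use the partition identity~(\ref{eq:triads-cover-corollary}) to eliminate $\left|\mathcal{T}_1\right|$, writing $\left|\mathcal{T}_1\right| = \binom{n}{3} - \left|\mathcal{T}_0\right| - \left|\mathcal{T}_{2,3}\right|$. Substituting this into the lower bound of Lemma~\ref{lemma:no2}, namely $\mathcal{D}(n,m) \leq 3\left|\mathcal{T}_0\right| + \left|\mathcal{T}_1\right|$, yields $\mathcal{D}(n,m) \leq 2\left|\mathcal{T}_0\right| + \binom{n}{3} - \left|\mathcal{T}_{2,3}\right|$, which rearranges to $\mathcal{D}(n,m) + \left|\mathcal{T}_{2,3}\right| - \binom{n}{3} \leq 2\left|\mathcal{T}_0\right|$. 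Then I would invoke Lemma~\ref{lemma:no4}, which gives $\mathcal{E}(n,m) \leq \left|\mathcal{T}_{2,3}\right|$, to replace $\left|\mathcal{T}_{2,3}\right|$ on the left by its lower bound, obtaining $\mathcal{D}(n,m) + \mathcal{E}(n,m) - \binom{n}{3} \leq 2\left|\mathcal{T}_0\right|$. Dividing by two produces exactly $\mathcal{F}(n,m) \leq \left|\mathcal{T}_0\right|$ with $\mathcal{F}$ as in~(\ref{eq:empty-triad-theorem-eq-2}), which establishes the first form of the claim.

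To finish, I would verify the equivalence of the two displayed expressions for $\mathcal{F}(n,m)$ by expanding $\binom{n}{3}$ and inserting the closed forms~(\ref{eq:estim_module}) for $\mathcal{D}$ and~(\ref{eq:t_2_3:eqno:2}) for $\mathcal{E}$, then collecting the terms according to powers of $\left\lfloor \frac{2m}{n}\right\rfloor$. This is the only computation of any length, and it is entirely routine—the floor function can be treated as a single symbol throughout, since it never needs to be evaluated.

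The one point that requires care, rather than a genuine obstacle, is the direction of the inequalities: both Lemma~\ref{lemma:no2} and Lemma~\ref{lemma:no4} supply \emph{lower} bounds, so one must check that combining them does not flip the final inequality. After eliminating $\left|\mathcal{T}_1\right|$ and rearranging, $\left|\mathcal{T}_{2,3}\right|$ appears on the left-hand side with a positive coefficient; hence bounding it from below by $\mathcal{E}(n,m)$ can only decrease that side, and the relation $\bigl(\mathcal{D}+\mathcal{E}-\binom{n}{3}\bigr)\leq 2\left|\mathcal{T}_0\right|$ is preserved. Thus the two lower bounds compose consistently, and the argument is complete once the algebraic identity of the two $\mathcal{F}$ expressions has been checked.
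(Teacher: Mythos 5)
Your proof is correct and is essentially identical to the paper's own argument: both combine Corollary~\ref{corollary:triad_coverage}, Lemma~\ref{lemma:no2}, and Lemma~\ref{lemma:no4} in the same chain (eliminate $\left|\mathcal{T}_{1}\right|$, then bound $\left|\mathcal{T}_{2,3}\right|$ from below by $\mathcal{E}(n,m)$ on the small side of the inequality) and finish with the routine algebraic simplification to the closed form. Your explicit check that the two lower bounds compose without flipping the inequality matches the implicit step in the paper's derivation, so nothing is missing.
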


\begin{proofQED}
According to (Corollary \ref{corollary:triad_coverage}) 

\begin{equation}
\binom{n}{3}=\left|\mathcal{T}_{0}\right|+\left|\mathcal{T}_{1}\right|+\left|\mathcal{T}_{2,3}\right|\label{eq:empty-triad-theorem-eq-4}
\end{equation}

Due to (Lemma \ref{lemma:no2}) it holds that
\begin{equation}
\mathcal{D}(n,m)-3\left|\mathcal{T}_{0}\right|\leq\left|\mathcal{T}_{1}\right|\label{eq:empty-triad-theorem-eq-5}
\end{equation}

Therefore it is true that
\begin{equation}
\binom{n}{3}\geq\left|\mathcal{T}_{0}\right|+\left(\mathcal{D}(n,m)-3\left|\mathcal{T}_{0}\right|\right)+\left|\mathcal{T}_{2,3}\right|=\mathcal{D}(n,m)+\left|\mathcal{T}_{2,3}\right|-2\left|\mathcal{T}_{0}\right|\label{eq:empty-triad-theorem-eq-6}
\end{equation}

As we know (Lemma \ref{lemma:no4}) that $\mathcal{E}(n,m)\leq\left|\mathcal{T}_{2,3}\right|$
it is true that
\begin{equation}
\binom{n}{3}\geq\mathcal{D}(n,m)+\mathcal{E}(n,m)-2\left|\mathcal{T}_{0}\right|\label{eq:empty-triad-theorem-eq-7}
\end{equation}

Hence, 
\begin{equation}
\left|\mathcal{T}_{0}\right|\geq\frac{1}{2}\left(\mathcal{D}(n,m)+\mathcal{E}(n,m)-\binom{n}{3}\right)\label{eq:empty-triad-theorem-eq-8}
\end{equation}

which, after simplifying, leads to

\begin{equation}
\left|\mathcal{T}_{0}\right|\geq\frac{1}{6}\left((8m-2n)\left\lfloor \frac{2m}{n}\right\rfloor -2n\left\lfloor \frac{2m}{n}\right\rfloor ^{2}+(n-2)((n-1)n-6m)\right)\label{eq:empty-triad-theorem-eq-9}
\end{equation}

Which completes the proof of the theorem.
\end{proofQED}

One can easily check that for fixed $n$ the values of $\mathcal{F}(n,m)$
decrease to $0$ then become negative, whilst $\left|\mathcal{T}_{0}\right|$
is always a positive integer. Hence, the inequality (\ref{eq:empty-triad-theorem-eq-1})
can also be written as: 
\begin{equation}
\max\{0,\left\lceil \mathcal{F}(n,m)\right\rceil \}\leq\left|\mathcal{T}_{0}\right|\label{eq:ffun-estimation}
\end{equation}

Both theorems \ref{theorem:enforced-triads_theorem} and \ref{theorem:empty-triad-lower-bound}
provide estimations for the minimal number of consistent triads in
a \emph{gt-graph}. Theorem \ref{theorem:enforced-triads_theorem}
provides the lower bound $\mathcal{C}(n,m)$ for the number of triads
$\textit{CT}_{2a}$ and $\textit{CT}_{3}$, whilst Theorem \ref{theorem:empty-triad-lower-bound}
provides the lower bound for the number of consistent triads $\textit{CT}_{0}$.
Hence, the number of consistent triads in the \emph{gt-graph }$T\in\mathscr{T}_{n,m}^{g}$
cannot be lower than $\mathcal{G}(n,m)$ where 
\begin{equation}
\mathcal{G}(n,m)\stackrel{\textit{df}}{=}\mathcal{C}(n,m)+\max\{0,\left\lceil \mathcal{F}(n,m)\right\rceil \}\label{eq:gfun-def}
\end{equation}

Of course, its number could be even higher as we do not care about
triads $\textit{CT}_{2b}$. The immediate consequence of the above
expression is the observation that the number of inconsistent triads
in the \emph{gt-graph} cannot be higher than $\mathcal{H}(n,m)$ where:
\begin{equation}
\mathcal{H}(n,m)\stackrel{\textit{df}}{=}\binom{n}{3}-\mathcal{G}(n,m)\label{eq:hfun-def}
\end{equation}
In particular, the most inconsistent \emph{gt-graph} $G\in\overline{\mathscr{T}_{n}^{g}}$
with some fixed $n\geq3$ can have as many inconsistent triads as
the maximal value of the upper bounding function $\mathcal{H}(n,m)$,
i.e. 
\begin{equation}
\left|G\right|_{i}\leq\max_{0\leq m\leq\binom{n}{2}}\mathcal{H}(n,m)\label{eq:inconsistent-triads-limit-by-hfun}
\end{equation}

Reversely, a \emph{gt-graph} $G\in\mathscr{T}_{n}^{g}$, which fits
that maximum must be maximal i.e. wherever $\left|G\right|_{i}=\max_{0\leq m\leq\binom{n}{2}}\mathcal{H}(n,m)$
then $G\in\overline{\mathscr{T}_{n}^{g}}$. Through the experimental
analysis of the upper bounding function $\mathcal{H}(n,m)$ we can
see that for every fixed $n$ it has one distinct maximum (Fig. \ref{fig:upper-bounding-function-plots}). 

\begin{figure}[h]
\begin{centering}
\subfloat[$\mathcal{H}(n,m)$ for $n=10$ and $m=0,\ldots,\binom{10}{2}$]{\begin{centering}
\includegraphics[scale=0.8]{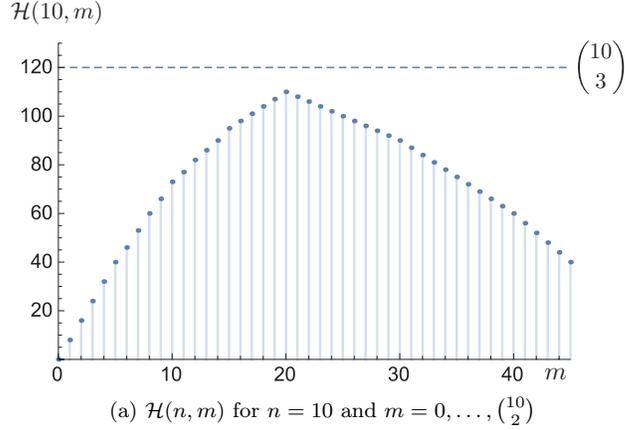}
\par\end{centering}
}
\par\end{centering}
\begin{centering}
\subfloat[$\mathcal{H}(n,m)$ for $n=3,\ldots,20$ and $m=0,\ldots,\binom{n}{2}$]{\begin{centering}
\includegraphics[scale=0.9]{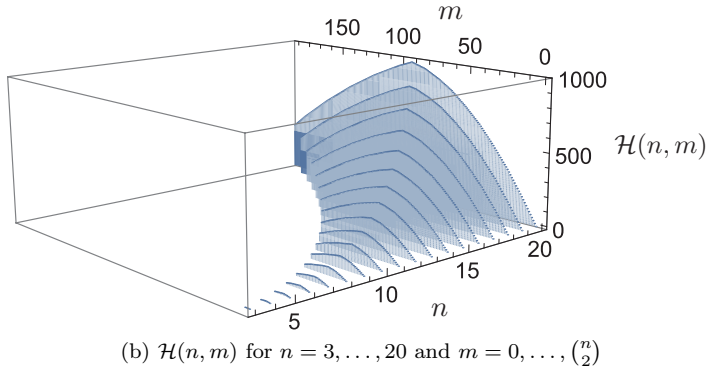}
\par\end{centering}
}
\par\end{centering}
\caption{The upper bounding function $\mathcal{H}(n,m)$}
\label{fig:upper-bounding-function-plots}
\end{figure}

In the next section we propose the graph which fits the maximum of
$\mathcal{H}(n,m)$ and formally prove indispensable theorems. 

\section{\label{sec:The-most-inconsistent-1}The most inconsistent set of
preferences with ties}

In order to find the maximal gt-graph, let us try to look at the function
$\mathcal{H}(n,m)$ and the two functions $\mathcal{C}(n,m)$ and
$\mathcal{F}(n,m)$ of which it is composed (Fig. \ref{fig:bounding_functions}).
$\mathcal{C}(n,m)$ determines the minimal number of consistent triads
covered by more than one directed edge. The more directed edges, the
greater the number of consistent triads in a graph. Hence, for some
small number of directed edges $\mathcal{C}$ equals $0$, then slowly
begins to grow. The function $\mathcal{F}(n,m)$ indicates the minimal
number of triads not covered by any directed edge. Those triads are
also consistent. With the increase in the number of directed edges,
their quantity decreases and eventually reaches $0$. Since for the
positive ordinates $\mathcal{F}$ decreases faster than $\mathcal{C}$
grows, then the function $\mathcal{H}$ reaches the maximum when $\mathcal{F}$
becomes $0$. This indicates that in the optimal \emph{gt-graph} all
the triads should be covered by at least one directed edge. This requires
the introduction of so many directed edges that the number of triads
will become consistent thereby. However, the slope of both functions
$\mathcal{F}$ and $\mathcal{C}$ indicates that it is more important
to cover each triad $\textit{CT}_{0}$ than not to create too many
consistent triads $\textit{CT}_{2a},\textit{CT}_{2b}$ or $\textit{CT}_{3}$. 

\begin{figure}[h]
\begin{centering}
\includegraphics[scale=0.5]{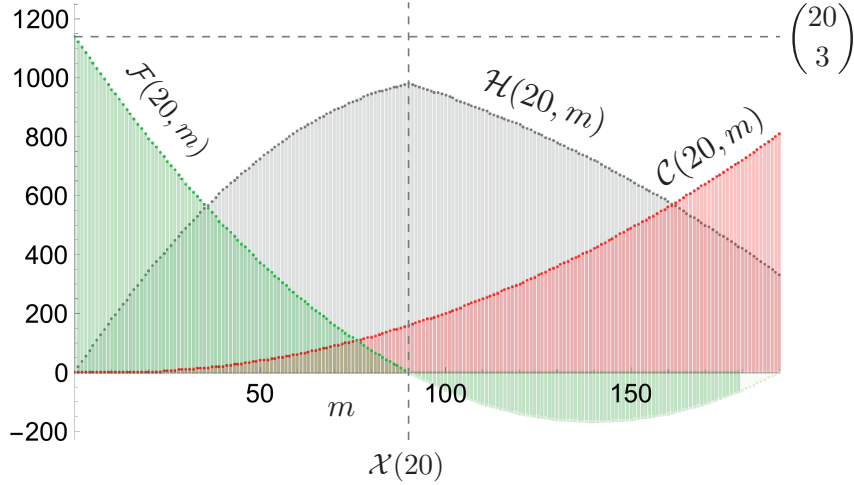}
\par\end{centering}
\caption{Bounding of consistent and inconsistent triads for \emph{gt-graph}
with $n=20$ vertices.}
\label{fig:bounding_functions}
\end{figure}

The considerations in the previous section also indicate that directed
edges should be evenly distributed. Otherwise, the \emph{gt-graph}
may not be maximal. The above somewhat intuitive considerations, based
on the viewing functions in the figure, lead to the definition of
the most inconsistent \emph{gt-graph}. 
\begin{definition}
A double tournament graph (hereinafter referred to as\emph{ dt-graph}),
is a \emph{gt-graph} $G=(V_{1}\cup V_{2},E_{d_{1}}\cup E_{d_{2}},E_{u})$
such that $(V_{1},E_{d_{1}})$ and $(V_{2},E_{d_{2}})$ are \emph{t-graphs},
where $V_{1}\cap V_{2}=\emptyset$ and $E_{u}=\{\{c,d\}\,\,:\,\,c\in V_{1}\,\wedge\,d\in V_{2}\}$.
\end{definition}

It is easy to observe that in every \emph{dt-graph} all triads are
covered by directed edges (Lemma \ref{lemma:None-of-dt-graph-has-uncovered-triad}).
Thus, for every \emph{dt-graph} it holds that $\max\{0,\left\lceil \mathcal{F}(n,m)\right\rceil \}=0$.
This does not guarantee, however, the minimality of $\mathcal{C}(n,m)$.
Let us propose an improved version of the \emph{dt-graph, }which,
as will be shown later, indeed contains the maximal number of inconsistent
triads. 
\begin{proposition}
\label{proposition:maximal-dt-graph-def}The \emph{dt-graph} $T=(V_{1}\cup V_{2},E_{d_{1}}\cup E_{d_{2}},E_{u})$
is the maximal \emph{dt-graph} if $(V_{1},E_{d_{1}})$ and $(V_{2},E_{d_{2}})$
are maximal \emph{t-graphs} where $\left|V_{1}\right|=\left\lfloor \frac{n}{2}\right\rfloor $
and $\left|V_{2}\right|=\left\lceil \frac{n}{2}\right\rceil $. 
\end{proposition}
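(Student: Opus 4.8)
The plan is to exploit the rigid structure of a \emph{dt-graph}: every triad lies entirely in $V_{1}$, entirely in $V_{2}$, or straddles the two parts. A straddling triad has its two same-part vertices joined by a directed edge (inside one of the two constituent \emph{t-graphs}) and both joined to the third, cross-part vertex by undirected edges; it is therefore covered by exactly one directed edge, and the unique triad type with one directed and two undirected edges is $\textit{IT}_{1}$, which is inconsistent. A within-part triad, by contrast, is an ordinary tournament triad ($\textit{CT}_{3}$ or $\textit{IT}_{3}$), since there are no undirected edges inside a part. Writing $n_{1}=\left|V_{1}\right|$ and $n_{2}=\left|V_{2}\right|$ (so $n_{1}+n_{2}=n$), the inconsistent triads of any \emph{dt-graph} $G$ therefore decompose as
\[
\left|G\right|_{i} = \left|(V_{1},E_{d_{1}})\right|_{i} + \left|(V_{2},E_{d_{2}})\right|_{i} + \left(\binom{n_{1}}{2}n_{2} + n_{1}\binom{n_{2}}{2}\right),
\]
where the last term counts the straddling triads (all inconsistent) and equals $\tfrac{1}{2}n_{1}n_{2}(n-2)$.

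First I would fix the split $(n_{1},n_{2})$ and maximise the first two summands. Because the straddling term depends only on $n_{1},n_{2}$ and not on the orientation of the internal edges, Theorem \ref{The-number-of} shows that $\left|(V_{i},E_{d_{i}})\right|_{i}$ is maximised exactly when $(V_{i},E_{d_{i}})$ is a maximal \emph{t-graph}, in which case it equals $\mathcal{I}(n_{i})$. Hence every \emph{dt-graph} with this split satisfies $\left|G\right|_{i}\le \mathcal{I}(n_{1})+\mathcal{I}(n_{2})+\tfrac{1}{2}n_{1}n_{2}(n-2)=:N(n_{1},n_{2})$, with equality iff both halves are maximal. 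Using the elementary identity $\binom{n}{3}=\binom{n_{1}}{3}+\binom{n_{2}}{3}+(\binom{n_{1}}{2}n_{2}+n_{1}\binom{n_{2}}{2})$, this bound rewrites cleanly as
\[
N(n_{1},n_{2}) = \binom{n}{3} - C(n_{1}) - C(n_{2}), \qquad C(m)\overset{\textit{df}}{=}\binom{m}{3}-\mathcal{I}(m),
\]
so that maximising $N$ is the same as minimising the forced-consistent-triad sum $C(n_{1})+C(n_{2})$ over $n_{1}+n_{2}=n$.

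The second step is to show this sum is minimised at the balanced split $n_{1}=\lfloor n/2\rfloor$, $n_{2}=\lceil n/2\rceil$. My approach is to prove $C$ is discretely convex, i.e. that $\gamma(m)\overset{\textit{df}}{=}C(m)-C(m-1)$ is nondecreasing; given this, moving one vertex from a larger part of size $n_{2}$ to a smaller part of size $n_{1}$ (with $n_{2}\ge n_{1}+2$) changes $N$ by $\gamma(n_{2})-\gamma(n_{1}+1)\ge 0$, so $N$ is maximal at the balanced split. Inserting the closed forms of $\mathcal{I}$ from (\ref{eq:no_of_inconsistent_triads}) gives $C(m)=\tfrac{1}{8}m(m-2)^{2}$ for even $m$ and $C(m)=\tfrac{1}{8}m(m-1)(m-3)$ for odd $m$, whence $\gamma(m)=\tfrac{1}{8}(m-2)(3m-4)$ (even $m$) and $\gamma(m)=\tfrac{3}{8}(m-1)(m-3)$ (odd $m$); a direct check then shows $\gamma(m+1)-\gamma(m)$ equals $\tfrac{m-2}{2}$ (even $m$) or $m-1$ (odd $m$), both nonnegative, so $\gamma$ is nondecreasing. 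Combining the two steps, every \emph{dt-graph} obeys $\left|G\right|_{i}\le N(n_{1},n_{2})\le N(\lfloor n/2\rfloor,\lceil n/2\rceil)$, and the construction in the statement attains both inequalities, hence it is a maximal \emph{dt-graph}.

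The only delicate point is the parity bookkeeping: since $\mathcal{I}$, and therefore $C$ and $\gamma$, obey different formulas for odd and even arguments, verifying that $\gamma$ is nondecreasing forces the even$\to$odd and odd$\to$even transitions to be handled separately, and one must confirm the increments stay nonnegative down to the smallest cases. One should also note that $\gamma$ is only weakly increasing at the smallest arguments (for instance $\gamma(2)=\gamma(3)=0$), so for $n=3$ the balancing move is neutral and the balanced split $(1,2)$ merely ties with the degenerate split $(0,3)$; this is harmless, as the proposition asserts only that the described graph is \emph{a} maximal \emph{dt-graph}. Everything else is routine: the triad classification is immediate from the \emph{dt-graph} definition, the reduction of $\left|G\right|_{i}$ to $\binom{n}{3}-C(n_{1})-C(n_{2})$ is elementary counting, and the final balancing step is the standard convexity (majorization) argument.
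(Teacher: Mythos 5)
Your internal computations are all correct: the triad decomposition of a \emph{dt-graph} (within-part triads are $\textit{CT}_{3}$/$\textit{IT}_{3}$, straddling triads are all $\textit{IT}_{1}$), the identity $\binom{n}{3}=\binom{n_{1}}{3}+\binom{n_{2}}{3}+\binom{n_{1}}{2}n_{2}+n_{1}\binom{n_{2}}{2}$, the closed forms $C(m)=\tfrac{1}{8}m(m-2)^{2}$ (even) and $\tfrac{1}{8}m(m-1)(m-3)$ (odd), and the discrete convexity of $C$ all check out. So you have genuinely proved a true statement: among all \emph{dt-graphs} with $n$ vertices, the balanced split with maximal tournament halves has the most inconsistent triads, namely $\mathcal{Y}(n)$. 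The problem is scope: that is not what the Proposition asserts in the paper's sense. ``Maximal'' is defined earlier as having the largest number of inconsistent triads among \emph{all} gt-graphs of size $n$ (membership in $\overline{\mathscr{T}_{n}^{g}}$); the quantity $\mathcal{Y}(n)$ extracted from this Proposition is then used as the denominator of the generalized index $\zeta_{g}$, which only makes sense if no gt-graph whatsoever exceeds it; and the paper states explicitly that confirming the Proposition requires showing that the bounding function $\mathcal{H}(n,m)$, valid for arbitrary gt-graphs, attains its maximum $\mathcal{Y}(n)$ at $m=\mathcal{X}(n)$ (Theorem \ref{theorem:final-theorem}).

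Your argument never compares the candidate against gt-graphs that are \emph{not} dt-graphs, e.g.\ graphs whose undirected edges do not form a complete bipartite pattern between two cliques. That comparison is the actual mathematical content of the paper's proof: Lemmas \ref{lemma:no1}--\ref{lemma:no4} and Theorems \ref{theorem:enforced-triads_theorem}, \ref{theorem:empty-triad-lower-bound} produce lower bounds $\mathcal{C}(n,m)$ and $\mathcal{F}(n,m)$ on the number of consistent triads holding for \emph{every} $G\in\mathscr{T}_{n,m}^{g}$, and Lemmas \ref{lemma:f-function}--\ref{lemma:g-function} together with Theorem \ref{theorem:final-theorem} show that the resulting upper bound $\mathcal{H}(n,m)=\binom{n}{3}-\mathcal{G}(n,m)$ never exceeds $\mathcal{Y}(n)$ for any admissible $m$. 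Without some uniform bound of this kind, optimality among dt-graphs is perfectly compatible with a non-dt gt-graph doing better, so your proof does not establish the Proposition as the paper uses it. What you have written is, in effect, a cleaner derivation of the paper's Lemma \ref{lemma:form-of-a-dt-graph} plus the computation of $\mathcal{Y}(n)$ and the optimal split --- a useful half of the story, but the half that identifies the best candidate, not the half that proves no other gt-graph beats it.
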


\begin{figure}[h]
\begin{centering}
\subfloat[$G_{X^{*}}\in\overline{\mathcal{\mathcal{J}}_{6}^{g}}$]{\begin{centering}
\includegraphics{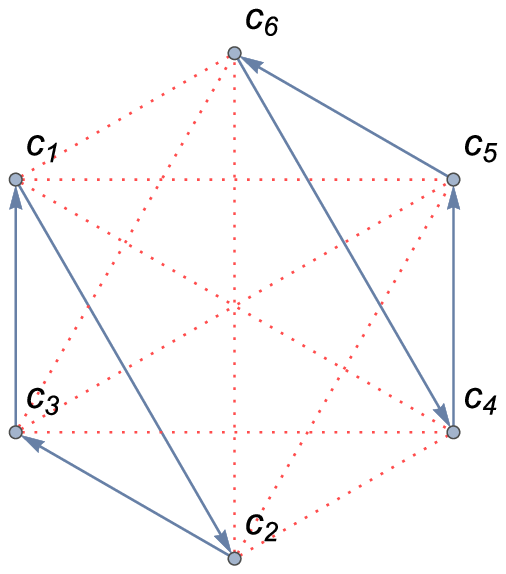}
\par\end{centering}
}~~~~\subfloat[$G_{Y^{*}}\in\overline{\mathcal{\mathcal{J}}_{7}^{g}}$]{\begin{centering}
\includegraphics{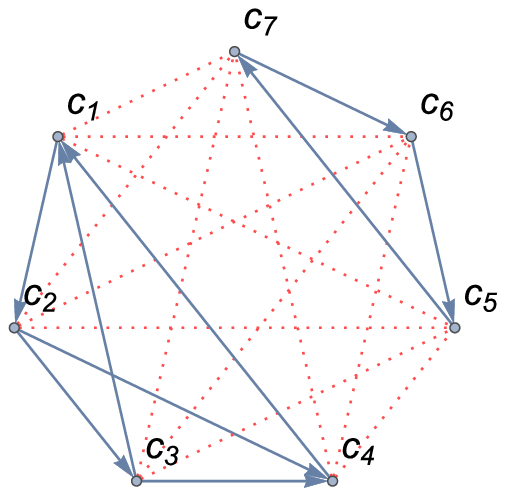}
\par\end{centering}
}
\par\end{centering}
\caption{Two examples of the maximal dt-graphs (undirected edges were dotted). }
\label{fig:maximal-dt-graphs}
\end{figure}

In other words, we suppose that the \emph{dt-graph} with $n$ vertices
composed of two maximal \emph{t-graphs} whose numbers of vertices
are identical (when $n$ is even) or differ by one (when $n$ is odd)
is \emph{maximal. }Examples of such maximal \emph{dt-graph} candidates
can be found at (Fig. \ref{fig:maximal-dt-graphs}). The matrices
that correspond to the graphs $G_{X^{*}}$ and $G_{Y^{*}}$ are given
as (\ref{eq:matrices-of-maximal-dt-graphs}). 

\begin{equation}
X^{*}=\left(\begin{array}{cccccc}
0 & -1 & 1 & 0 & 0 & 0\\
1 & 0 & -1 & 0 & 0 & 0\\
-1 & 1 & 0 & 0 & 0 & 0\\
0 & 0 & 0 & 0 & -1 & 1\\
0 & 0 & 0 & 1 & 0 & -1\\
0 & 0 & 0 & -1 & 1 & 0
\end{array}\right)\,\,\,\,Y^{*}=\left(\begin{array}{ccccccc}
0 & -1 & 1 & 1 & 0 & 0 & 0\\
1 & 0 & -1 & -1 & 0 & 0 & 0\\
-1 & 1 & 0 & -1 & 0 & 0 & 0\\
-1 & 1 & 1 & 0 & 0 & 0 & 0\\
0 & 0 & 0 & 0 & 0 & 1 & -1\\
0 & 0 & 0 & 0 & -1 & 0 & 1\\
0 & 0 & 0 & 0 & 1 & -1 & 0
\end{array}\right)\label{eq:matrices-of-maximal-dt-graphs}
\end{equation}

Let us denote the number of directed edges in a maximal \emph{dt-graph}
candidate by $\mathcal{X}(n)$. It is easy to see that:
\begin{equation}
\mathcal{X}(n)=\binom{\left\lfloor \frac{n}{2}\right\rfloor }{2}+\binom{\left\lceil \frac{n}{2}\right\rceil }{2}\label{eq:no_of_dir_edges_inmax_dtgraph:1}
\end{equation}

\begin{corollary}
\label{corollary:no-of-edges-dt-max-graph}It can be easily calculated
that when $n$ is even i.e. $n=2q$ and $q\in\mathbb{N}_{+}$ it holds
that 
\begin{equation}
\mathcal{X}(2q)=q(q-1)\label{eq:no_of_dir_edges_inmax_dtgraph:2}
\end{equation}

whilst when $n$ is odd i.e. $n=2q+1$ and $q\in\mathbb{N}_{+}$ it
holds that 

\begin{equation}
\mathcal{X}(2q+1)=q^{2}\label{eq:no_of_dir_edges_inmax_dtgraph:3}
\end{equation}
\end{corollary}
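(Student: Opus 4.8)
The plan is to derive both formulas directly from the defining expression (\ref{eq:no_of_dir_edges_inmax_dtgraph:1}), namely $\mathcal{X}(n)=\binom{\lfloor n/2\rfloor}{2}+\binom{\lceil n/2\rceil}{2}$, by splitting into the two parity cases and evaluating the floor and ceiling explicitly. Since the right-hand side depends on $n$ only through these two binomial coefficients, once the arguments $\lfloor n/2\rfloor$ and $\lceil n/2\rceil$ are pinned down the remaining work is the short algebraic simplification of $\binom{k}{2}=k(k-1)/2$.

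First I would treat the even case $n=2q$. Here both $\lfloor n/2\rfloor$ and $\lceil n/2\rceil$ equal $q$, so (\ref{eq:no_of_dir_edges_inmax_dtgraph:1}) collapses to $2\binom{q}{2}$. Expanding the binomial coefficient gives
\[
\mathcal{X}(2q)=2\cdot\frac{q(q-1)}{2}=q(q-1),
\]
which is exactly (\ref{eq:no_of_dir_edges_inmax_dtgraph:2}).

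Next I would treat the odd case $n=2q+1$, where the two arguments differ: $\lfloor n/2\rfloor=q$ while $\lceil n/2\rceil=q+1$. Thus (\ref{eq:no_of_dir_edges_inmax_dtgraph:1}) becomes $\binom{q}{2}+\binom{q+1}{2}$, and factoring out $q/2$ from the two terms yields
\[
\mathcal{X}(2q+1)=\frac{q(q-1)}{2}+\frac{q(q+1)}{2}=\frac{q\bigl((q-1)+(q+1)\bigr)}{2}=q^{2},
\]
matching (\ref{eq:no_of_dir_edges_inmax_dtgraph:3}).

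There is no genuine obstacle here: the statement is a pure evaluation of (\ref{eq:no_of_dir_edges_inmax_dtgraph:1}), which in turn just counts the $\binom{\lfloor n/2\rfloor}{2}$ edges of the first \emph{t-graph} plus the $\binom{\lceil n/2\rceil}{2}$ edges of the second. The only point requiring any care is keeping track of the asymmetry between $\lfloor n/2\rfloor$ and $\lceil n/2\rceil$ in the odd case, where the two constituent \emph{t-graphs} have sizes differing by one, so that the two binomial coefficients are unequal and must be combined rather than simply doubled.
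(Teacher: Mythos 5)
Your proposal is correct and follows exactly the computation the paper intends: the corollary is stated as an easy evaluation of $\mathcal{X}(n)=\binom{\left\lfloor n/2\right\rfloor }{2}+\binom{\left\lceil n/2\right\rceil }{2}$ by parity, and your case split with $\left\lfloor n/2\right\rfloor =\left\lceil n/2\right\rceil =q$ for $n=2q$ and $\left\lfloor n/2\right\rfloor =q$, $\left\lceil n/2\right\rceil =q+1$ for $n=2q+1$ is precisely that calculation. Both algebraic simplifications are accurate, so nothing is missing.
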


To determine the number of consistent/inconsistent triads in this
\emph{``maximal} \emph{gt-graph} \emph{candidate''} let us observe
that all the consistent triads are in the two maximal tournament subgraphs.
This observation can be written in the form of a short Lemma. 
\begin{lemma}
\label{lemma:form-of-a-dt-graph}For every \emph{dt-graph} $G=(V_{1}\cup V_{2},E_{d_{1}}\cup E_{d_{2}},E_{u})$
and a triad $t=\{v_{i},v_{k},v_{j}\}$ if $t\cap V_{1}\neq\emptyset$
and $t\cap V_{2}\neq\emptyset$ then $t$ is inconsistent. 
\end{lemma}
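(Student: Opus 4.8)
The plan is to observe that any triad straddling both vertex classes of a dt-graph is forced into the edge pattern of an $\textit{IT}_{1}$, which is inconsistent by definition. First I would note that, since $\left|t\right|=3$ while $t\cap V_{1}\neq\emptyset$ and $t\cap V_{2}\neq\emptyset$, the three vertices of $t$ must split so that two of them lie in one class and the remaining one lies in the other. Because the roles of $V_{1}$ and $V_{2}$ are interchangeable in the dt-graph construction, it suffices to treat the case where two vertices $v_{i},v_{k}\in V_{1}$ and the single vertex $v_{j}\in V_{2}$.

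Next I would identify the type of each of the three edges of $t$. Since $(V_{1},E_{d_{1}})$ is a t-graph, Definition \ref{def:t-graph-definition} forces the pair $\{v_{i},v_{k}\}$ to be joined by a directed edge; without loss of generality $v_{i}\rightarrow v_{k}$. On the other hand, the defining property of a dt-graph states $E_{u}=\{\{c,d\}\,:\,c\in V_{1}\wedge d\in V_{2}\}$, so the two pairs $\{v_{i},v_{j}\}$ and $\{v_{k},v_{j}\}$, each having one endpoint in $V_{1}$ and one in $V_{2}$, are undirected edges $v_{i}\,\text{---}\,v_{j}$ and $v_{k}\,\text{---}\,v_{j}$. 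Hence $t$ is covered by exactly one directed edge and carries two undirected edges, both incident to the lone cross-class vertex $v_{j}$.

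Finally I would match this pattern against the triad catalogue. Among the seven triad types, $\textit{IT}_{1}$ is the unique type covered by exactly one directed edge; every other type is covered by zero, two, or three directed edges. Since $t$ consists of the edges $v_{i}\rightarrow v_{k}$, $v_{k}\,\text{---}\,v_{j}$, $v_{i}\,\text{---}\,v_{j}$, it is, up to relabelling of indices, precisely an $\textit{IT}_{1}$ triad (two ties sharing the common vertex $v_{j}$, together with a single directed edge between the other two vertices), and is therefore inconsistent. This proves the lemma.

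I do not anticipate any genuine obstacle here, as the argument is purely combinatorial bookkeeping of edge types. The only point requiring care is confirming that \emph{every} edge incident to $v_{j}$ is undirected, which follows immediately from the definition of $E_{u}$ because each such edge crosses between $V_{1}$ and $V_{2}$; this guarantees that $t$ cannot accidentally acquire a second directed edge and so cannot be one of the two-directed-edge types $\textit{IT}_{2}$, $\textit{CT}_{\textit{2a}}$, or $\textit{CT}_{\textit{2b}}$.
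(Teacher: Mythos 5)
Your proof is correct and follows essentially the same route as the paper's: split the triad as two vertices in one class and one in the other, use the \emph{t-graph} property to force a directed edge within the class and the definition of $E_{u}$ to force the two cross-class edges to be undirected, and conclude the triad is of type $\textit{IT}_{1}$. Your extra bookkeeping (checking the triad catalogue and ruling out the two-directed-edge types) is a harmless elaboration of what the paper states more tersely.
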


\begin{proofQED}
Since $t\cap V_{1}\neq\emptyset$ and $t\cap V_{2}\neq\emptyset$,
there are two vertices from $t$ in one of the two sets $V_{1}$ and
$V_{2}$ and one vertex from $t$ in the other set. Let us suppose
that $v_{i},v_{k}\in V_{1}$ and $v_{j}\in V_{2}$. Since $(V_{1},E_{d_{1}})$
is a \emph{t-graph} then the edge between $v_{i}$ and $v_{k}$ is
directed. Due to the definition of \emph{dt-graph} both edges $(v_{i},v_{j})$
and $(v_{k},v_{j})$ are undirected, hence $t$ is $IT_{1}$. 
\end{proofQED}

The immediate conclusion can be written as the Lemma
\begin{lemma}
\label{lemma:None-of-dt-graph-has-uncovered-triad}The dt-graph does
not contain uncovered triads
\end{lemma}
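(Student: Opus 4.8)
The plan is to classify every triad of the \emph{dt-graph} by how its three vertices split between the two blocks $V_{1}$ and $V_{2}$, and to verify that in each case at least one of the three connecting edges is directed. Recall from Definition \ref{def:triad-and-triad-covering} that a triad is covered exactly when some directed edge joins two of its vertices; consequently an uncovered triad is one all three of whose edges are undirected, i.e.\ a $\textit{CT}_{0}$ triad. So it suffices to rule out any all-undirected triad.

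First I would fix an arbitrary triad $t=\{v_{i},v_{k},v_{j}\}$. Because $V=V_{1}\cup V_{2}$ is a disjoint union and $t$ has three elements, the pigeonhole principle puts at least two of them in the same block. This yields two mutually exhaustive cases: either all three vertices lie in one block, or two lie in one block and the third in the other.

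In the first case, say $t\subseteq V_{1}$, the subgraph $(V_{1},E_{d_{1}})$ is a \emph{t-graph} by hypothesis, so Definition \ref{def:t-graph-definition} guarantees that every pair of distinct vertices of $V_{1}$ is joined by a directed edge. Hence all three edges inside $t$ are directed and $t$ is covered (it is a $\textit{CT}_{3}$ or an $\textit{IT}_{3}$); the subcase $t\subseteq V_{2}$ is identical. In the second case $t$ meets both $V_{1}$ and $V_{2}$, so Lemma \ref{lemma:form-of-a-dt-graph} applies and shows $t$ is an $\textit{IT}_{1}$ triad, which contains exactly one directed edge and is therefore covered. Since the two cases exhaust all possibilities, every triad is covered and the lemma follows.

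I do not anticipate a genuine obstacle: the entire argument is the pigeonhole observation together with Lemma \ref{lemma:form-of-a-dt-graph}, which already disposes of the mixed case. The only point deserving care is to note that each block is itself a \emph{t-graph}, so a triad lying wholly inside one block is covered by three directed edges rather than left undirected; this is exactly what eliminates the last remaining way a $\textit{CT}_{0}$ could arise.
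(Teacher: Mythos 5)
Your proof is correct and follows essentially the same route as the paper's: split into the case where the triad lies entirely in one block (then all three edges are directed because that block is a \emph{t-graph}) and the mixed case handled by Lemma \ref{lemma:form-of-a-dt-graph}. The only difference is presentational --- you make the pigeonhole step and the equivalence ``uncovered $\Leftrightarrow$ $\textit{CT}_{0}$'' explicit, which the paper leaves implicit.
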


\begin{proofQED}
Let us consider the \emph{dt-graph} $G=(V_{1}\cup V_{2},E_{d_{1}}\cup E_{d_{2}},E_{u})$
and a triad $t=\{v_{i},v_{k},v_{j}\}$. If $v_{i},v_{k}\in V_{1}$
and $v_{j}\in V_{2}$ then $t$ is inconsistent (Lemma \ref{lemma:form-of-a-dt-graph}),
hence it cannot be uncovered. If all $v_{i},v_{k},v_{j}\in V_{1}$
then all three edges are spanned between $v_{i},v_{k}$ and $v_{j}$.
Hence, $t$ is covered. The proof is completed as all the other cases
are similar. 
\end{proofQED}

It is also easy to determine the number of inconsistent triads in
the candidate graph. Due to (Theorem \ref{theorem:maximal-t-graph-no-inc-triad})
the number of consistent triads in the maximal tournament sub-graphs
are $\binom{\left\lfloor \frac{n}{2}\right\rfloor }{3}-\mathcal{I}\left(\left\lfloor \frac{n}{2}\right\rfloor \right)$
and $\binom{\left\lceil \frac{n}{2}\right\rceil }{3}-\mathcal{I}\left(\left\lceil \frac{n}{2}\right\rceil \right)$
correspondingly. Since there are no consistent triads in double tournament
graphs, except those that are fully enclosed in the maximal tournament
sub-graphs (Lemma \ref{lemma:form-of-a-dt-graph}), the number of
inconsistent triads in the maximal \emph{gt-graph} candidate is given
as: 
\begin{equation}
\mathcal{Y}(n)=\binom{n}{3}-\left(\binom{\left\lfloor \frac{n}{2}\right\rfloor }{3}-\mathcal{I}\left(\left\lfloor \frac{n}{2}\right\rfloor \right)\right)-\left(\binom{\left\lceil \frac{n}{2}\right\rceil }{3}-\mathcal{I}\left(\left\lceil \frac{n}{2}\right\rceil \right)\right)\label{eq:yfun-def}
\end{equation}
To confirm that a \emph{dt-graph} (Proposition \ref{proposition:maximal-dt-graph-def})
is indeed maximal we need to prove that
\begin{itemize}
\item the function $\mathcal{H}(n,m)$ reaches the maximum when the number
of directed edges in a graph equals $m=\mathcal{X}(n)$, and
\item the maximum of $\mathcal{H}(n,m)$ equals $\mathcal{Y}(n)$
\end{itemize}
Therefore to make the Proposition \ref{proposition:maximal-dt-graph-def}
a fully fledged claim we prove (Theorem \ref{theorem:final-theorem}).
However, before we start (Theorem \ref{theorem:final-theorem}) let
us prove a couple of Lemmas which formally confirm what we have seen
at (Fig. \ref{fig:bounding_functions}). The aim of the first Lemma
(\ref{lemma:f-function}) is a formal confirmation of the shape of
the function $\mathcal{F}$. In particular, it confirms that $\mathcal{F}$
crosses the x-axis at the same point where $\mathcal{H}$ reaches
the maximum i.e. for every fixed $n\geq3$, $\mathcal{F}$ is positive
when $0\leq m<\mathcal{X}(n)$, equals $0$ when $m=\mathcal{X}(n)$
and it is non-positive for $\mathcal{X}(n)\leq m\leq\binom{n}{2}$. 
\begin{lemma}
\label{lemma:f-function}For every $n\in\mathbb{N}_{+},n\geq3$ and
$k\in\mathbb{N}_{+}$ it holds that: 
\begin{equation}
\mathcal{F}(n,\mathcal{X}(n))=0\label{eq:lemma:hfun_bound:1}
\end{equation}

\begin{equation}
\mathcal{F}(n,\mathcal{X}(n)-k)\geq1,\,\,\,\text{where}\,\,\,0<k<\mathcal{X}(n)\label{eq:lemma:hfun_bound:2}
\end{equation}

\begin{equation}
\mathcal{F}(n,\mathcal{X}(n)+k)\leq0,\,\,\,\text{where}\,\,\,0<k\leq\binom{n}{2}-\mathcal{X}(n)\label{eq:lemma:hfun_bound:3}
\end{equation}
\end{lemma}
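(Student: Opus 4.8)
The plan is to treat $\mathcal{F}(n,\cdot)$ as an integer sequence $\phi(m):=\mathcal{F}(n,m)$ on $0\le m\le\binom{n}{2}$ and to exploit the fact that the only non-polynomial ingredient of $\mathcal{F}$ is the term $s:=\lfloor 2m/n\rfloor$, which is constant on consecutive blocks of $m$. First I would record, using Corollary~\ref{corollary:no-of-edges-dt-max-graph}, that $\lfloor 2\mathcal{X}(n)/n\rfloor=\lfloor n/2\rfloor-1$ in both parities: for $n=2q$ one has $2\mathcal{X}(n)/n=q-1$ exactly, while for $n=2q+1$ one has $2\mathcal{X}(n)/n=q-q/(2q+1)\in(q-1,q)$. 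Substituting $m=\mathcal{X}(n)$ together with this value of $s$ into the formula for $\mathcal{F}(n,m)$ and splitting into the cases $n=2q$ and $n=2q+1$ then yields (\ref{eq:lemma:hfun_bound:1}): in each case the summands share a common factor and the remaining bracket collapses to $0$. The same substitution with $m=\binom{n}{2}$, where $s=n-1$, shows $\mathcal{F}(n,\binom{n}{2})=0$; I will need this second vanishing point for part (\ref{eq:lemma:hfun_bound:3}).

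The engine for the two inequalities is the convexity of the sequence $\phi$. On a block where $s$ is constant, $\mathcal{F}$ is affine in $m$ with forward difference $\sigma_s:=\tfrac13(4s-3n+6)$, which is strictly increasing in $s$ (each unit step of $s$ raising it by $\tfrac43$). Since $n\ge3$ gives $2/n<1$, passing from $m$ to $m+1$ raises $s$ by at most one, so every forward difference $\phi(m+1)-\phi(m)$ is either $\sigma_s$ (no jump) or, at a boundary where $2m\in\{ns+n-2,\,ns+n-1\}$, a value I would compute directly to be $\sigma_s$ or $\sigma_s+\tfrac23$. In every case the forward difference lies in $[\sigma_s,\sigma_{s+1}]$, hence the differences are non-decreasing in $m$ and $\phi$ is convex. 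This boundary bookkeeping — tracking $2m\bmod n$ and checking that the ``jump'' difference never overshoots $\sigma_{s+1}$ — is the one genuinely technical point of the argument.

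Granted convexity, the two inequalities follow cleanly. For (\ref{eq:lemma:hfun_bound:3}), $\phi$ is convex on $[\mathcal{X}(n),\binom{n}{2}]$ and vanishes at both endpoints, so it lies weakly below the zero chord joining them; thus $\phi(\mathcal{X}(n)+k)\le0$ for every admissible $k$. For (\ref{eq:lemma:hfun_bound:2}) I would first note that the last forward difference before $\mathcal{X}(n)$, namely $\phi(\mathcal{X}(n))-\phi(\mathcal{X}(n)-1)$, equals $-\tfrac{n+2}{3}$ when $n$ is even (here $\mathcal{X}(n)-1$ lies one block below $\mathcal{X}(n)$) and $-\tfrac{n}{3}$ when $n$ is odd (same block); both are negative, and since by convexity this is the largest of the differences on $[0,\mathcal{X}(n))$, all of them are negative and $\phi$ is strictly decreasing there. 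Consequently $\phi(\mathcal{X}(n)-k)\ge\phi(\mathcal{X}(n)-1)$, and the explicit value $\phi(\mathcal{X}(n)-1)\in\{\tfrac{n+2}{3},\tfrac{n}{3}\}$ is at least $1$ for all $n\ge4$ (the range $k<\mathcal{X}(n)$ being empty when $n=3$), which is exactly (\ref{eq:lemma:hfun_bound:2}).
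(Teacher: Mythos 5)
Your proof is correct, but it takes a genuinely different --- and more structural --- route than the paper's. The paper's own proof (\ref{sec:Appendix-A}) handles each of the three claims separately by brute force: it splits into $n$ even/odd, rewrites $\mathcal{F}$ on each interval of $k$ where the floor terms are constant, and introduces auxiliary functions ($h$, $h_{2}$, $f$, $f_{2}$, $f_{3}$) whose signs are checked at interval endpoints using linearity in $k$ and the parabola shape in $r$; this runs to several pages of case analysis. You instead establish one structural fact --- that $m\mapsto\mathcal{F}(n,m)$ is a discretely convex sequence --- and derive all three claims from it plus two endpoint evaluations. I checked your key computations and they hold: the within-block forward difference is $\sigma_{s}=\frac{1}{3}(4s-3n+6)$ with $s=\left\lfloor 2m/n\right\rfloor$; a jump $s\to s+1$ occurs exactly when $2m\in\{ns+n-2,\,ns+n-1\}$, and the corresponding differences are $\sigma_{s}$ and $\sigma_{s}+\frac{2}{3}$ respectively, both inside $[\sigma_{s},\sigma_{s+1}]$ since $\sigma_{s+1}=\sigma_{s}+\frac{4}{3}$, so the difference sequence is non-decreasing; $\left\lfloor 2\mathcal{X}(n)/n\right\rfloor=\left\lfloor n/2\right\rfloor-1$ in both parities, and both $\mathcal{F}(n,\mathcal{X}(n))=0$ and $\mathcal{F}\left(n,\binom{n}{2}\right)=0$ (the common factors being $4q(q-1)$ and $q-1$ in the two parity cases of the first evaluation, and $2n(n-1)$ in the second); and the last forward difference before $\mathcal{X}(n)$ is $-\frac{n+2}{3}$ for $n$ even (a cross-block difference, case $2m=ns+n-2$) and $-\frac{n}{3}$ for $n$ odd (within-block), giving $\mathcal{F}(n,\mathcal{X}(n)-1)\in\left\{\frac{n+2}{3},\frac{n}{3}\right\}$, which is $\geq1$ on the non-empty range of $k$, consistent with the paper's own value $\frac{2}{3}(q+1)$ at $k=1$ in (\ref{eq:lemma:hfun_bound:16}). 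Convexity then yields (\ref{eq:lemma:hfun_bound:3}) by the chord argument between the two zeros and (\ref{eq:lemma:hfun_bound:2}) by monotonicity of the differences, exactly as you say. What your approach buys is brevity and insight: it explains why $\mathcal{F}$ is positive before $\mathcal{X}(n)$ and non-positive after (a convex sequence vanishing at $\mathcal{X}(n)$ and at $\binom{n}{2}$ can do nothing else), and it makes explicit the second zero at $m=\binom{n}{2}$, which the paper never isolates. What the paper's longer route buys is that every step is a self-contained elementary evaluation, with no need for the boundary bookkeeping of differences across blocks --- which, as you rightly flag, is the one place in your argument where an error could hide.
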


\begin{proof}
Proof of the Lemma, consisting of elementary but time consuming operations,
can be found in (\ref{sec:Appendix-A}).
\end{proof}

The aim of the next Lemma is to show that $\mathcal{C}$ is strictly
increasing for every $m$ not smaller than $n$ and obviously not
greater than the maximal number of edges in a \emph{gt-graph} i.e.
$\binom{n}{2}$ (Fig. \ref{fig:bounding_functions}). Thus, by adding
more directed edges than $n$ we may only increase the minimal number
of consistent triads of the types $\textit{CT}_{2a}$ or $\textit{CT}_{3}$. 
\begin{lemma}
\label{lemma:c-function}For every $n\in\mathbb{N}_{+},n\geq3$ the
function $\mathcal{C}$ 
\end{lemma}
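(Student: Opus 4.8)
The plan is to show that the forward difference $\mathcal{C}(n,m+1)-\mathcal{C}(n,m)$ is strictly positive for every integer $m$ with $n\leq m<\binom{n}{2}$. The crucial structural observation is that $\mathcal{C}(n,m)$, as given by \eqref{eq:number-of-enforced-triads}, is piecewise linear in $m$: on each block of consecutive integers where $\lfloor m/n\rfloor$ takes a fixed value $q$, one may rewrite it as $\mathcal{C}(n,m)=\tfrac{1}{2}q(2m-nq-n)=qm-\tfrac{1}{2}nq(q+1)$, an affine function of $m$ with slope $q$. Since $m\geq n$ forces $q=\lfloor m/n\rfloor\geq 1$, this slope is positive, so the only genuine work is to check that monotonicity survives across the breakpoints where the floor jumps.

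First I would dispose of the interior case, i.e. when $\lfloor (m+1)/n\rfloor=\lfloor m/n\rfloor=q$. Here both $\mathcal{C}(n,m)$ and $\mathcal{C}(n,m+1)$ are evaluated with the same $q$, and a one-line computation gives $\mathcal{C}(n,m+1)-\mathcal{C}(n,m)=\tfrac{1}{2}q\big((2(m+1)-nq-n)-(2m-nq-n)\big)=q$. Thus inside every block the increment is exactly $\lfloor m/n\rfloor\geq 1$.

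The main obstacle is the boundary case, where $m+1$ is a multiple of $n$, say $m+1=n(q+1)$ so that $m=nq+n-1$, giving $\lfloor m/n\rfloor=q$ but $\lfloor(m+1)/n\rfloor=q+1$. Here I would substitute directly into \eqref{eq:number-of-enforced-triads}, obtaining $\mathcal{C}(n,m)=\tfrac{1}{2}q(nq+n-2)$ and $\mathcal{C}(n,m+1)=\tfrac{1}{2}nq(q+1)$; subtracting and simplifying collapses the $n$-dependent terms and again produces the increment $q$. Hence across a breakpoint the increment is once more $\lfloor m/n\rfloor$. I expect this to be the delicate point, since it is the single place at which the floor in \eqref{eq:number-of-enforced-triads} is discontinuous, yet the two affine pieces are seen to meet with matching increments.

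Combining the two cases, $\mathcal{C}(n,m+1)-\mathcal{C}(n,m)=\lfloor m/n\rfloor$ for all $m\geq 1$, and this quantity is at least $1$ whenever $m\geq n$. Therefore $\mathcal{C}(n,\cdot)$ is strictly increasing on $n\leq m\leq\binom{n}{2}$, which is exactly the claim; the hypothesis $n\geq 3$ only serves to guarantee that this range is nonempty (for $m<n$ the increment is $0$, consistent with $\mathcal{C}$ vanishing while the directed edges are few).
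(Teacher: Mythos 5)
Your proof is correct and follows essentially the same route as the paper's: both dispose of the constant part by noting $\left\lfloor \frac{m}{n}\right\rfloor=0$ for $m<n$, and both prove strict monotonicity by splitting into the interior case (where $\left\lfloor \frac{m}{n}\right\rfloor=\left\lfloor \frac{m+1}{n}\right\rfloor$) and the breakpoint case $m+1=n(q+1)$, computing the forward difference explicitly in each. Your packaging is slightly cleaner in that you observe the increment equals $\left\lfloor \frac{m}{n}\right\rfloor$ uniformly in both cases, whereas the paper obtains $p-1$ at breakpoints and $p$ inside blocks via an auxiliary function $h$ and endpoint checks, but the substance is identical.
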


\begin{enumerate}
\item is constant and equals $\mathcal{C}(n,m)=0$ for every $m$ such that
$0\leq m<n$
\item is strictly increasing for every $m\in\mathbb{N}_{+}$ such that $n\leq m\leq\binom{n}{2}$,
i.e. 
\begin{equation}
\mathcal{C}(n,m+1)-\mathcal{C}(n,m)>0\label{eq:lemma:cfun_increasing}
\end{equation}
\end{enumerate}
\begin{proof}
Proof of the Lemma, consisting of elementary but time consuming operations,
can be found in (\ref{sec:appendix-B}).
\end{proof}

In every gt-graph with $n$ vertices and $m$ directed edges there
are at least $\mathcal{C}(n,m)$ consistent triads $\textit{CT}_{2a}$
or $\textit{CT}_{3}$. This means that in this graph there are at
most $\binom{n}{3}-\mathcal{C}(n,m)$ inconsistent triads. In particular
the Lemma \ref{lemma:c-function-equality} shows that there is no
\emph{gt-graph} with $n$ vertices and $\mathcal{X}(n)$ directed
edges which has more inconsistent triads than the maximal \emph{gt-graph}
defined in (Proposition \ref{proposition:maximal-dt-graph-def}). 
\begin{lemma}
\label{lemma:c-function-equality}For every $n\in\mathbb{N}_{+},n\geq3$
it holds that 
\begin{equation}
\mathcal{Y}(n)=\binom{n}{3}-\mathcal{C}(n,\mathcal{X}(n))\label{eq:lemma:cfun_equality}
\end{equation}
\end{lemma}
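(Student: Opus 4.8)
The plan is to read both sides of the claimed identity (\ref{eq:lemma:cfun_equality}) as counts attached to the maximal \emph{dt-graph} $G=(V_1\cup V_2,E_{d_1}\cup E_{d_2},E_u)$ of Proposition~\ref{proposition:maximal-dt-graph-def}. Unfolding the definition (\ref{eq:yfun-def}) of $\mathcal{Y}(n)$ and cancelling the common $\binom{n}{3}$, the identity is equivalent to
\[
\mathcal{C}(n,\mathcal{X}(n))=\left(\binom{\left\lfloor \tfrac{n}{2}\right\rfloor}{3}-\mathcal{I}\!\left(\left\lfloor \tfrac{n}{2}\right\rfloor\right)\right)+\left(\binom{\left\lceil \tfrac{n}{2}\right\rceil}{3}-\mathcal{I}\!\left(\left\lceil \tfrac{n}{2}\right\rceil\right)\right).
\]
By Theorem~\ref{theorem:maximal-t-graph-no-inc-triad} the right-hand side is exactly the number of consistent triads in the two maximal tournament halves $(V_1,E_{d_1})$ and $(V_2,E_{d_2})$, whose sizes are $\lfloor n/2\rfloor$ and $\lceil n/2\rceil$. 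Hence it suffices to show that $\mathcal{C}(n,\mathcal{X}(n))$ equals this number of consistent triads.

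First I would compute $\sum_{c\in V}\binom{\text{deg}_{\textit{in}}(c)}{2}$ for the \emph{dt-graph}. Since $E_u$ carries every edge between $V_1$ and $V_2$ while all directed edges stay inside one half, each vertex has the same in-degree in $G$ as it does inside its own half, so the sum splits as $\sum_{c\in V_1}\binom{\cdots}{2}+\sum_{c\in V_2}\binom{\cdots}{2}$. Applied to each tournament half, formula (\ref{eq:no-of-consistent-triads-in-tournament}) identifies each partial sum with the number of consistent ($\textit{CT}_3$) triads of that half; this matches the graph picture, since in a \emph{dt-graph} every mixed triad is an inconsistent $\textit{IT}_1$ (Lemma~\ref{lemma:form-of-a-dt-graph}), so the only consistent triads lie wholly inside one half. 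Thus $\sum_{c\in V}\binom{\text{deg}_{\textit{in}}(c)}{2}$ already equals the right-hand side displayed above.

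It then remains to show $\sum_{c\in V}\binom{\text{deg}_{\textit{in}}(c)}{2}=\mathcal{C}(n,\mathcal{X}(n))$. The quantity $\mathcal{C}(n,m)$ was obtained, in the proof of Theorem~\ref{theorem:enforced-triads_theorem} via (\ref{eq:th4eq1})--(\ref{eq:th4eq4}), precisely as the value of $\sum_{c}\binom{\text{deg}_{\textit{in}}(c)}{2}$ when the $m$ in-degrees are spread as evenly as possible, i.e.\ all lying in $\{\lfloor m/n\rfloor,\lfloor m/n\rfloor+1\}$. So I only need to check that the \emph{dt-graph} realises this even spread with $m=\mathcal{X}(n)$. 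By Theorem~\ref{The-number-of} the in-degrees inside a maximal $p$-vertex tournament all lie in $\{\lfloor(p-1)/2\rfloor,\lceil(p-1)/2\rceil\}$; combining the two halves (where $|V_2|-|V_1|\in\{0,1\}$) one verifies that the union of these two ranges is again an interval of length at most one. Since the in-degrees sum to $\mathcal{X}(n)$, the two attained values must be exactly $\lfloor\mathcal{X}(n)/n\rfloor$ and $\lfloor\mathcal{X}(n)/n\rfloor+1$, and the even-distribution sum then equals $\mathcal{C}(n,\mathcal{X}(n))$ term for term. Chaining the two equalities yields the lemma.

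I expect the only real work, and the main obstacle, to be this last balancing check: showing the two halves' in-degree ranges fit inside a common pair $\{a,a+1\}$. For $n$ even the two halves are identical and the claim is immediate; for $n$ odd I would split on the parity of $\lfloor n/2\rfloor$, and in each case the odd-sized half contributes a single in-degree value sitting exactly at the boundary of the even-sized half's two values. This is a short but parity-sensitive computation, and it is precisely the point where a purely formula-driven alternative would instead have to wrestle directly with $\lfloor\mathcal{X}(n)/n\rfloor$ across the residues of $n$ modulo $4$.
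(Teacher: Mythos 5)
Your proof is correct, and it takes a genuinely different route from the paper's. The paper's own argument (\ref{sec:appendix:C}) is purely algebraic: it splits into the four residues of $n$ modulo $4$, substitutes the closed forms of $\mathcal{X}(n)$, $\mathcal{C}(n,m)$ and $\mathcal{Y}(n)$, and grinds each case down to a floor-function identity such as $\left\lfloor 4q^{2}/(4q+1)\right\rfloor =q-1$ or $\left\lfloor (2q+1)^{2}/(4q+3)\right\rfloor =q$. You instead read both sides as counts on the maximal \emph{dt-graph}: $\binom{n}{3}-\mathcal{Y}(n)$ is its number of consistent triads (Theorem \ref{theorem:maximal-t-graph-no-inc-triad} applied to each half, with Lemma \ref{lemma:form-of-a-dt-graph} guaranteeing no consistent triad is mixed), while $\mathcal{C}(n,m)$ is, by its derivation (\ref{eq:th4eq1})--(\ref{eq:th4eq4}), the value of $\sum_{c}\binom{\text{deg}_{\textit{in}}(c)}{2}$ for the unique in-degree multiset with all values in $\{\lfloor m/n\rfloor,\lfloor m/n\rfloor+1\}$ summing to $m$; your parity check that the dt-graph realizes exactly this multiset at $m=\mathcal{X}(n)$ is the only computation left, and it goes through in all four cases (e.g.\ for $n=4r+3$ the multiset is $r$ with multiplicity $3r+2$ and $r+1$ with multiplicity $r+1$, summing to $(2r+1)^{2}=\mathcal{X}(n)$, with $\lfloor\mathcal{X}(n)/n\rfloor=r$). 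What your route buys: it explains \emph{why} the identity holds --- the maximal dt-graph attains the lower bound of Theorem \ref{theorem:enforced-triads_theorem} with equality --- it reuses the structural results of Section \ref{sec:The-most-inconsistent} instead of re-deriving everything numerically, and it replaces several pages of floor manipulation with four short degree-multiset checks. What the paper's route buys: it is self-contained verification needing no reference to any particular graph, only to the stated formulas. One pedantic point you should add: for $n\in\{3,4,5\}$ one half has fewer than three vertices, where Theorem \ref{theorem:maximal-t-graph-no-inc-triad} does not formally apply; but such a half has no triads at all and $\binom{p}{3}-\mathcal{I}(p)=0$ for $p\leq2$, so your identification of the consistent-triad count still holds trivially.
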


\begin{proof}
Proof of the Lemma, composed of elementary but time consuming operations,
can be found in (\ref{sec:appendix:C}).
\end{proof}

The next Lemma shows that the minimal number of consistent triads
in a \emph{gt-graph} decreases along with adding the next directed
edges. Such a decrease continues as long as the number of directed
edges does not reach the value $\mathcal{X}(n)$. In other words,
following the increasing number of directed edges (until there are
less than $\mathcal{X}(n)$ ) the number of inconsistent triads also
increases. 
\begin{lemma}
\label{lemma:g-function}For every $n\in\mathbb{N}_{+},n\geq3$ the
function $\mathcal{G}$ is strictly decreasing for every $m\in\mathbb{N}_{+}$
such that $1\leq m\leq\mathcal{X}(n)$, i.e. 
\begin{equation}
\mathcal{G}(n,m)-\mathcal{G}(n,m+1)>0\,\,\,\textit{where}\,\,\,1\leq m<\mathcal{X}(n)\label{eq:lemma:cfun_decreasing}
\end{equation}
\end{lemma}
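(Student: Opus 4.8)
The plan is to recall the definition $\mathcal{G}(n,m) \stackrel{\textit{df}}{=} \mathcal{C}(n,m) + \max\{0, \lceil \mathcal{F}(n,m)\rceil\}$ and to exploit the fact that on the interval $1 \leq m < \mathcal{X}(n)$ the two previously established Lemmas \ref{lemma:f-function} and \ref{lemma:c-function} pin down the behaviour of both summands precisely. By Lemma \ref{lemma:f-function}, for every $m$ strictly less than $\mathcal{X}(n)$ we have $\mathcal{F}(n,m) \geq 1 > 0$, so the maximum is attained by the second argument and $\max\{0,\lceil\mathcal{F}(n,m)\rceil\} = \lceil \mathcal{F}(n,m)\rceil$ throughout this range. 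Hence on $1 \leq m < \mathcal{X}(n)$ the function simplifies to $\mathcal{G}(n,m) = \mathcal{C}(n,m) + \lceil \mathcal{F}(n,m)\rceil$, and proving strict decrease reduces to controlling the increment of each piece separately.

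First I would write the target difference as
\begin{equation}
\mathcal{G}(n,m) - \mathcal{G}(n,m+1) = \bigl(\mathcal{C}(n,m) - \mathcal{C}(n,m+1)\bigr) + \bigl(\lceil\mathcal{F}(n,m)\rceil - \lceil\mathcal{F}(n,m+1)\rceil\bigr).
\end{equation}
The first bracket is nonpositive but small: by Lemma \ref{lemma:c-function}, $\mathcal{C}$ is constant (equal to $0$) for $m < n$ and strictly increasing for $n \leq m \leq \binom{n}{2}$, so $\mathcal{C}(n,m+1) - \mathcal{C}(n,m) \geq 0$, and I would extract from the proof of that Lemma the explicit per-step increment (it grows like $\lfloor m/n\rfloor$, i.e. by a bounded amount). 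The job is then to show that the drop in the ceiling term strictly exceeds any rise in $\mathcal{C}$. The natural route is to compute $\mathcal{F}(n,m) - \mathcal{F}(n,m+1)$ directly from the closed form \eqref{eq:empty-triad-theorem-eq-3}: since $\mathcal{F}$ is (piecewise) a concave quadratic in $m$ that crosses zero exactly at $m = \mathcal{X}(n)$ with $\mathcal{F}$ positive to the left, its backward difference is positive and in fact bounded below by enough to dominate the increment of $\mathcal{C}$. I would handle the $\lfloor 2m/n\rfloor$ term by splitting into the two cases according to whether $\lfloor 2m/n\rfloor$ changes as $m$ advances to $m+1$, which it does only at isolated values of $m$.

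The main obstacle I expect is the interaction between the two floor functions $\lfloor m/n\rfloor$ (inside $\mathcal{C}$) and $\lfloor 2m/n\rfloor$ (inside $\mathcal{F}$) together with the outer ceiling on $\mathcal{F}$. Because ceilings and floors can each shift a difference by $\pm 1$ at the break points, a clean proof needs the \emph{continuous} decrease of $\mathcal{F}$ (as a real-valued quadratic) to be strong enough that even after rounding the net change stays strictly negative for $\mathcal{G}$. I would therefore first establish a lower bound of the form $\mathcal{F}(n,m) - \mathcal{F}(n,m+1) \geq \mathcal{C}(n,m+1) - \mathcal{C}(n,m) + 1$ at the level of the exact (non-rounded) expressions, which leaves a margin of $1$ to absorb the ceiling rounding, and only afterward pass to $\lceil\cdot\rceil$. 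This is the kind of elementary-but-tedious estimate the paper defers to an appendix for its sibling Lemmas, so I would structure the argument to isolate that single inequality and, as in Lemmas \ref{lemma:f-function}--\ref{lemma:c-function-equality}, relegate the routine algebraic verification (expanding \eqref{eq:empty-triad-theorem-eq-3}, separating the floor cases, and simplifying) to an appendix, citing it here.
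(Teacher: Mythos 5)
Your strategy is sound and, at its core, the same as the paper's: substitute the closed forms of $\mathcal{C}$ (\ref{eq:number-of-enforced-triads}) and $\mathcal{F}$ (\ref{eq:empty-triad-theorem-eq-3}), and control the one-step difference on the intervals where the floor expressions are constant. In fact your margin-of-one inequality $\mathcal{F}(n,m)-\mathcal{F}(n,m+1)\geq\mathcal{C}(n,m+1)-\mathcal{C}(n,m)+1$ treats the rounding more carefully than the paper does: Appendix \ref{sec:appendix:D} silently works with $\mathcal{C}+\mathcal{F}$ in place of $\mathcal{G}=\mathcal{C}+\max\{0,\lceil\mathcal{F}\rceil\}$, and monotonicity of $\lceil\cdot\rceil$ applied to a bare inequality $\mathcal{F}(n,m)-\mathcal{F}(n,m+1)>\mathcal{C}(n,m+1)-\mathcal{C}(n,m)$ would only give $\mathcal{G}(n,m)\geq\mathcal{G}(n,m+1)$, not strict decrease; your extra $+1$ is exactly what makes the passage to ceilings legitimate.

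The genuine gap is that this key inequality is never proven: you name it as the thing to establish and then defer ``the routine algebraic verification'' to an appendix that you do not supply. That verification is not routine --- it is the entire mathematical content of the lemma --- and your heuristic for why it should hold contains an inaccuracy that matters. On an interval where $r=\left\lfloor\frac{2m}{n}\right\rfloor$ is constant, $\mathcal{F}$ is \emph{linear} in $m$ (not a piecewise concave quadratic; the quadratic shape is in $r$), with backward difference $\mathcal{F}(n,m)-\mathcal{F}(n,m+1)=(n-2)-\frac{4}{3}r$, while the per-step increment of $\mathcal{C}$ is about $\left\lfloor\frac{m}{n}\right\rfloor\approx\frac{r}{2}$ (Appendix \ref{sec:appendix-B}). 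The needed inequality therefore amounts, roughly, to $(n-2)-\frac{4}{3}r\geq\frac{r}{2}+1$, i.e. $r\leq\frac{6(n-3)}{11}$; but the constraint $m<\mathcal{X}(n)$ only forces $r$ to be at most about $\frac{n}{2}-1$, and $\frac{n}{2}-1\leq\frac{6(n-3)}{11}$ fails for all $n<14$. So a crude slope-versus-increment comparison does not close the argument for small and moderate $n$: one must exploit the exact integer relationship between $\left\lfloor\frac{2m}{n}\right\rfloor$, $\left\lfloor\frac{m}{n}\right\rfloor$ and the bound $m\leq\mathcal{X}(n)$, treat separately the breakpoints where the floors jump (they contribute extra terms), and do this for even and odd $n$ --- precisely the eight-case analysis, with careful bounds on the interval parameter $s$, that fills the paper's Appendix \ref{sec:appendix:D}. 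Until that (or an equivalent) case analysis is actually carried out, what you have is a plan for a proof, not a proof.
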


\begin{proof}
Proof of the Lemma, composed of elementary but time consuming operations,
can be found in (\ref{sec:appendix:D}).
\end{proof}

For every fixed $n\geq3$ the function $\mathcal{H}$ determines the
maximal possible number of inconsistent triads in every \emph{gt-graph}. 

The aim of the theorem below is to confirm that, indeed, the proposed
\emph{dt-graph} (Proposition \ref{proposition:maximal-dt-graph-def})
is a \emph{maximal gt-graph. }
\begin{theorem}
\label{theorem:final-theorem}For every \emph{dt-graph} $G=(V_{1}\cup V_{2},E_{d_{1}}\cup E_{d_{2}},E_{u})$
with $n$ vertices where $(V_{1},E_{d_{1}})$ and $(V_{2},E_{d_{2}})$
are maximal \emph{t-graphs and} $\left|V_{1}\right|=\left\lfloor \frac{n}{2}\right\rfloor $
and $\left|V_{2}\right|=\left\lceil \frac{n}{2}\right\rceil $ and
$n>3$ it holds that: 
\begin{enumerate}
\item $\mathcal{X}(n)=m$ maximizes $\mathcal{H}(n,m)$, i.e. 
\begin{equation}
\mathcal{H}(n,\mathcal{X}(n))=\max_{0\leq m\leq\binom{n}{2}}\mathcal{H}(n,m)\label{eq:maintheorem:eq:1}
\end{equation}
\item $\mathcal{Y}(n)$ is a maximum of $\mathcal{H}(n,m)$
\begin{equation}
\mathcal{H}(n,\mathcal{X}(n))=\mathcal{Y}(n)\label{eq:maintheorem:eq:2}
\end{equation}
\end{enumerate}
\end{theorem}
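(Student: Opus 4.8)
The plan is to reduce both claims to the structural Lemmas \ref{lemma:f-function}--\ref{lemma:g-function} together with the defining identities $\mathcal{H}(n,m)=\binom{n}{3}-\mathcal{G}(n,m)$ and $\mathcal{G}(n,m)=\mathcal{C}(n,m)+\max\{0,\lceil\mathcal{F}(n,m)\rceil\}$. Since $\binom{n}{3}$ is independent of $m$, maximizing $\mathcal{H}(n,\cdot)$ is equivalent to minimizing $\mathcal{G}(n,\cdot)$, so the whole argument is phrased as showing that $\mathcal{G}(n,\cdot)$ attains its minimum at $m=\mathcal{X}(n)$.

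I would prove the second part first, as it is immediate. By Lemma \ref{lemma:f-function} we have $\mathcal{F}(n,\mathcal{X}(n))=0$, hence $\max\{0,\lceil\mathcal{F}(n,\mathcal{X}(n))\rceil\}=0$ and $\mathcal{G}(n,\mathcal{X}(n))=\mathcal{C}(n,\mathcal{X}(n))$. Consequently $\mathcal{H}(n,\mathcal{X}(n))=\binom{n}{3}-\mathcal{C}(n,\mathcal{X}(n))$, which is exactly $\mathcal{Y}(n)$ by Lemma \ref{lemma:c-function-equality}. This settles \ref{eq:maintheorem:eq:2}.

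For the first part I split the range of $m$ at $\mathcal{X}(n)$ and show $\mathcal{G}(n,m)\geq\mathcal{G}(n,\mathcal{X}(n))$ on each piece. On $1\leq m<\mathcal{X}(n)$, Lemma \ref{lemma:g-function} gives $\mathcal{G}(n,m)>\mathcal{G}(n,m+1)$, and telescoping these strict inequalities up to $\mathcal{X}(n)$ yields $\mathcal{G}(n,m)>\mathcal{G}(n,\mathcal{X}(n))$. The boundary case $m=0$ is handled directly: there $\mathcal{C}(n,0)=0$ and, substituting $\lfloor 2m/n\rfloor=0$ into \ref{eq:empty-triad-theorem-eq-3}, $\mathcal{F}(n,0)=\tfrac{1}{6}(n-2)(n-1)n=\binom{n}{3}$ (consistent with the fact that when $m=0$ every triad is of type $\textit{CT}_{0}$), so $\mathcal{G}(n,0)=\binom{n}{3}\geq\mathcal{C}(n,\mathcal{X}(n))=\mathcal{G}(n,\mathcal{X}(n))$. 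On $\mathcal{X}(n)<m\leq\binom{n}{2}$, the bound $\mathcal{F}(n,m)\leq0$ from Lemma \ref{lemma:f-function} forces $\mathcal{G}(n,m)=\mathcal{C}(n,m)$, and since $\mathcal{C}(n,\cdot)$ is non-decreasing (identically zero below $n$ and strictly increasing above, by Lemma \ref{lemma:c-function}) we obtain $\mathcal{C}(n,m)\geq\mathcal{C}(n,\mathcal{X}(n))=\mathcal{G}(n,\mathcal{X}(n))$. Combining the three regions shows $\mathcal{G}(n,\cdot)$ is minimized at $m=\mathcal{X}(n)$, which is precisely \ref{eq:maintheorem:eq:1}.

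The one point requiring care in this synthesis is that $\mathcal{X}(n)$ need not be the \emph{unique} minimizer of $\mathcal{G}$: when $\mathcal{X}(n)<n$ (as happens for small $n$, e.g. $\mathcal{X}(4)=2<4$), the function $\mathcal{C}(n,\cdot)$ is still zero just above $\mathcal{X}(n)$, so $\mathcal{G}$ exhibits a plateau rather than a strict minimum. The theorem, however, only asserts that $\mathcal{X}(n)$ \emph{attains} the maximum of $\mathcal{H}$, so the non-strict inequalities to the right of $\mathcal{X}(n)$ suffice and no uniqueness claim is needed. All the genuinely laborious computation lives in the four supporting lemmas, whose proofs are deferred to the appendices; the present theorem is merely their assembly, and the only obstacle is keeping the boundary cases $m=0$ and the plateau region straight.
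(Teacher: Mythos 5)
Your proposal is correct and takes essentially the same route as the paper's own proof: both reduce maximizing $\mathcal{H}(n,\cdot)$ to minimizing $\mathcal{G}(n,\cdot)$, use Lemma \ref{lemma:g-function} to telescope strict decreases left of $\mathcal{X}(n)$, use Lemmas \ref{lemma:f-function} and \ref{lemma:c-function} to show $\mathcal{G}(n,m)=\mathcal{C}(n,m)$ is non-decreasing right of $\mathcal{X}(n)$, and invoke Lemma \ref{lemma:c-function-equality} to identify the value with $\mathcal{Y}(n)$. If anything, your explicit handling of the $m=0$ endpoint (via $\mathcal{F}(n,0)=\binom{n}{3}$, $\mathcal{C}(n,0)=0$) and your remark about the plateau are slightly more careful than the paper, whose chain $\mathcal{G}(n,0)>\mathcal{G}(n,1)>\ldots$ applies Lemma \ref{lemma:g-function} at $m=0$ even though that lemma is stated only for $1\leq m<\mathcal{X}(n)$.
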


\begin{proofQED}
As (\ref{eq:hfun-def}) then the first claim of the theorem is equivalent
to 
\begin{equation}
\mathcal{G}(n,\mathcal{X}(n))=\min_{0\leq m\leq\binom{n}{2}}\mathcal{G}(n,m)\label{eq:maintheorem:eq:3}
\end{equation}
As (\ref{eq:gfun-def}) then the function $\mathcal{G}$ is the sum
of $\mathcal{C}(n,m)$ and $\max\{0,\left\lceil \mathcal{F}(n,m)\right\rceil \}$.
From (Lemma \ref{lemma:c-function}) we know that $\mathcal{C}$ does
not decrease with respect to $m$. On the other hand, due to the (Lemma
\ref{lemma:f-function}) $\mathcal{F}(n,\mathcal{X}(n)+k)\leq0$ for
every $0<k\leq\binom{n}{2}-\mathcal{X}(n)$, which translates to the
observation that for every $m\geq\mathcal{X}(n)$ it holds that $\max\{0,\left\lceil \mathcal{F}(n,m)\right\rceil \}=0$.
Hence, for every $m\geq\mathcal{X}(n)$ the function $\mathcal{G}$
does not decrease and boils down to $\mathcal{G}(n,m)=\mathcal{C}(n,m)$.
In other words 
\begin{equation}
\mathcal{G}(n,\mathcal{X}(n))\leq\mathcal{G}(n,\mathcal{X}(n)+1)\leq\ldots\leq\mathcal{G}(n,\binom{n}{2})\label{eq:maintheorem:eq:4}
\end{equation}
This fact, coupled with (Lemma \ref{lemma:g-function}) i.e. 
\begin{equation}
\mathcal{G}(n,0)>\mathcal{G}(n,1)>\ldots>\mathcal{G}(n,\mathcal{X}(n))\label{eq:maintheorem:eq:5}
\end{equation}
implies that indeed 
\begin{equation}
\mathcal{G}(n,\mathcal{X}(n))=\min_{0\leq m\leq\binom{n}{2}}\mathcal{G}(n,m)\label{eq:maintheorem:eq:6}
\end{equation}
which completes the proof of the first claim (\ref{eq:maintheorem:eq:1})
of the Theorem \ref{theorem:final-theorem}. To prove the second claim
it is enough to recall that for every $m\geq\mathcal{X}(n)$ it holds
that $\mathcal{G}(n,m)=\mathcal{C}(n,m)$. Thus, in particular

\begin{equation}
\mathcal{H}(n,\mathcal{X}(n))=\binom{n}{3}-\mathcal{C}(n,\mathcal{X}(n))\label{eq:maintheorem:eq:7}
\end{equation}
which satisfies the second claim (\ref{eq:maintheorem:eq:2}) of the
Theorem \ref{theorem:final-theorem}, and which thereby confirms the
Proposition \ref{proposition:maximal-dt-graph-def}. 
\end{proofQED}

\section{\label{sec:Inconsistency-indexes-in}Inconsistency indices in paired
comparisons with ties}

As shown in (Section \ref{sec:Model-of-inconsistency}) the inconsistency
index (called there \emph{``coefficient of consistence''}) defined
by \emph{Kendall} and \emph{Babington Smith} \cite[p. 330]{Kendall1940otmo}
cannot be used in the context of ordinal pairwise comparisons with
ties. Thus, in (\ref{eq:zeta-index}) $\mathcal{I}(n)$ needs to be
replaced by $\mathcal{Y}(n)$ - the maximal number of triads in the
case when ties are allowed. The generalized inconsistency index that
covers pairwise comparisons with ties finally takes the form 
\begin{equation}
\zeta_{g}(M)=1-\frac{\left|G_{M}\right|_{i}}{\mathcal{Y}(n)}\label{eq:generalized-kendall-index-1}
\end{equation}
where $M$ is an ordinal \emph{PC} matrix with ties of the size $n\times n$
(Def. \ref{def-the-ordinal-pc-matrix-with-ties}) and G is a gt-graph
corresponding to $M$. The formula (\ref{eq:generalized-kendall-index-1}),
although concise, may not be handy in practice. This is due to the
use in (\ref{eq:yfun-def}) of the floor $\left\lfloor x\right\rfloor $
and ceiling $\left\lceil x\right\rceil $ operations as well as binomial
symbol $\binom{x}{y}$. For this reason, let us simplify (\ref{eq:yfun-def})
depending on whether $n$ and $\nicefrac{n}{2}$ are odd or even.
There are four cases that need to be considered: 
\begin{equation}
\mathcal{Y}(n)=\begin{cases}
\frac{13n^{3}-24n^{2}-16n}{96} & \text{when}\,\,\,n=4q\,\,\,\text{for}\,\,\,q=1,2,3,\ldots\\
\frac{13n^{3}-24n^{2}-19n+30}{96} & \text{when}\,\,\,n=4q+1\,\,\,\text{for}\,\,\,q=1,2,3,\ldots\\
\frac{13n^{3}-24n^{2}-4n}{96} & \text{when}\,\,\,n=4q+2\,\,\,\text{for}\,\,\,q=1,2,3,\ldots\\
\frac{13n^{3}-24n^{2}-19n+18}{96} & \text{when}\,\,\,n=4q+3\,\,\,\text{for}\,\,\,q=0,1,2,\ldots
\end{cases}\label{eq:generalized-kendall-index-2}
\end{equation}

For example, to compute the inconsistency index for the ordinal PC
matrix $M$ (\ref{eq:A_example}) (see Fig. \ref{fig:gt-graph-example})
first it is necessary to compute the number of inconsistent triads
in $M$. Since (\ref{eq:A_example}) has five inconsistent triads:
$(A_{1},A_{2},A_{3})$, $(A_{1},A_{2},A_{5})$, $(A_{1},A_{3},A_{5})$,
$(A_{1},A_{4},A_{5})$ and $(A_{3},A_{4},A_{5})$ then $\left|T_{M}\right|=5$.
On the other hand, $5=4\cdot1+1$ hence, the value $\mathcal{Y}(5)$
is obtained by replacing $n$ with $5$ in the expression $\nicefrac{1}{96}\cdot\left(13n^{3}-24n^{2}-19n+30\right)$,
i.e. $\mathcal{Y}(5)=10$. In other words, in the considered \emph{gt-graph}
(Fig. \ref{fig:gt-graph-example}) five triads out of ten possible
ones are inconsistent. The generalized consistency index for $M$
takes the form:
\begin{equation}
\zeta_{g}(M)=1-\frac{5}{10}=\frac{1}{2}\label{eq:generalized-kendall-index-3}
\end{equation}
Hence the inconsistency level for $M$ (\ref{eq:A_example}) is $50\%$. 

As every \emph{t-graph} is also a \emph{gt-graph} but not reversely
(see Def. \ref{def:t-graph-definition} and \ref{def:gt-graph-definition})
then the generalized inconsistency index $\zeta_{g}$ can also be
used to estimate the inconsistency level of paired comparisons without
ties. Conversely it is not possible. 

Both inconsistency indices $\zeta$ and $\zeta_{g}$ compare the number
of inconsistent triads in $M$ with the maximal number of such triads
in a matrix of the same size as $M$. Hence, for the maximally inconsistent
matrix the index functions will return $1$, whilst the inconsistency
index for a fully consistent matrix is $0$. The maximal value of
the inconsistency index, of course, does not automatically imply that
all the triads in the given matrix are inconsistent. To capture this
phenomenon, let us define the \emph{absolute inconsistency index}
$\eta$ as a ratio of the number of inconsistent triads to the number
of all possible triads in the $n\times n$ matrix $M$. 
\begin{equation}
\eta(M)\stackrel{\textit{df}}{=}\frac{\left|G_{M}\right|_{i}}{\binom{n}{3}}\label{eq:generalized-kendall-index-4}
\end{equation}
Of course, $0\leq\eta(M)\leq1$. If, for example, $\eta(M)=0.4$ then
it would mean that $M$ contains $60\%$ consistent triads and $40\%$
inconsistent triads. The maximal value that $\eta(M)$ may take is
limited by $\mathcal{I}(n)/\binom{n}{3}$ and $\mathcal{Y}(n)/\binom{n}{3}$
for \emph{t-graphs} and \emph{gt-graphs} correspondingly. Thus, for
the larger matrices $\eta(M)$ may never reach $\text{1}$. Let us
consider the first few values of $\mathcal{I}(n)/\binom{n}{3}$ and
$\mathcal{Y}(n)/\binom{n}{3}$ (Fig. \ref{fig:iy-percentage}).

\begin{figure}
\begin{centering}
\includegraphics{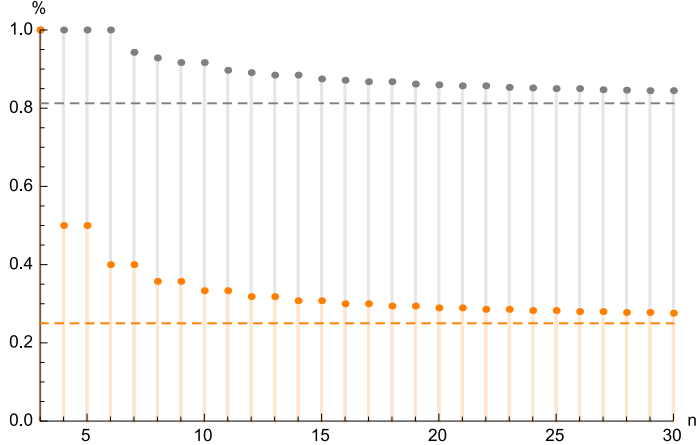}
\par\end{centering}
\caption{The maximal values of $\eta(M)$ for \emph{t-graph} and \emph{gt-graph}}

\label{fig:iy-percentage}
\end{figure}
We can see that for small graphs the percentage of inconsistent triads
is higher than for the larger graphs. In particular, for $n=3,\ldots,6$
there are such \emph{gt-graphs} that have all triads inconsistent.
However, there is only one \emph{t-graph} which has all triads inconsistent.
It is just a single triad. Although the percentage of inconsistent
triads for both \emph{t-graph} and \emph{gt-graph} decrease, they
seem to never drop below certain values. It is easy to compute that\footnote{Expression $\lim_{n\rightarrow\infty}\mathcal{I}(n)/\binom{n}{3}=0.25$
means that both $\lim_{n\rightarrow\infty}\left(\frac{n^{3}-n}{24}\right)/\binom{n}{3}=\lim_{n\rightarrow\infty}\left(\frac{n^{3}-4n}{24}\right)/\binom{n}{3}=0.25$.
Similarly $\lim_{n\rightarrow\infty}\frac{\mathcal{Y}(n)}{\binom{n}{3}}=0.8125$
means that all four limits (see \ref{eq:generalized-kendall-index-2})
equal $0.8125$. }:
\begin{equation}
\lim_{n\rightarrow\infty}\frac{\mathcal{I}(n)}{\binom{n}{3}}=0.25\,\,\,\,\,\text{and}\,\,\,\,\,\lim_{n\rightarrow\infty}\frac{\mathcal{Y}(n)}{\binom{n}{3}}=0.8125\label{eq:generalized-kendall-index-5}
\end{equation}
In other words, although in the larger \emph{t-graphs} $(n>3)$ and
\emph{gt-graphs} $(n>6)$, there must always be consistent triads.
Hence, it is impossible to create a completely inconsistent set of
paired comparisons when the alternatives are more than $3$ (without
ties) and $6$ (when ties are allowed). As we can see very often,
consistent triads must exist. However, it should be remembered that
the ``guaranteed'' number of consistent triads is limited. The expression
(\ref{eq:generalized-kendall-index-5}) implies that at most $75\%$
of triads are ``guaranteed'' to be consistent without ties, and
at most $18.75\%$ of triads are ``guaranteed'' to be consistent
when ties are allowed. 

Figuratively speaking, the possibility of a tie allows us to be much
more inconsistent. However, we rarely have a chance to be completely
inconsistent - only when there are \emph{``sufficiently few''} alternatives.
Fortunately, there is no limit to the number of consistent triads
in a \emph{gt-graph}. Hence, we can be as consistent (and as frequently)
in our views as we want.

\section{\label{sec:Discussion-and-remarks}Discussion and remarks}

To calculate the inconsistency index $\zeta$ or the generalized inconsistency
index $\zeta_{g}$ for some ordinal \emph{PC} $M$ $n\times n$ matrix
we need to determine the number of inconsistent triads in $M$. The
most straightforward method is to consider every single triad and
decide whether it is consistent or not. Since in every complete set
of paired comparisons for $n$ alternatives there are $\binom{n}{3}=\frac{n(n-1)(n-2)}{3}$
different triads, then the running time of such a procedure is $O(n^{3})$.
For \emph{t-graphs}, however, there is a faster way to determine the
number of inconsistent triads in a graph. As mentioned earlier, (\ref{eq:no-of-inconsistent-triads-in-tournament})
denotes the number of inconsistent triads $\left|T\right|_{i}$ in
some \emph{t-graph} $T=(V,E_{d})$. To compute (\ref{eq:no-of-inconsistent-triads-in-tournament})
$\left|T\right|_{i}$ we need to visit every vertex $c\in V$ and
determine its input degree. Computing $\text{deg}_{in}(c)$ for every
$c\in V$ requires visiting every edge $(c_{i},c_{j})\in E_{d}$ twice.
The first time when calculating $\text{deg}_{in}(c_{i})$, the second
time when $\text{deg}_{in}(c_{j})$ is calculated. Thus, determining
$\text{deg}_{in}(c_{1}),\ldots,\text{deg}_{in}(c_{n})$ requires $2\left|E_{d}\right|$
operations. As $\left|E_{d}\right|=\frac{n(n-1)}{2}$ then the actual
running time of computation for (\ref{eq:no-of-inconsistent-triads-in-tournament})
is $O(n(n-1))=O(n^{2})$. For this reason the inconsistency index
$\zeta$ can be determined faster than $\zeta_{g}$. 

Looking at the different types of triads occurring in a \emph{gt-graph}
(Fig. \ref{fig:triads-specific-for-ties}), one may notice that a
triad not covered by any directed edge is consistent, whilst a triad
covered by one directed edge is always inconsistent (see Def. \ref{def:triad-and-triad-covering}).
Therefore the question arises as to whether it is possible to cover
all triads by one directed edge. If not, what is the minimal number
of directed edges covering all triads? Let us try to formally address
this question. Denote the set of directed edges of some \emph{gt-graph}
by $E_{d}=\{(c_{1},c_{2}),(c_{1},c_{3}),\ldots,(c_{n-1},c_{n})\}$
and the set of triads by $\mathcal{T}=\{\{c_{1},c_{2},c_{3}\},\{c_{1},c_{2},c_{4}\},\ldots,\{c_{n-2},c_{n-1},c_{n}\}\}$.
Of course, $\left|E_{d}\right|=\binom{n}{2}$ and $\left|\mathcal{T}\right|=\binom{n}{3}$.
Then, let $B=(V,E)$ be a bipartite graph such that $V=E_{d}\cup\mathcal{T}$
and $E=\{(e,t)\,\,|\,\,(e,t)\in E_{d}\times\mathcal{T}\,\,\text{and}\,\,e\,\,\text{covers}\,\,t\}$.
Hence, we would like to select the minimal subset of edges from $E_{d}$
whose elements cover (i.e. are connected to) every triad in $\mathcal{T}$. 

Let us consider the problem for $n=5$ (Fig. \ref{fig:set-triad-cover-problem-a}). 

\begin{figure}[h]
\begin{centering}
\subfloat[Bipartite graph corresponding to triads cover problem where~$n=5$]{\begin{centering}
\includegraphics[scale=0.8]{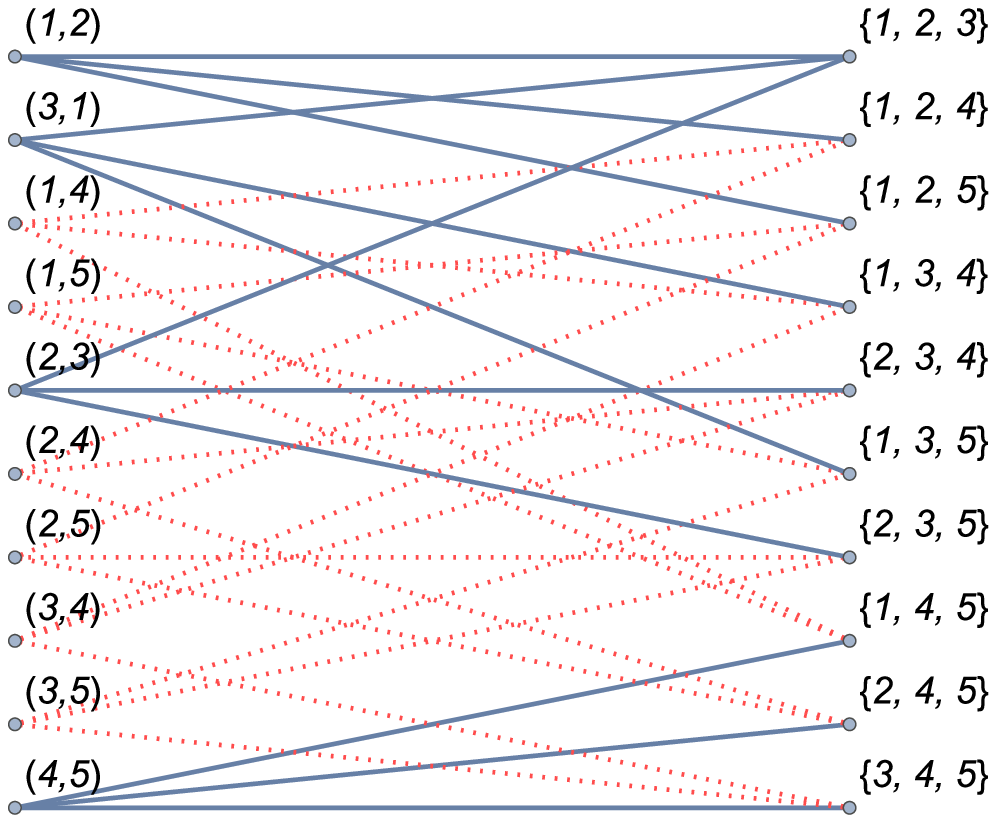}
\par\end{centering}
\label{fig:set-triad-cover-problem-a}

}~~\subfloat[Maximal dt-graph with $5$ vertices (undirected edges are dotted) ]{\begin{centering}
\includegraphics[scale=0.8]{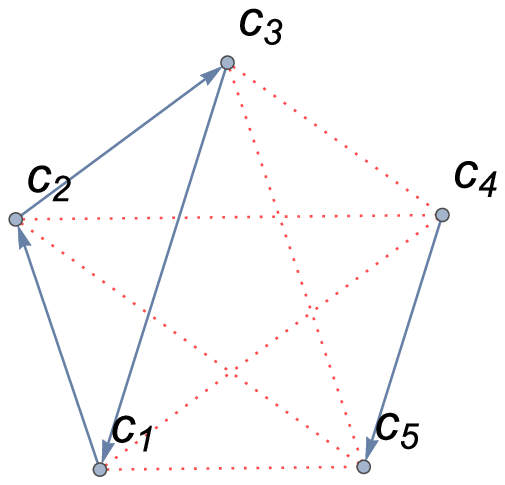}
\par\end{centering}
\label{fig:set-triad-cover-problem-b}}
\par\end{centering}
\caption{Triads cover problem}
\label{fig:set-triad-cover-problem}
\end{figure}

In such a case $E_{d}=\{(1,2)$, $(1,3)$, $(1,4)$, $(1,5)$, $(2,3)$,
$(2,4)$, $(2,5)$, $(3,4)$, $(3,5)$, $(4,5)\}$ and $\mathcal{T}=\{\{1,2,3\}$,
$\{1,2,4\}$, $\{1,2,5\}$, $\{1,3,4\}$, $\{1,3,5\}$, $\{2,3,4\}$,
$\{2,3,5\}$, $\{1,4,5\}$, $\{2,4,5\}$, $\{3,4,5\}\}$. As every
edge covers three different triads we may form the set $S=\{\{t_{i},t_{j},t_{k}\}\,|$
$\,t,t_{j},t_{k}\in\mathcal{T},$ $\exists e\in E_{d}\,\text{that covers}\,\,t_{i},t_{j},t_{k}\}$.
For example, a tripleton $\{\{1,2,3\},\{1,2,4\},\{1,2,5\}\}$ is an
element of $S$ as all its elements are covered by edges $(1,2)$
etc. Thus, the question about the minimal subset of $\left|E_{d}\right|$
whose elements cover all the elements in $\left|\mathcal{T}\right|$,
can be reformulated as follows: what is the minimal subset of $S$
such that the union of its elements equals $\mathcal{T}$?

In general, we can not provide a satisfactory answer to such a question.
The problem we formulate is called a \emph{set cover problem}\footnote{\emph{Wikipedia may serve as a quick reference: \url{https://en.wikipedia.org/wiki/Set_cover_problem}}}
and is one of \emph{Karp's} 21 \emph{NP-complete} problems formulated
in 1972 \cite{Karp1972racp}. Fortunately, we are not dealing with
a \emph{set cover problem} as such, but with its special instance
that can be called \emph{a ``triads cover problem}''. In the latter
case, a \emph{maximal dt-graph} comes to the rescue (\ref{proposition:maximal-dt-graph-def}).
The number of directed edges in the \emph{maximal} \emph{dt-graph}
is $\mathcal{X}(n)$. Due to (Lemma \ref{lemma:f-function}) we know
that every \emph{gt-graph} that has less than $\mathcal{X}(n)$ directed
edges must contain at least one triad of the type $\textit{CT}_{0}$.
On the other hand, any maximal \emph{dt-graph} does not contain uncovered
triads (Lemma \ref{lemma:None-of-dt-graph-has-uncovered-triad}).
This means that a \emph{maximal dt-graph} is a minimal graph covering
all triads by directed edges. 

Let us consider the maximal \emph{dt-graph} for $n=5$. According
to (Proposition \ref{proposition:maximal-dt-graph-def}), such a graph
should be composed of two maximal subgraphs having $\left\lfloor \frac{5}{2}\right\rfloor =3$
and $\left\lceil \frac{5}{2}\right\rceil =2$ vertices. An instance
of the first subgraph can be a triad $(c_{1},c_{2}),\,(c_{2},c_{3})$
and $(c_{3},c_{1})$ whilst the second subgraph is just a single edge
$(c_{4},c_{5})$. As the maximal \emph{dt-graph }with $5$ vertices
provides a minimal edge covering of triads in $5$-clique then the
minimal subset of $S$ that covers the entire $\mathcal{T}$ is, for
example, $\{\{1,2,3\}$, $\{1,2,4\}$, $\{1,2,5\}\}$, $\{\{1,2,3\}$,
$\{1,3,4\}$, $\{1,3,5\}\}$ $\{\{1,2,3\}$, $\{2,3,4\}$, $\{2,3,5\}\}$
and $\{\{1,4,4\}$, $\{2,4,5\}$, $\{3,4,5\}\}$ (Fig. \ref{fig:set-triad-cover-problem}). 

\section{\label{sec:Summary}Summary}

In the presented article, the inconsistency index proposed by \emph{Kendall
and Babington Smith} \cite{Kendall1940otmo} has been extended to
cover pairwise comparisons with ties. For this purpose, the most inconsistent
sets of pairwise comparisons with and without ties have been analyzed.
To model pairwise comparisons with ties a generalized tournament graph
has been defined. An additional \emph{absolute consistency index}
$\eta$ for pairwise comparisons with and without ties has also been
proposed. The relationship between the maximally inconsistent set
of pairwise comparisons with ties and the set cover problem has also
been shown. 

\section*{Acknowledgements }

I would like to thank Prof. Andrzej Bielecki and Dr. Hab. Adam S\k{e}dziwy
for their insightful comments, corrections and reading of the first
version of this work. Special thanks are due to Ian Corkill for his
editorial help. The research is supported by AGH University of Science
and Technology, contract no.: 11.11.120.859. 

\section*{Literature}

\bibliographystyle{plain}
\bibliography{papers_biblio_reviewed}

\appendix
\renewcommand*{\appendixname}{Appendix }

\section{\label{sec:Appendix-A}Proof of Lemma \ref{lemma:f-function}}

\noindent\textsc{Thesis.} 

For every $n\in\mathbb{N}_{+},n\geq3$ and $k\in\mathbb{N}_{+}$ it
holds that: 
\[
\mathcal{F}(n,\mathcal{X}(n))=0\tag{\ref{eq:lemma:hfun_bound:1}}
\]

\[
\mathcal{F}(n,\mathcal{X}(n)-k)\geq1,\,\,\,\text{where}\,\,\,0<k\leq\mathcal{X}(n)\tag{\ref{eq:lemma:hfun_bound:2}}
\]

\[
\mathcal{F}(n,\mathcal{X}(n)+k)\leq0,\,\,\,\textit{where}\,\,\,0<k\leq\binom{n}{2}-\mathcal{X}(n)\tag{\ref{eq:lemma:hfun_bound:3}}
\]

\smallskip\textsc{Proof. Equation (\ref{eq:lemma:hfun_bound:1}), part 1.}\smallskip 

Let $n$ be even i.e. $n=2q$ where $q\in\mathbb{N}_{+}$. Thus, let
us insert to (\ref{eq:empty-triad-theorem-eq-3}) as $n$ the value
$2q$ and as $m$ the value $\mathcal{X}(2q)$. After a series of
elementary transformations applied to (\ref{eq:empty-triad-theorem-eq-2})
we obtain: 
\begin{align}
\mathcal{F}(2q,\mathcal{X}(2q)) & =\frac{1}{3}(-2)q\left(\lfloor q\rfloor^{2}+(1-2q)\lfloor q\rfloor+(q-1)q\right)\label{eq:lemma:hfun_bound:4}
\end{align}

Since $q\in\mathbb{N}_{+}$ then 
\begin{equation}
\lfloor q\rfloor=q\label{eq:lemma:hfun_bound:5}
\end{equation}

Thus, 
\begin{align}
\mathcal{F}(2q,\mathcal{X}(2q)) & =\frac{1}{3}(-2)q\left(q^{2}+(q-1)q+(1-2q)q\right)\label{eq:lemma:hfun_bound:6}
\end{align}

Which after reduction leads to 
\begin{equation}
\mathcal{F}(2q,\mathcal{X}(2q))=0\label{eq:lemma:hfun_bound:7}
\end{equation}

\smallskip\textsc{Proof. Equation (\ref{eq:lemma:hfun_bound:1}), part 2.}\smallskip 

Let $n$ be odd i.e. $n=2q+1$ where $q\in\mathbb{N}_{+}$. Similarly,
let us replace n in (\ref{eq:empty-triad-theorem-eq-3}) by $2q+1$
and $m$ by $\mathcal{X}(2q+1)$. After elementary transformations
we obtain: 

\begin{align}
\mathcal{F}(2q+1,\mathcal{X}(2q+1))= & -\frac{1}{3}(2q+1)\left\lfloor \frac{2q^{2}}{2q+1}\right\rfloor ^{2}\nonumber \\
 & +\frac{1}{3}\left(4q^{2}-2q-1\right)\left\lfloor \frac{2q^{2}}{2q+1}\right\rfloor \nonumber \\
 & +\frac{1}{3}\left(-2q^{2}+3q-1\right)q\label{eq:lemma:hfun_bound:10}
\end{align}

Since $q\in\mathbb{N}_{+}$, we can bound $2q^{2}/\left(2q+1\right)$
from above

\begin{equation}
\frac{2q^{2}}{2q+1}<\frac{2q^{2}}{2q}=q\label{eq:lemma:hfun_bound:11}
\end{equation}

and below

\begin{equation}
q-1=\frac{2\left(q-1\right)^{2}}{2\left(q-1\right)}<\frac{2\left(q-1\right)^{2}}{2q+1}=\frac{2q^{2}-2q+2}{2q+1}\leq\frac{2q^{2}}{2q+1}\label{eq:lemma:hfun_bound:12}
\end{equation}

Therefore, when $q$ is a positive integer it is true that

\begin{equation}
\left\lfloor \frac{2q^{2}}{2q+1}\right\rfloor =(q-1)\label{eq:lemma:hfun_bound:13}
\end{equation}

By applying (\ref{eq:lemma:hfun_bound:13}) to (\ref{eq:lemma:hfun_bound:10})
we obtain 
\begin{align}
\mathcal{F}(2q+1,\mathcal{X}(2q+1)) & =\frac{1}{3}\left(4q^{2}-2q-1\right)(q-1)\nonumber \\
 & +\frac{1}{3}q\left(-2q^{2}+3q-1\right)\nonumber \\
 & -\frac{1}{3}(2q+1)\left(q-1\right)^{2}\label{eq:lemma:hfun_bound:14}
\end{align}

Then, after making further transformations it is easy to verify that:
\begin{equation}
\mathcal{F}(2q+1,\mathcal{X}(2q+1))=0\label{eq:lemma:hfun_bound:15}
\end{equation}

which completes the proof of (\ref{eq:lemma:hfun_bound:1}).

\smallskip\textsc{Proof. Equation (\ref{eq:lemma:hfun_bound:2}), part 1.}\smallskip 

Let $n$ be even i.e. $n=2q$ where $q\in\mathbb{N}_{+}$. Thus, to
prove that $\mathcal{F}(n,\mathcal{X}(n)-k)$ is greater than $0$
it is enough to show that for every $q\geq2$ and $1\leq k<q(q-1)$
it holds that $\mathcal{F}(n,\mathcal{X}(n)-k)>1$. Thus, let us insert
to (\ref{eq:empty-triad-theorem-eq-3}) as $n$ the value $2q$.
After a series of elementary transformations applied to (\ref{eq:empty-triad-theorem-eq-2})
we obtain: 

\begin{equation}
\mathcal{F}(2q,\mathcal{X}(2q)-k)=\frac{2}{3}\left(-q\left\lceil \frac{k}{q}\right\rceil ^{2}+(2k+q)\left\lceil \frac{k}{q}\right\rceil +k(q-1)\right)\label{eq:lemma:hfun_bound:16}
\end{equation}

Let us observe that for the positive integer $p=1,2,\ldots$ if $p\cdot q\leq k<(p+1)q-1$
then $\left\lceil \frac{k}{q}\right\rceil =p$. In order to analyze
$\mathcal{F}$ let us replace $\left\lceil \frac{k}{q}\right\rceil $
by $p$ and define $h$ such that 
\begin{equation}
h(q,k)=\frac{2}{3}\left(p(2k+q)+k(q-1)-qp^{2}\right)\label{eq:lemma:hfun_bound:17}
\end{equation}
where $p\cdot q\leq k<(p+1)q-1$ for every $p=1,2,\ldots,q-2$. Of
course, when $p\cdot q\leq k<(p+1)q-1$ it holds that 
\begin{equation}
\mathcal{F}(2q,\mathcal{X}(2q)-k)=h(q,k)\label{eq:lemma:hfun_bound:17a}
\end{equation}

As $h$ is linear with respect to $k$ then in order to check whether
$h(k)>0$ it is enough to check whether $h$ is greater than $0$
at both ends of the considered interval. So,
\begin{equation}
h(q,p\cdot q)=\frac{2}{3}pq(p+q)\label{eq:lemma:hfun_bound:18}
\end{equation}

and 
\begin{equation}
h(q,(p+1)q-1)=\frac{1}{3}\left(2p^{2}q+2pq^{2}+4pq-4p+2q^{2}-4q+2\right)\label{eq:lemma:hfun_bound:19}
\end{equation}

Since for $p,q=1,2,\ldots$ it holds that $4pq\geq4p$ and $2p^{2}q+2pq^{2}\geq4q$
then 
\begin{equation}
h(q,(p+1)q-1)\geq\frac{1}{3}\left(2q^{2}+2\right)\geq\frac{1}{3}\left(2+2\right)>1\label{eq:lemma:hfun_bound:20}
\end{equation}

Thus, for every $p\cdot q\leq k<(p+1)q-1$ where $p=1,2,\ldots,q-2$,
$h(k)>0$. We just need to check $h$ for $k=q(q-1)$. In such a case
$\left\lceil \frac{k}{q}\right\rceil =q-1$. Thus $h(q(q-1))$ takes
the form:

\begin{equation}
h(q,q(q-1))=\frac{2}{3}q\left(2q^{2}-3q+1\right)\label{eq:lemma:hfun_bound:21}
\end{equation}

As $q\geq2$ then it is easy to verify that $h(q,q(q-1))>0$. 

Since $h(q,k)>0$ for every $p=1,2,\ldots,q-2$, where $p\cdot q\leq k<(p+1)q-1$
and for $k=q(q-1)$ then also $\mathcal{F}(2q,\mathcal{X}(2q)-k)>0$
for $n=2q$ and $1\leq k<q(q-1)$, which completes the first part
of the proof. 

\smallskip\textsc{Proof. Equation (\ref{eq:lemma:hfun_bound:2}), part 2.}\smallskip 

Let $n$ be even i.e. $n=2q+1$ where $q\in\mathbb{N}_{+}$. Thus,
let us insert to (\ref{eq:empty-triad-theorem-eq-3}) as $n$ the
value $2q+1$ and $\mathcal{X}(2q+1)-k$, where this time $1\leq k\leq q^{2}$
(see \ref{eq:no_of_dir_edges_inmax_dtgraph:3}). After a series of
elementary transformations applied to (\ref{eq:empty-triad-theorem-eq-2})
we obtain:

\begin{align}
\mathcal{F}(n,\mathcal{X}(n)-k)= & \frac{1}{3}\left((4k+2q+1)\left\lceil \frac{2\left(k-q^{2}\right)}{2q+1}\right\rceil +4q^{2}\left\lfloor \frac{2\left(q^{2}-k\right)}{2q+1}\right\rfloor -\right.\nonumber \\
 & \left.(2q+1)\left\lfloor \frac{2\left(q^{2}-k\right)}{2q+1}\right\rfloor ^{2}+(2q-1)\left(3k-q^{2}+q\right)\right)\label{eq:lemma:hfun_bound:22}
\end{align}

Since for every $x\in\mathbb{R}$ it holds \footnote{A quick reference is \url{https://en.wikipedia.org/wiki/Floor_and_ceiling_functions}}
\cite{Graham1994cm} that $-\left\lceil x\right\rceil =\left\lfloor -x\right\rfloor $,
and $\mathcal{X}(n)=\mathcal{X}(2q+1)=q^{2}$ then

\begin{align}
\mathcal{F}(2q+1,q^{2}-k)= & \frac{1}{3}\left(-(4k+2q+1)\left\lfloor \frac{2\left(q^{2}-k\right)}{2q+1}\right\rfloor +4q^{2}\left\lfloor \frac{2\left(q^{2}-k\right)}{2q+1}\right\rfloor -\right.\nonumber \\
 & \left.(2q+1)\left\lfloor \frac{2\left(q^{2}-k\right)}{2q+1}\right\rfloor ^{2}+(2q-1)\left(3k-q^{2}+q\right)\right)\label{eq:lemma:hfun_bound:23}
\end{align}

It is easy to observe the relationship between $\left\lfloor \frac{2\left(q^{2}-k\right)}{2q+1}\right\rfloor $
and $k$ is:

$\left\lfloor \frac{2\left(q^{2}-k\right)}{2q+1}\right\rfloor =0$
if and only if $0\leq2\left(q^{2}-k\right)<2q+1$, in other words,
we require that $q^{2}-q-\frac{1}{2}\leq k<q^{2}$ 

$\left\lfloor \frac{2\left(q^{2}-k\right)}{2q+1}\right\rfloor =1$
if and only if $2q+1\leq2\left(q^{2}-k\right)<2(2q+1)$ which translates
to the interval: $\frac{1}{2}\left(2q^{2}-2\left(2q+1\right)\right)\leq k<\frac{1}{2}\left(2q^{2}-1\left(2q+1\right)\right)$

$\left\lfloor \frac{2\left(q^{2}-k\right)}{2q+1}\right\rfloor =2$
if and only if $2(2q+1)\leq2\left(q^{2}-k\right)<3(2q+1)$, hence
$\frac{1}{2}\left(2q^{2}-3\left(2q+1\right)\right)\leq k<\frac{1}{2}\left(2q^{2}-2\left(2q+1\right)\right)$

and in general, $r\overset{\textit{df}}{=}\left\lfloor \frac{2\left(q^{2}-k\right)}{2q+1}\right\rfloor $
if and only if $(r-1)(2q+1)\leq2\left(q^{2}-k\right)<r(2q+1)$, which
translates to the interval for $k$: $\frac{1}{2}\left(2q^{2}-r\left(2q+1\right)\right)\leq k<\frac{1}{2}\left(2q^{2}-(r-1)\left(2q+1\right)\right)$.

Thus, instead of analyzing $\mathcal{F}$ with respect to $k$ over
the whole domain i.e. $1\leq k\leq q^{2}$ and $q\geq2$ we can analyze
it in the subsequent intervals, in which the value $\left\lfloor \frac{2\left(q^{2}-k\right)}{2q+1}\right\rfloor $
is known and fixed. 

Let us introduce the auxiliary function $h:$
\begin{equation}
h(q,k,r)\overset{\textit{df}}{=}\mathcal{F}(2q+1,q^{2}-k)\label{eq:lemma:hfun_bound:24}
\end{equation}
 defined for $k$ such that $\frac{1}{2}\left(2q^{2}-r\left(2q+1\right)\right)\leq k<\frac{1}{2}\left(2q^{2}-(r-1)\left(2q+1\right)\right)$.
Hence, 

\begin{equation}
h(q,k,r)=\frac{1}{3}\left(-(4k+2q+1)r+4q^{2}r-(2q+1)r^{2}+(2q-1)\left(3k-q^{2}+q\right)\right)\label{eq:lemma:hfun_bound:25}
\end{equation}

Moreover, $r$ is the highest when $k$ is $1$. Thus, due to (\ref{eq:lemma:hfun_bound:13})
it holds that $\left\lfloor \frac{2\left(q^{2}-1\right)}{2q+1}\right\rfloor \leq\left\lfloor \frac{2q^{2}}{2q+1}\right\rfloor =q-1$.
Therefore, we know that $r\leq q-1$. Hence, instead of showing that
$\mathcal{F}(2q+1,q^{2}-k)>1$ for every $0\leq k\leq q^{2}$, we
prove that $h(q,k,r)>1$ when $\frac{1}{2}\left(2q^{2}-r\left(2q+1\right)\right)\leq k<\frac{1}{2}\left(2q^{2}-(r-1)\left(2q+1\right)\right)$
for every $0\leq r\leq q-1$. 

Let us observe that $h(q,k,r)$ is a decreasing function with respect
to $k$. That is because
\begin{equation}
h(q,k,r)-h(q,k-1,r)=2q-\frac{4r}{3}+1\label{eq:lemma:hfun_bound:27}
\end{equation}

where $r\leq q-1$. In particular, it is easy to verify that always
$2q+1>\frac{4r}{3}$ for $r\leq q-1$. 

The above equalities justify the following estimation: 
\begin{equation}
h(q,k,r)>h(q,k-1,r)>\ldots>h(q,\frac{1}{2}\left(2q^{2}-r\left(2q+1\right)\right),r)\label{eq:lemma:hfun_bound:28}
\end{equation}

Thus, to prove that $h(q,k,r)>0$ for all admissible values of $q,k,r$
we need to check whether $h(q,\frac{1}{2}\left(2q^{2}-r\left(2q+1\right)\right),r)>0$
for $0\leq r\leq q-1$. 

So, applying the lower bound for $k$, i.e. $k=\frac{1}{2}\left(2q^{2}-r\left(2q+1\right)\right)$
to (\ref{eq:lemma:hfun_bound:25}) we obtain
\begin{equation}
h(q,\frac{1}{2}\left(2q^{2}-r\left(2q+1\right)\right),r)=\frac{1}{6}(2q+1)\left(4q^{2}-6qr-2q+2r^{2}+r\right)\label{eq:lemma:hfun_bound:29}
\end{equation}

Let us denote $h_{2}(q,r)\overset{\textit{df}}{=}h(q,\frac{1}{2}\left(2q^{2}-r\left(2q+1\right)\right),r)$.
It is easy to observe that $h_{2}$ is a parabola with respect to
$r$. Since $\frac{\partial^{2}h_{2}}{\partial r^{2}}=\frac{2}{3}(2q+1)$
is greater than $0$ for $q\geq2$, thus $h_{2}(q,r)$ has the minimum
with respect to $r$ when $\frac{\partial h_{2}}{\partial r}=0$.
I.e. 
\begin{equation}
\frac{\partial h_{2}}{\partial r}=-\frac{1}{6}(2q+1)(6q-4r-1)=0\label{eq:lemma:hfun_bound:30}
\end{equation}

i.e., when 
\begin{equation}
r=\frac{1}{4}(6q-1)\label{eq:lemma:hfun_bound:31}
\end{equation}

In other words, $h_{2}$ decreases for $r=1,2,\ldots$, then reaches
the minimum\footnote{In fact, due to the diophantic nature of $h_{2}$, its minimum is
either at $\left\lfloor \frac{1}{4}(6q-1)\right\rfloor $ or $\left\lceil \frac{1}{4}(6q-1)\right\rceil $.} at $r=\frac{1}{4}(6q-1)$, next starts to increase for $r\geq\left\lceil \frac{1}{4}(6q-1)\right\rceil $.
However, $h,h_{2}$ are defined for $r\leq q-1$. Thus, it is clear
that within the interval $0\leq r\leq q-1$ the function $h_{2}$
is strictly decreasing with respect to $r$. Moreover, it is easy
to verify that $q-1<\left\lfloor \frac{1}{4}(6q-1)\right\rfloor $.
Thus, to determine the minimal value of $h_{2}$ it is enough to check
their value for $r=q-1$.

Thus $h_{2}$: 
\begin{equation}
h_{2}(q,q-1)=\frac{1}{6}\left(2q^{2}+3q+1\right)\label{eq:lemma:hfun_bound:32}
\end{equation}

Since, $q\geq2$ then it is easy to verify that $h_{2}(q,q-1)>0$.
This implies that $h(q,k,r)>0$ for every $0\leq r\leq q-1$ and $k$
such that $\frac{1}{2}\left(2q^{2}-r\left(2q+1\right)\right)\leq k<\frac{1}{2}\left(2q^{2}-(r-1)\left(2q+1\right)\right)$.
Hence, also $\mathcal{F}(n,\mathcal{X}(n)-k)>0$ for $n=2q+1$ where
$1\leq k\leq q^{2}$,  which completes the proof of (\ref{eq:lemma:hfun_bound:2}). 

\smallskip\textsc{Proof. Equation (\ref{eq:lemma:hfun_bound:3}), part 1.}\smallskip

Let $n$ be even i.e. $n=2q$ where $q\in\mathbb{N}_{+}$. Since (\ref{corollary:no-of-edges-dt-max-graph})
to prove that $\mathcal{F}(n,\mathcal{X}(n)+k)$ is smaller than $0$
it is enough to show that for every integer $q,k$ such that $q\geq2$
and $1\leq k\leq\binom{n}{2}-\mathcal{X}(n)$ where $\binom{n}{2}-\mathcal{X}(n)=\binom{2q}{2}-q(q-1)=q^{2}$
it holds that $\mathcal{F}(2q,q(q-1)+k)\leq0$. After a series of
elementary transformations applied to (\ref{eq:empty-triad-theorem-eq-2})
we obtain that:
\begin{equation}
\mathcal{F}(2q,q(q-1)+k)=-\frac{2}{3}\left(q\left\lfloor \frac{k}{q}\right\rfloor ^{2}+(q-2k)\left\lfloor \frac{k}{q}\right\rfloor +k(q-1)\right)\label{eq:lemma:hfun_bound:33}
\end{equation}

Let us consider the relationship between $k$ and $\left\lfloor \frac{k}{q}\right\rfloor $.
When $1\leq k<q$ it holds that $\left\lfloor \frac{k}{q}\right\rfloor =0$,
when $q\leq k<2q$ it holds that $\left\lfloor \frac{k}{q}\right\rfloor =1$
and similarly, $2q\leq k<3q$ then it holds that $\left\lfloor \frac{k}{q}\right\rfloor =2$.
In general, when $rq\leq k<(r+1)q$ then $\left\lfloor \frac{k}{q}\right\rfloor =r$.
Of course, since $k\leq q^{2}$ then $r\leq q$. Hence, instead of
considering the function $\mathcal{F}$ at once, we may analyze it
in the intervals in which $\left\lfloor \frac{k}{q}\right\rfloor $
is known and constant. Let us define: 

\begin{equation}
f(q,k,r)\overset{\textit{df}}{=}qr^{2}+(q-2k)r+k(q-1)\label{eq:lemma:hfun_bound:34}
\end{equation}

It is easy to see that $f(q,k,r)=-\frac{3}{2}\cdot\mathcal{F}(2q,q(q-1)+k)$
if $rq\leq k<(r+1)q$ for $r=0,\ldots,q-1$. Hence, instead of analyzing
$\mathcal{F}$ we will focus on the auxiliary function $f.$ 

The first observation is that $f$ is linear with respect to $k$
providing that $q$ and $r$ are known and fixed. Thus, the minimal
value of $f$ with respect to $k$ within the interval $rq\leq k<(r+1)q$
is $\min\{f(q,rq,r),f(q,(r+1)q,r)\}$. In other words, it is enough
to check that $f$ is greater than $0$ at both edges of the interval
for $k$. Let us consider $f$ at the lower bound, i.e. for $k=rq$. 

\begin{equation}
f(q,rq,r)=qr(q-r)\label{eq:lemma:hfun_bound:35}
\end{equation}

It is easy to verify that for every $0<r<q$ and $q\geq2$ the value
$f(q,rq,r)>0$. The function $f(q,rq,r)$ reaches $0$ when $r=0$.
Thus, $f(q,rq,r)\geq0$ for every $r$ such that $0\leq r\leq q$. 

Let us consider $f$ at the other end of interval, i.e. for $k=(r+1)q-1$.
\begin{equation}
f(q,(r+1)q-1,r)=q^{2}(r+1)-q\left(r^{2}+2r+2\right)+2r+1\label{eq:lemma:hfun_bound:36}
\end{equation}

Similarly as above, we would like to show that for every admissible
$r$ the function $f(q,(r+1)q-1,r)\geq0$. Hence, let us rewrite $f$
with respect to $r$. 

\begin{equation}
f(q,(r+1)q-1,r)=-qr^{2}+r\left(q^{2}-2q+2\right)+\left(q^{2}-2q+1\right)\label{eq:lemma:hfun_bound:37}
\end{equation}

When considering $f$ as a polynomial with respect to $r$ one may
notice that the coefficient at $r^{2}$ is negative ($-q<0$) which
means that $f$ is concave.

Let us denote $f_{2}(q,r)\stackrel{\textit{df}}{=}f(q,(r+1)q-1,r)$.
It is easy to compute that $\frac{\partial f_{2}}{\partial r}=0$
when $r=\frac{q^{2}-2q+2}{2q}$. Since $\frac{\partial^{2}f_{2}}{\partial r^{2}}=-2q>0$,
thus $f_{2}$ reaches the maximum\footnote{In fact, due to the diophantine nature of $f$ it reaches the maximum
for $r=\left\lfloor \frac{q^{2}-2q+2}{2q}\right\rfloor $ or $r=\left\lceil \frac{q^{2}-2q+2}{2q}\right\rceil $.} for $r=\frac{q^{2}-2q+2}{2q}$. Since the interval of $r$ is $0\leq r<q$
and also $0\leq\frac{q^{2}-2q+2}{2q}<q$ therefore the minimum of
$f_{2}$ for $0\leq r<q$ is the smaller of the two $f_{2}(q,0)$
and $f_{2}(q,q-1)$. 

Hence 
\begin{equation}
f_{2}(q,0)=q^{2}-2q+1,\,\,\,\,f_{2}(q,q-1)=q-1\label{eq:lemma:hfun_bound:38}
\end{equation}

Since for every $q\geq2$ it holds that $\min\{f_{2}(q,0),f_{2}(q,q-1)\}\geq0$
then $f_{2}(q,r)\geq0$ for every fixed $q\geq2$ and $0\leq r<q$,
which implies that also for $k=(r+1)q-1$, $f(q,k,r)\geq0$. Therefore
$f(q,k,r)\geq0$ for every $rq\leq k<(r+1)q$ for $r=0,\ldots,q$. 

As $f(q,k,r)=-\frac{3}{2}\cdot\mathcal{F}(2q,q(q-1)+k)$ when $rq\leq k<(r+1)q$,
then due to the arbitrary choice of $r$ it holds that $\mathcal{F}(n,\mathcal{X}(n)+k)\leq0$
for $n=2q$ and $0\leq k<q^{2}$. As one may observe, the above reasoning
does not cover $k=q^{2}$. This is the last ``point interval'' that
needs to be considered. For $k=q^{2}$ we have 
\begin{equation}
\mathcal{F}(2q,q(q-1)+q^{2})=\frac{1}{3}(-2)q\left(\lfloor2q\rfloor^{2}+(1-4q)\lfloor2q\rfloor+2(2q-1)q\right)\label{eq:lemma:hfun_bound:39}
\end{equation}

Since $q\in\mathbb{N}_{+}$ then $\lfloor2q\rfloor=2q$. Hence it
is easy to verify that 
\begin{equation}
\mathcal{F}(2q,q(q-1)+q^{2})=0\label{eq:lemma:hfun_bound:40}
\end{equation}

Which completes the first part of the proof of (\ref{eq:lemma:hfun_bound:3}).

\smallskip\textsc{Proof. Equation (\ref{eq:lemma:hfun_bound:3}), part 2.}\smallskip

Let $n$ be odd i.e. $n=2q+1$ where $q\in\mathbb{N}_{+}$. Since
(\ref{corollary:no-of-edges-dt-max-graph}) to prove that $\mathcal{F}(n,\mathcal{X}(n)+k)$
is smaller than $0$ it is enough to show that for every integer $q,k$
such that $q\geq2$ and $1\leq k\leq\binom{2q}{2}-q^{2}-1=q^{2}-q-1$
it holds that $\mathcal{F}(2q+1,q^{2}+k)\leq0$. After a series of
elementary transformations applied to (\ref{eq:empty-triad-theorem-eq-2})
we obtain:

\begin{align}
\mathcal{F}(2q+1,q^{2}+k)= & -\frac{1}{3}\left((2q+1)\left\lfloor \frac{2\left(q^{2}+k\right)}{2q+1}\right\rfloor ^{2}\right.\nonumber \\
 & -\left(4k+4q^{2}-2q-1\right)\left\lfloor \frac{2\left(q^{2}+k\right)}{2q+1}\right\rfloor \label{eq:lemma:hfun_bound:41}\\
 & \left.\begin{array}{c}
\\
\\
\\
\end{array}+(2q-1)(3k+(q-1)q)\right)\nonumber 
\end{align}

Since $1\leq k\leq q^{2}-q-1$ we may estimate the upper and the lower
bound for $\left\lfloor \frac{2\left(q^{2}+k\right)}{2q+1}\right\rfloor $
as
\begin{equation}
q-1\leq\left\lfloor \frac{2q^{2}}{2q+1}\right\rfloor +\left\lfloor \frac{2k}{2q+1}\right\rfloor \leq\left\lfloor \frac{2\left(q^{2}+k\right)}{2q+1}\right\rfloor \label{eq:lemma:hfun_bound:42}
\end{equation}

and

\begin{align}
\left\lfloor \frac{2\left(q^{2}+k\right)}{2q+1}\right\rfloor \leq & \left\lfloor \frac{2\left(q^{2}+q^{2}-q-1\right)}{2q+1}\right\rfloor \leq\left\lfloor \frac{4q^{2}}{2q}-\frac{2q+2}{2q+1}\right\rfloor =\label{eq:lemma:hfun_bound:43}\\
 & \left\lfloor 2q-\frac{2q+2}{2q+1}\right\rfloor =\left\lfloor 2q-2\right\rfloor =2q-2\nonumber 
\end{align}

Let us denote $r\overset{\textit{df}}{=}\left\lfloor \frac{2\left(q^{2}+k\right)}{2q+1}\right\rfloor $.
 Thus, $q-1\leq r\leq2q-2$. Let us consider the relationship between
$k$ and $r$. It holds that $\left\lfloor \frac{2\left(q^{2}+k\right)}{2q+1}\right\rfloor =r$
wherever $r\leq\frac{2\left(q^{2}+k\right)}{2q+1}<r+1$. Thus it is
easy to determine that $\left\lfloor \frac{2\left(q^{2}+k\right)}{2q+1}\right\rfloor =r$
wherever $\frac{1}{2}\left(2qr+r-2q^{2}\right)\leq k<\frac{1}{2}\left(\left(r+1\right)\left(2q+1\right)-2q^{2}\right)$. 

Let us consider the function $\mathcal{F}(2q+1,q^{2}+k)$ for $k\in\mathbb{N}_{+}$
such that $\frac{1}{2}\left(2qr+r-2q^{2}\right)\leq k<\frac{1}{2}\left(\left(r+1\right)\left(2q+1\right)-2q^{2}\right)$.
For this purpose, let us define $f$ 

\begin{equation}
f(q,k,r)\overset{\textit{df}}{=}(2q+1)r^{2}-r\left(4k+4q^{2}-2q-1\right)+(2q-1)(3k+(q-1)q)\label{eq:lemma:hfun_bound:44}
\end{equation}

It is easy to verify that 
\begin{equation}
\mathcal{F}(2q+1,q^{2}+k)=-\frac{1}{3}f(q,k,r)\label{eq:lemma:hfun_bound:45}
\end{equation}

providing that $q,r\in\mathbb{N}_{+}$, $\frac{1}{2}\left(2qr+r-2q^{2}\right)\leq k<\frac{1}{2}\left(\left(r+1\right)\left(2q+1\right)-2q^{2}\right)$,
$q-1\leq r\leq2q-2$ and $q\geq2$. Hence, wherever $f(q,k,r)\geq0$
then $\mathcal{F}(2q+1,q^{2}+k)\leq0$. Let us observe that $f$ is
linear with respect to $k$.  Therefore it is enough to check the
value of $f(q,k,r)$ at the edges of the admissible interval for $k$,
and prove that those values are above $0$ in any possible interval
determined by $r$. For this purpose let us define 
\begin{equation}
f_{2}(q,r)\stackrel{\textit{df}}{=}f(q,\frac{1}{2}\left(2qr+r-2q^{2}\right),r)\label{eq:lemma:hfun_bound:46}
\end{equation}

for the lower bound, and 
\begin{equation}
f_{3}(q,r)\stackrel{\textit{df}}{=}f(q,\frac{1}{2}\left(\left(r+1\right)\left(2q+1\right)-2q^{2}\right)-1,r)\label{eq:lemma:hfun_bound:47}
\end{equation}

for the upper bound. Hence

\begin{equation}
f_{2}(q,r)=-\frac{1}{2}(2q+1)\left(4q^{2}-6qr-2q+2r^{2}+r\right)\label{eq:lemma:hfun_bound:48}
\end{equation}

\begin{equation}
f_{3}(q,r)=-4q^{3}+6q^{2}(r+1)-q\left(2r^{2}+2r+5\right)+\frac{1}{2}\left(-2r^{2}+3r+3\right)\label{eq:lemma:hfun_bound:49}
\end{equation}

Let us reorganize the above equations with respect to $r$:

\begin{equation}
f_{2}(q,r)=-\left(2q+1\right)r^{2}+\left(2q+6q^{2}-\frac{1}{2}\right)r-4q^{3}+q\label{eq:lemma:hfun_bound:50}
\end{equation}

\begin{equation}
f_{3}(q,r)=-\left(2q+1\right)r^{2}+\left(6q^{2}-2q+\frac{3}{2}\right)r-4q^{3}+6q^{2}-5q+\frac{3}{2}\label{eq:lemma:hfun_bound:51}
\end{equation}

Since both $f_{2}$ and $f_{3}$ have second degree polynomials with
respect to $r$, and the coefficients nearby $r^{2}$ are negative,
then $f_{2}$ and $f_{3}$ are concave parabolas. Therefore $f_{2}$
and $f_{3}$ are not smaller than $0$ within the interval $q-1\leq r\leq2q-2$
if they are not negative at both ends of the interval i.e. $q-1$
and $2q-2$. As the estimation (\ref{eq:lemma:hfun_bound:42}) is
not perfect, let us assume for a moment that $r$ is in $q\leq r\leq2q-2$,
whilst the case $r=q-1$ we handle separately. 

Let us examine (\ref{eq:lemma:hfun_bound:50}). 

\begin{equation}
f_{2}(q,r)=q^{2}+\frac{q}{2}\,\,\,\textit{when}\,\,\,\,r=q\label{eq:lemma:hfun_bound:52}
\end{equation}

and 

\begin{equation}
f_{2}(q,r)=(2q-3)(2q+1)\,\,\,\textit{when}\,\,\,r=2q-2\label{eq:lemma:hfun_bound:53}
\end{equation}

Since $q\geq2$ both of the above equations are greater than $0$.
For (\ref{eq:lemma:hfun_bound:51}) it is enough to assume that $q-1\leq r\leq2q-2$.
Thus,

\begin{equation}
f_{3}(q,r)=q^{2}-\frac{3q}{2}-1\,\,\,\textit{when}\,\,\,r=q-1\label{eq:lemma:hfun_bound:54}
\end{equation}

and 
\begin{equation}
f_{3}(q,r)=2q^{2}+2q-\frac{11}{2}\,\,\,\textit{when}\,\,\,r=2q-2\label{eq:lemma:hfun_bound:55}
\end{equation}

Similarly, it is easy to verify that both of the above expressions
are non negative as $q\geq2$. 

 At the end, let us explicitly calculate 
\begin{equation}
f(q,k,q-1)=2kq+k\label{eq:lemma:hfun_bound:56}
\end{equation}

As $k$ is always non negative, then also in this case $f$ is non
negative $0$. Thereby for every $1\leq k\leq q^{2}-q-1$ it holds
that $\mathcal{F}(2q+1,q^{2}+k)\leq0$ which completes the proof of
the Lemma \ref{lemma:f-function}\QED

\section{\label{sec:appendix-B}Proof of the Lemma \ref{lemma:c-function}}

\noindent\textsc{Thesis.} 

For every $n\in\mathbb{N}_{+},n\geq3$ the function $\mathcal{C}$:
\begin{enumerate}
\item is constant and equals $\mathcal{C}(n,m)=0$ for every $m$ such that
$0\leq m<n$
\item is strictly increasing for every $m\in\mathbb{N}_{+}$ such that $n\leq m\leq\binom{n}{2}$,
i.e. 
\end{enumerate}
\[
\mathcal{C}(n,m+1)-\mathcal{C}(n,m)>0\tag{\ref{eq:lemma:cfun_increasing}}
\]
\noindent\textsc{Proof. Claim 1.}\smallskip 

The first claim that $\mathcal{C}(n,m)=0$ for every $m$ such that
$0\leq m<n$ is a direct consequence of the equation (\ref{eq:number-of-enforced-triads}).
It is enough to note that the right side of expression (\ref{eq:number-of-enforced-triads})
is the product where the first part is $\frac{1}{2}\left\lfloor \frac{m}{n}\right\rfloor $.
Hence, wherever $m<n$ the product often equals $0$. 

\noindent\textsc{Proof. Claim 2.}\smallskip 

Due to (Theorem \ref{theorem:enforced-triads_theorem}) it holds that
\begin{align}
\mathcal{C}(n,m+1)-\mathcal{C}(n,m)= & \frac{1}{2}\left(\left\lfloor \frac{m}{n}\right\rfloor \left(n\left\lfloor \frac{m}{n}\right\rfloor -2m+n\right)-\right.\nonumber \\
 & \left.\left\lfloor \frac{m+1}{n}\right\rfloor \left(n\left\lfloor \frac{m+1}{n}\right\rfloor -2m+n-2\right)\right)\label{eq:lemma-8-eq-1}
\end{align}

It is easy to observe that for some positive integer $p=1,2,\ldots$
when $m=np-1$ then $\left\lfloor \frac{m}{n}\right\rfloor =p-1,\left\lfloor \frac{m+1}{n}\right\rfloor =p$.
Next, by increasing $m$ by one we get $m=np$ and $\left\lfloor \frac{m}{n}\right\rfloor =p,\left\lfloor \frac{m+1}{n}\right\rfloor =p$.
Then, for $m=n(p+1)-1$ the values of our floored expressions change
to $\left\lfloor \frac{m}{n}\right\rfloor =p,\left\lfloor \frac{m+1}{n}\right\rfloor =p+1$,
and then by increasing $m$ by one we get $\left\lfloor \frac{m}{n}\right\rfloor =p+1,\left\lfloor \frac{m+1}{n}\right\rfloor =p+1$.
Hence, there are two different intervals with respect to the values
$\left\lfloor \frac{m}{n}\right\rfloor $ and $\left\lfloor \frac{m+1}{n}\right\rfloor $.
The first one in which both expressions have the same value, and the
other one (composed of one point) in which their values differ by
one. In general, we may observe that: 

wherever $m=np-1$ then $\left\lfloor \frac{m}{n}\right\rfloor =p-1,\left\lfloor \frac{m+1}{n}\right\rfloor =p$,
and wherever $np\leq m<n(p+1)-1$ then $\left\lfloor \frac{m}{n}\right\rfloor =p,\left\lfloor \frac{m+1}{n}\right\rfloor =p$.

Let us define the auxiliary function $h$ by replacing in (\ref{eq:lemma-8-eq-1})
$\left\lfloor \frac{m}{n}\right\rfloor $ by $r$ and $\left\lfloor \frac{m+1}{n}\right\rfloor $
by $t$: 

\begin{equation}
h(n,m,r,t)\overset{\textit{df}}{=}\frac{1}{2}\left(r\left(nr-2m+n\right)-t\left(nt-2m+n-2\right)\right)\label{eq:lemma-8-eq-2}
\end{equation}

The function $h$ can be rewritten with respect to $m$, so

\begin{equation}
h(n,m,r,t)=\frac{1}{2}nr^{2}+m\left(t-r\right)+\frac{1}{2}nr-\frac{1}{2}nt^{2}-\frac{1}{2}nt+t\label{eq:lemma-8-eq-3}
\end{equation}

It is easy to observe that 
\begin{equation}
\mathcal{C}(n,m+1)-\mathcal{C}(n,m)=h(n,m,r,t)\label{eq:lemma-8-eq-4}
\end{equation}

where $r=\left\lfloor \frac{m}{n}\right\rfloor $ and $t=\left\lfloor \frac{m+1}{n}\right\rfloor $.
Thus, instead of analyzing $h(n,m,r,t)$ for $m$ such that $n\leq m\leq\binom{n}{2}$
we analyze $h(n,m,r,t)$ in two intervals $m=np-1$ and $np\leq m<n(p+1)-1$.
This, due to the arbitrary choice of $p$, would apply to $\mathcal{C}(n,m+1)-\mathcal{C}(n,m)$
over the whole interval $n\leq m\leq\binom{n}{2}$. 

Let us observe that $h$ is linear with respect to $m$. Thus to prove
that $h(n,m,r,t)>0$ when $n,r,t$ are constant, one needs only to
verify the value of $h$ at the ends of both intervals to which $m$
may belong. Thus, let us consider the first ``point'' interval $m=np-1$.
In this interval $\left\lfloor \frac{m}{n}\right\rfloor =p-1,\left\lfloor \frac{m+1}{n}\right\rfloor =p$,
thus: 

\begin{equation}
h(n,np-1,p-1,p)=p-1\label{eq:lemma-8-eq-5}
\end{equation}

As $m\geq n$, and $m=np-1$ thus $p\geq2$. Hence, 
\begin{equation}
h(n,np-1,p-1,p)\geq2-1=1\label{eq:lemma-8-eq-6}
\end{equation}

This supports the thesis of the theorem, i.e. $np\leq m<n(p+1)-1$,
where $\left\lfloor \frac{m}{n}\right\rfloor =p,\left\lfloor \frac{m+1}{n}\right\rfloor =p$.
For both its ends we have:
\begin{equation}
h(n,np,p,p)=p\label{eq:lemma-8-eq-7}
\end{equation}

\begin{equation}
h(n,n(p+1)-1,p,p)=p\label{eq:lemma-8-eq-8}
\end{equation}

As $m\geq n$ and $np\leq m$ then $p\geq1$. Thus in both cases $h$
is strictly greater than $0$. Hence, for every $np-1\leq m\leq n(p+1)-1$
it holds that 
\begin{equation}
\mathcal{C}(n,m+1)-\mathcal{C}(n,m)>0\label{eq:lemma-8-eq-9}
\end{equation}

Due to the arbitrary choice of $p$ this statement completes the proof
of the theorem. \QED

\section{\label{sec:appendix:C}Proof of the Lemma \ref{lemma:c-function-equality}}

\noindent\textsc{Thesis.}\smallskip

For every $n\in\mathbb{N}_{+},n\geq3$ it holds that 
\[
\binom{n}{3}-\mathcal{C}(n,\mathcal{X}(n))=\mathcal{Y}(n)\tag{\ref{eq:lemma:cfun_equality}}
\]
\smallskip\textsc{Proof. Part 1.}\smallskip

Let $n=4q$ ($n$ is even, and $\left\lfloor \frac{n}{2}\right\rfloor =\left\lceil \frac{n}{2}\right\rceil =2q$
is even), $n\geq4$, hence $q\geq1$ and $\mathcal{X}(4q)=2q(2q-1)$.
Thus to prove (\ref{eq:lemma:cfun_equality}) for even numbers we
show that 
\begin{equation}
\binom{4q}{3}-\mathcal{C}(4q,2q(2q-1))-\mathcal{Y}(4q)=0\label{eq:lemma-C-eq-1}
\end{equation}

Since (\ref{eq:yfun-def}) reduces to: 
\begin{align}
\mathcal{Y}(4q)= & \binom{4q}{3}-\left(\binom{2q}{3}-\frac{q\left(q^{2}-1\right)}{3}\right)\nonumber \\
 & -\left(\binom{2q}{3}-\frac{q\left(q^{2}-1\right)}{3}\right)\label{eq:lemma-C-eq-2}
\end{align}

by elementary transformations one may show that (\ref{eq:lemma-C-eq-1})
is equivalent to 
\begin{equation}
2q\left(\left\lceil \frac{1}{2}-q\right\rceil +q-1\right)^{2}=0\label{eq:lemma-C-eq-3}
\end{equation}

The above is true as $\left\lceil \frac{1}{2}-q\right\rceil =1-q$
for every $q\in\mathbb{N}_{+}$. 

\smallskip\textsc{Proof. Part 2.}\smallskip

Let $n=4q+1$ ($n$ is odd, $\left\lfloor \frac{n}{2}\right\rfloor =2q$
is even, and $\left\lceil \frac{n}{2}\right\rceil =2q+1$ is odd),
$n\geq4$, hence $q\geq1$ and $\mathcal{X}(4q+1)=\binom{\left\lfloor \frac{n}{2}\right\rfloor }{2}+\binom{\left\lceil \frac{n}{2}\right\rceil }{2}=\binom{2q}{2}+\binom{2q+1}{2}=4q^{2}$.
Thus to prove (\ref{eq:lemma:cfun_equality}) for $n=4q+1$ we show
that 
\begin{equation}
\binom{4q+1}{3}-\mathcal{C}(4q+1,4q^{2})-\mathcal{Y}(4q+1)=0\label{eq:lemma-C-eq-4}
\end{equation}

Since (\ref{eq:yfun-def}) reduces to: 

\begin{align}
\mathcal{Y}(4q+1)= & \binom{4q+1}{3}-\left(\binom{2q}{3}-\frac{q\left(q^{2}-1\right)}{3}\right)\nonumber \\
 & -\left(\binom{2q+1}{3}-\frac{q\left(2q^{2}+3q+1\right)}{6}\right)\label{eq:lemma-C-eq-5}
\end{align}

by elementary transformations one may show that (\ref{eq:lemma-C-eq-4})
is equivalent to 

\begin{align}
\frac{1}{2}\left((4q+1)\left\lfloor \frac{4q^{2}}{4q+1}\right\rfloor ^{2}+\right.\nonumber \\
\left.\left(-8q^{2}+4q+1\right)\left\lfloor \frac{4q^{2}}{4q+1}\right\rfloor +q\left(4q^{2}-5q+1\right)\right) & =0\label{eq:lemma-C-eq-6}
\end{align}

Let us note that for every $q\geq1$ it holds\footnote{compare with (\ref{eq:lemma:hfun_bound:13}).}
that $\left\lfloor \frac{4q^{2}}{4q+1}\right\rfloor =q-1$. Thus,
the above equation can be written in the form 
\begin{equation}
\frac{1}{2}\left(\left(-8q^{2}+4q+1\right)(q-1)+\left(4q^{2}-5q+1\right)q+(4q+1)(q-1)^{2}\right)=0\label{eq:lemma-C-eq-7}
\end{equation}

which can be easily verified as true. 

\smallskip\textsc{Proof. Part 3.}\smallskip

Let $n=4q+2$ ($n$ is even, $\left\lfloor \frac{n}{2}\right\rfloor =2q+1$
is odd, and $\left\lceil \frac{n}{2}\right\rceil =2q+1$ is odd) and
$\mathcal{X}(4q+2)=\binom{\left\lfloor \frac{n}{2}\right\rfloor }{2}+\binom{\left\lceil \frac{n}{2}\right\rceil }{2}=\binom{2q+1}{2}+\binom{2q+1}{2}=2q(2q+1)$
Thus, to prove (\ref{eq:lemma:cfun_equality}) for $n=4q+2$ we show
that 
\begin{equation}
\binom{4q+2}{3}-\mathcal{C}(4q+2,2q(2q+1))-\mathcal{Y}(4q+2)=0\label{eq:lemma-C-eq-8}
\end{equation}

Since (\ref{eq:yfun-def}) reduces to: 
\begin{equation}
\mathcal{Y}(4q+2)=\binom{4q+2}{3}-2\left(\binom{2q+1}{3}-\frac{q\left(2q^{2}+3q+1\right)}{6}\right)\label{eq:lemma-C-eq-9}
\end{equation}

by elementary transformations one may show that (\ref{eq:lemma-C-eq-8})
is equivalent to 
\begin{equation}
(2q+1)\left(\lfloor q\rfloor^{2}+(1-2q)\lfloor q\rfloor+(q-1)q\right)=0\label{eq:lemma-C-eq-10}
\end{equation}

As $q$ is an integer it is easy to show that (\ref{eq:lemma-C-eq-10})
is true. 

\smallskip\textsc{Proof. Part 4.}\smallskip

Let $n=4q+3$ ($n$ is odd $\left\lfloor \frac{n}{2}\right\rfloor =2q+1$
is odd, and $\left\lceil \frac{n}{2}\right\rceil =2q+2$ is even)
and $\mathcal{X}(4q+3)=\binom{\left\lfloor \frac{n}{2}\right\rfloor }{2}+\binom{\left\lceil \frac{n}{2}\right\rceil }{2}=\binom{2q+1}{2}+\binom{2q+2}{2}=(2q+1)^{2}$.
Thus, to prove (\ref{eq:lemma:cfun_equality}) for $n=4q+3$ we show
that 

\begin{equation}
\binom{4q+3}{3}-\mathcal{C}(4q+3,(2q+1)^{2})-\mathcal{Y}(4q+3)=0\label{eq:lemma-C-eq-11}
\end{equation}

by elementary transformations one may show that (\ref{eq:lemma-C-eq-11})
is equivalent to:
\begin{align}
\frac{1}{2}\left(\left(-8q^{2}-4q+1\right)\left\lfloor \frac{(2q+1)^{2}}{4q+3}\right\rfloor \right.+\nonumber \\
\left.(4q+3)\left\lfloor \frac{(2q+1)^{2}}{4q+3}\right\rfloor ^{2}+\left(4q^{2}+q-1\right)q\right) & =0\label{eq:lemma-C-eq-12}
\end{align}

Since\footnote{Let us notice that $\left\lfloor \frac{(2q+1)^{2}}{4q+3}\right\rfloor =\left\lfloor \frac{4q^{2}+4q+1}{4q+3}\right\rfloor =\ldots=\left\lfloor q+\frac{q+1}{4q+3}\right\rfloor $.
The fact that for $q=0,1,\ldots$ the expression $\frac{q+1}{4q+3}$
is always smaller than $1$, implies that $\left\lfloor \frac{(2q+1)^{2}}{4q+3}\right\rfloor =\left\lfloor q\right\rfloor $. } $\left\lfloor \frac{(2q+1)^{2}}{4q+3}\right\rfloor =\left\lfloor q\right\rfloor =q$
then  the above expression can be written as: 
\begin{equation}
\frac{1}{2}\left((4q+3)q^{2}+\left(4q^{2}+q-1\right)q+\left(-8q^{2}-4q+1\right)q\right)=0\label{eq:lemma-C-eq-13}
\end{equation}

which can easily be verified as true. This also completes the proof
of the Lemma \ref{lemma:c-function-equality}.

\QED

\section{\label{sec:appendix:D}Proof of the Lemma \ref{lemma:g-function}}

\noindent\textsc{Thesis.}\smallskip

For every $n\in\mathbb{N}_{+},n\geq3$ the function $\mathcal{G}$
is strictly decreasing for every $m\in\mathbb{N}_{+}$ such that $1\leq m\leq\mathcal{X}(n)$,
i.e. 
\[
\mathcal{G}(n,m)-\mathcal{G}(n,m+1)>0\,\,\,\textit{where}\,\,\,1\leq m<\mathcal{X}(n)\tag{\ref{eq:lemma:cfun_decreasing}}
\]
\noindent\textsc{Proof of (\ref{eq:lemma:cfun_decreasing}), part 1 (for even numbers)}\smallskip

Let $n=2q$ (even), $n\geq3$, hence $q\geq2$, and $m,m+1\leq\mathcal{X}(2q)=q(q-1)$.
Note that, in particular, the last assumption implies that $m\leq q(q-1)-1$.
Hence (\ref{eq:lemma:cfun_decreasing}) can be written as:

\begin{align}
3\left(\mathcal{G}(n,m)-\mathcal{G}(n,m+1)\right)= & -2q\left\lfloor \frac{m}{q}\right\rfloor ^{2}+(4m-2q)\left\lfloor \frac{m}{q}\right\rfloor +2q\left\lfloor \frac{m+1}{q}\right\rfloor ^{2}\nonumber \\
 & -3q\left\lfloor \frac{m}{2q}\right\rfloor ^{2}+3q\left\lfloor \frac{m+1}{2q}\right\rfloor ^{2}-4m\left\lfloor \frac{m+1}{q}\right\rfloor \nonumber \\
 & +2q\left\lfloor \frac{m+1}{q}\right\rfloor -4\left\lfloor \frac{m+1}{q}\right\rfloor +3(m-q)\left\lfloor \frac{m}{2q}\right\rfloor \nonumber \\
 & -3m\left\lfloor \frac{m+1}{2q}\right\rfloor +3q\left\lfloor \frac{m+1}{2q}\right\rfloor -3\left\lfloor \frac{m+1}{2q}\right\rfloor +6q-6\label{eq:lemma-d:eq:1}
\end{align}

Let us denote $r_{1}=\left\lfloor \frac{m}{q}\right\rfloor ,r_{2}=\left\lfloor \frac{m}{2q}\right\rfloor ,r_{3}=\left\lfloor \frac{m+1}{q}\right\rfloor ,r_{4}=\left\lfloor \frac{m+1}{2q}\right\rfloor $.
This allows us to denote 
\begin{align}
3\left(\mathcal{G}(n,m)-\mathcal{G}(n,m+1)\right)= & -2qr_{1}^{2}+(4m-2q)r_{1}+2qr_{3}^{2}-3qr_{2}^{2}\nonumber \\
 & +3qr_{4}^{2}-4mr_{3}+2qr_{3}-4r_{3}+3(m-q)r_{2}\nonumber \\
 & -3mr_{4}+3qr_{4}-3r_{4}+6q-6\label{eq:lemma-d:eq:2}
\end{align}

Let us introduce the auxiliary function $h$ such that 

\begin{align}
h(q,m,r_{1},r_{2},r_{3},r_{4})\overset{\textit{df}}{=} & r_{1}(4m-2q)+3r_{2}(m-q)-4mr_{3}-3mr_{4}\nonumber \\
 & -2qr_{1}^{2}-3qr_{2}^{2}+2qr_{3}^{2}+3qr_{4}^{2}+2qr_{3}\nonumber \\
 & +3qr_{4}+6q-4r_{3}-3r_{4}-6\label{eq:lemma-d:eq:3}
\end{align}

It is easy to verify that 
\begin{equation}
3\left(\mathcal{G}(n,m)-\mathcal{G}(n,m+1)\right)=h(q,m,r_{1},r_{2},r_{3},r_{4})\label{eq:lemma-d:eq:4}
\end{equation}

Let us try to investigate changes in the values $r_{1},r_{2},r_{3}$
and $r_{4}$. To do so, let us create the following table: 

\medskip{}

\begin{center}
\begin{tabular}{|c|c|c|c|c|}
\hline 
interval of $m$ & $\left\lfloor \frac{m}{q}\right\rfloor $ & $\left\lfloor \frac{m}{2q}\right\rfloor $ & $\left\lfloor \frac{m+1}{q}\right\rfloor $ & $\left\lfloor \frac{m+1}{2q}\right\rfloor $\tabularnewline
\hline 
\hline 
$0q\leq m<1q-1$ & $0$ & $0$ & $0$ & $0$\tabularnewline
\hline 
$1q-1=m$ & $0$ & $0$ & $1$ & $0$\tabularnewline
\hline 
$1q\leq m<2q-1$ & $1$ & $0$ & $1$ & $0$\tabularnewline
\hline 
$2q-1=m$ & $1$ & $0$ & $2$ & $1$\tabularnewline
\hline 
$2q\leq m<3q-1$ & $2$ & $1$ & $2$ & $1$\tabularnewline
\hline 
$3q-1=m$ & $2$ & $1$ & $3$ & $1$\tabularnewline
\hline 
$3q\leq m<4q-1$ & $3$ & $1$ & $3$ & $1$\tabularnewline
\hline 
$4q-1=m$ & $3$ & $1$ & $4$ & $2$\tabularnewline
\hline 
$4q\leq m<5q-1$ & $4$ & $2$ & $4$ & $2$\tabularnewline
\hline 
\end{tabular}
\par\end{center}

\medskip{}

As we can see, there are four kinds of interval (hereinafter referred
to as cases) that need to be considered with respect to $m$. Every
analyzed interval is parametrized by the auxiliary variable $s\in\mathbb{N}\cup\{0\}$.
By choosing arbitrarily $s=0,1,2,3,\ldots$ we are able to analyze
the function $h$, and as follows $\mathcal{G}(n,m)-\mathcal{G}(n,m+1)$,
for every interesting $m$. The cases we need to consider are: 

\medskip{}

\begin{center}
\begin{tabular}{|c|c|c|c|c|c|}
\hline 
Case & interval of $m$ & $\left\lfloor \frac{m}{q}\right\rfloor $ & $\left\lfloor \frac{m}{2q}\right\rfloor $ & $\left\lfloor \frac{m+1}{q}\right\rfloor $ & $\left\lfloor \frac{m+1}{2q}\right\rfloor $\tabularnewline
\hline 
\hline 
1a & $2sq\leq m<(2s+1)q-1$ & $2s$ & $s$ & $2s$ & $s$\tabularnewline
\hline 
2a & $(2s+1)q-1=m$ & $2s$ & $s$ & $2s+1$ & $s+1$\tabularnewline
\hline 
3a & $\left(2s+1\right)q\leq m<(2s+2)q-1$ & $2s+1$ & $s$ & $2s+1$ & $s$\tabularnewline
\hline 
4a & $(2s+1)q-1=m$ & $2s$ & $s$ & $2s+1$ & $s$\tabularnewline
\hline 
\end{tabular}
\par\end{center}

\smallskip\textsc{Case 1a}\smallskip

Let $2sq\leq m<(2s+1)q-1$. As $m\leq q(q-1)-1$, then the candidate
for the highest value of $s$ is the smallest integer for which $q(q-1)-1<(2s+1)q-1$,
hence $\frac{q-2}{2}<s$. This means that $\left\lfloor \frac{q-2}{2}\right\rfloor +1=s$,
hence $\frac{q-2}{2}+1\geq s$. On the other hand, as $2sq\leq m$
and $m\leq q(q-1)-1$ then $s\leq\frac{q(q-1)-1}{2q}$. Since the
second condition is more restrictive\footnote{Note that $\left(\frac{q-2}{2}+1\right)-\frac{q(q-1)-1}{2q}=\frac{1+q}{2q}$ }
we assume that $s\leq\frac{q(q-1)-1}{2q}$.  Let us denote 
\begin{equation}
h(q,m,r_{1},r_{2},r_{3},r_{4})=h(q,m,2s,s,2s,s)\label{eq:lemma-d:eq:5}
\end{equation}

Hence,

\begin{equation}
h(q,m,2s,s,2s,s)=6q-11s-6\label{eq:lemma-d:eq:6}
\end{equation}

The highest possible value of $s$ is $\frac{q(q-1)-1}{2q}$, hence
the minimal value of $h$ providing this constraint is $6(q-1)-11\frac{q(q-1)-1}{2q}$
i.e.

\begin{equation}
h(q,m,2s,s,2s,s)\geq6(q-1)-11\frac{q(q-1)-1}{2q}\label{eq:lemma-d:eq:7}
\end{equation}

Which is equivalent to 

\begin{equation}
h(q,m,2s,s,2s,s)\geq\frac{q^{2}-q+11}{2q}\label{eq:lemma-d:eq:8}
\end{equation}

Hence, it is clear that for $q\geq2$ the right side of the above
equation is always greater than $0$. 

\smallskip\textsc{Case 2a}\smallskip

Let $(2s+1)q-1=m$. Since $m\leq q(q-1)-1$ then $s$ cannot be higher
than the maximal integer which meets the inequality $(2s+1)q-1\leq q(q-1)-1$,
i.e. $s\leq\frac{q-2}{2}$.  Let us calculate $h$, for $m=(2s+1)q-1$,
$r_{1}=2s,r_{2}=s,r_{3}=2s+1$ and $r_{4}=s+1$. 

\begin{equation}
h(q,m,r_{1},r_{2},r_{3},r_{4})=9q-11s-6\label{eq:lemma-d:eq:9}
\end{equation}

As the maximal $s=\frac{q-2}{2}$ then 

\begin{equation}
h(q,m,r_{1},r_{2},r_{3},r_{4})\geq9q-11\frac{q-2}{2}-6\label{eq:lemma-d:eq:10}
\end{equation}

which is equivalent to 
\begin{equation}
h(q,m,r_{1},r_{2},r_{3},r_{4})\geq\frac{7q}{2}+5\label{eq:lemma-d:eq:11}
\end{equation}

It is clear that for $q\geq2$ the right side of the above equation
is always greater than $0$. 

\smallskip\textsc{Case 3a}\smallskip

Let $\left(2s+1\right)q\leq m<(2s+2)q-1$

Since $m\leq q(q-1)-1$ then $s$ is not higher than the maximal integer
which meets the inequality $q(q-1)-1<(2s+2)q-1$, i.e. $\frac{q-3}{2}<s$.
 Thus, $s=\left\lfloor \frac{q-3}{2}\right\rfloor +1$, hence $s\leq\frac{q-3}{2}+1$.
On the other hand, also $\left(2s+1\right)q\leq m$ and $m\leq q(q-1)-1$.
Thus $s$ should meet $\left(2s+1\right)q\leq q(q-1)-1$, i.e. $s\leq\frac{1}{2}\left(\frac{q(q-1)-1}{q}-1\right)$.
The second condition is more restrictive\footnote{as $\left(\frac{q-3}{2}+1\right)-\frac{1}{2}\left(\frac{q(q-1)-1}{q}-1\right)=\frac{q+1}{2q}$}
hence we assume that $s\leq\frac{1}{2}\left(\frac{q(q-1)-1}{q}-1\right)$.
 Let us calculate $h$ assuming $r_{1}=2s+1,r_{2}=s,r_{3}=2s+1$,
and $r_{4}=s$. So, 
\begin{equation}
h(q,m,r_{1},r_{2},r_{3},r_{4})=h(q,m,2s+1,s,2s+1,s)\label{eq:lemma-d:eq:12}
\end{equation}

and thus,

\begin{equation}
h(q,m,2s+1,s,2s+1,s)=6q-11s-10\label{eq:lemma-d:eq:13}
\end{equation}

The highest allowed value of $s$ is $\frac{1}{2}\left(\frac{q(q-1)-1}{q}-1\right)$,
thus it is true that 

\begin{equation}
h(q,m,2s+1,s,2s+1,s)\geq6q-\frac{11}{2}\left(\frac{q(q-1)-1}{q}-1\right)-10\label{eq:lemma-d:eq:14}
\end{equation}

which is equivalent to 

\begin{equation}
h(q,m,2s+1,s,2s+1,s)\geq\frac{1}{2}\left(q+\frac{11}{q}+2\right)\label{eq:lemma-d:eq:15}
\end{equation}

It is clear that for $q\geq2$ the above equation is always greater
than $0$.

\smallskip\textsc{Case 4a}\smallskip

Let $(2s+1)q-1=m$

Since $m\leq q(q-1)-1$ then $s$ cannot be higher than the maximal
integer which meets the inequality $(2s+1)q-1\leq q(q-1)-1$, i.e.
$s\leq\frac{q-2}{2}$.  Let us calculate $h$, by the assumptions
that $m=(2s+1)q-1$, $r_{1}=2s,r_{2}=s,r_{3}=2s+1$ and $r_{4}=s$. 

\begin{equation}
h(q,m,r_{1},r_{2},r_{3},r_{4})=6q-11s-6\label{eq:lemma-d:eq:16}
\end{equation}

Since the maximal $s$ is $\frac{q-2}{2}$ then 

\begin{equation}
h(q,m,r_{1},r_{2},r_{3},r_{4})\geq6q-11\left(\frac{q-2}{2}\right)-6\label{eq:lemma-d:eq:17}
\end{equation}

which is equivalent to 

\begin{equation}
h(q,m,r_{1},r_{2},r_{3},r_{4})\geq\frac{q}{2}+5\label{eq:lemma-d:eq:18}
\end{equation}

It is clear that for $q\geq2$ the above equation is always greater
than $0$. This remark completes the proof for $n=2q$. 

\smallskip\textsc{Proof of (\ref{eq:lemma:cfun_decreasing}), part 2 (for odd numbers)}\smallskip

Let $n=2q+1$ (odd), $n\geq3$, hence $q\geq1$ and $0\leq m,m+1\leq\mathcal{X}(2q+1)=q^{2}$.
In particular, the last assumption implies that $0\leq m\leq q^{2}-1$.
When $n=2q+1$ it holds that:
\begin{align}
6\left(\mathcal{G}(n,m)-\mathcal{G}(n,m+1)\right)= & -6+12q+(6m-6q-3)\left\lfloor \frac{m}{2q+1}\right\rfloor \nonumber \\
 & -3(2q+1)\left\lfloor \frac{m}{2q+1}\right\rfloor ^{2}+(8m-4q-2)\left\lfloor \frac{2m}{2q+1}\right\rfloor \nonumber \\
 & -2(2q+1)\left\lfloor \frac{2m}{2q+1}\right\rfloor ^{2}-3\left\lfloor \frac{m+1}{2q+1}\right\rfloor \nonumber \\
 & -6m\left\lfloor \frac{m+1}{2q+1}\right\rfloor +6q\left\lfloor \frac{m+1}{2q+1}\right\rfloor +3\left\lfloor \frac{m+1}{2q+1}\right\rfloor ^{2}\nonumber \\
 & +6q\left\lfloor \frac{m+1}{2q+1}\right\rfloor ^{2}-6\left\lfloor \frac{2(m+1)}{2q+1}\right\rfloor -4q\left\lfloor \frac{2(m+1)}{2q+1}\right\rfloor ^{2}\nonumber \\
 & +2\left\lfloor \frac{2(m+1)}{2q+1}\right\rfloor ^{2}+8m\left\lfloor \frac{2(m+1)}{2q+1}\right\rfloor +4q\left\lfloor \frac{2(m+1)}{2q+1}\right\rfloor \label{eq:lemma-d:eq:19}
\end{align}

Let us denote $r_{1}=\left\lfloor \frac{2m}{2q+1}\right\rfloor $,
$r_{2}=\left\lfloor \frac{m}{2q+1}\right\rfloor $, $r_{3}=\left\lfloor \frac{2(m+1)}{2q+1}\right\rfloor $
and $r_{4}=\left\lfloor \frac{m+1}{2q+1}\right\rfloor $. This allows
us to simplify the above equation to 
\begin{align}
6\left(\mathcal{G}(n,m)-\mathcal{G}(n,m+1)\right)= & -6+12q+r_{1}(8m-4q-2)\nonumber \\
 & -2(2q+1)r_{1}^{2}+r_{2}(6m-6q-3)\nonumber \\
 & -8mr_{3}-6mr_{4}+3(2q+1)r_{2}^{2}+4qr_{3}^{2}+6qr_{4}^{2}\nonumber \\
 & +4qr_{3}+6qr_{4}+2r_{3}^{2}+3r_{4}^{2}-6r_{3}-3r_{4}\label{eq:lemma-d:eq:20}
\end{align}

Let us define:

\begin{align}
h(q,m,r_{1},r_{2},r_{3},r_{4})= & -6+12q+r_{1}(8m-4q-2)-2(2q+1)r_{1}^{2}\nonumber \\
 & +r_{2}(6m-6q-3)-8mr_{3}-6mr_{4}\nonumber \\
 & +3(2q+1)r_{2}^{2}+4qr_{3}^{2}+6qr_{4}^{2}\nonumber \\
 & +4qr_{3}+6qr_{4}+2r_{3}^{2}+3r_{4}^{2}-6r_{3}-3r_{4}\label{eq:lemma-d:eq:21}
\end{align}

It is clear that 
\begin{equation}
6\left(\mathcal{G}(n,m)-\mathcal{G}(n,m+1)\right)>0\Leftrightarrow h(q,m,r_{1},r_{2},r_{3},r_{4})>0\label{eq:lemma-d:eq:22}
\end{equation}

Let us try to investigate changes in the values $r_{1},r_{2},r_{3}$
and $r_{4}$. To do so, let us write down a few cases of each in the
form of a table: 

\medskip{}

\begin{center}
\begin{tabular}{|c|c|}
\hline 
interval & $\left\lfloor \frac{2m}{2q+1}\right\rfloor $\tabularnewline
\hline 
\hline 
$0\leq m<\text{\ensuremath{\frac{1}{2}}}(2q+1)$ & $0$\tabularnewline
\hline 
$\text{\ensuremath{\frac{1}{2}}}(2q+1)\leq m<\text{\ensuremath{\frac{2}{2}}}(2q+1)$ & $1$\tabularnewline
\hline 
$\frac{2}{2}(2q+1)\leq m<\text{\ensuremath{\frac{3}{2}}}(2q+1)$ & $2$\tabularnewline
\hline 
$\frac{3}{2}(2q+1)\leq m<\text{\ensuremath{\frac{4}{2}}}(2q+1)$ & $3$\tabularnewline
\hline 
$\frac{4}{2}(2q+1)\leq m<\text{\ensuremath{\frac{5}{2}}}(2q+1)$ & $4$\tabularnewline
\hline 
\end{tabular}
\par\end{center}

\medskip{}

\begin{center}
\begin{tabular}{|c|c|}
\hline 
interval & $\left\lfloor \frac{m}{2q+1}\right\rfloor $\tabularnewline
\hline 
\hline 
$0\leq m<2q+1$ & $0$\tabularnewline
\hline 
$2q+1\leq m<2(2q+1)$ & $1$\tabularnewline
\hline 
$2(2q+1)\leq m<3(2q+1)$ & $2$\tabularnewline
\hline 
$3(2q+1)\leq m<4(2q+1)$ & $3$\tabularnewline
\hline 
$4(2q+1)\leq m<5(2q+1)$ & $4$\tabularnewline
\hline 
\end{tabular}
\par\end{center}

\medskip{}

\begin{center}
\begin{tabular}{|c|c|}
\hline 
interval & $\left\lfloor \frac{2(m+1)}{2q+1}\right\rfloor $\tabularnewline
\hline 
\hline 
$0\leq m<\text{\ensuremath{\frac{1}{2}}}(2q+1)-1$ & $0$\tabularnewline
\hline 
$\text{\ensuremath{\frac{1}{2}}}(2q+1)-1\leq m<\text{\ensuremath{\frac{2}{2}}}(2q+1)-1$ & $1$\tabularnewline
\hline 
$\frac{2}{2}(2q+1)-1\leq m<\text{\ensuremath{\frac{3}{2}}}(2q+1)-1$ & $2$\tabularnewline
\hline 
$\frac{3}{2}(2q+1)-1\leq m<\text{\ensuremath{\frac{4}{2}}}(2q+1)-1$ & $3$\tabularnewline
\hline 
$\frac{4}{2}(2q+1)-1\leq m<\text{\ensuremath{\frac{5}{2}}}(2q+1)-1$ & $4$\tabularnewline
\hline 
\end{tabular}
\par\end{center}

\medskip{}

\begin{center}
\begin{tabular}{|c|c|}
\hline 
interval & $\left\lfloor \frac{m+1}{2q+1}\right\rfloor $\tabularnewline
\hline 
\hline 
$0\leq m<(2q+1)-1$ & $0$\tabularnewline
\hline 
$(2q+1)-1\leq m<2(2q+1)-1$ & $1$\tabularnewline
\hline 
$2(2q+1)-1\leq m<3(2q+1)-1$ & $2$\tabularnewline
\hline 
$3(2q+1)-1\leq m<4(2q+1)-1$ & $3$\tabularnewline
\hline 
$4(2q+1)-1\leq m<5(2q+1)-1$ & $4$\tabularnewline
\hline 
\end{tabular}
\par\end{center}

\medskip{}

As we can see, there are four kinds of interval (hereinafter referred
to as cases) that need to be considered with respect to $m$. Every
analyzed interval is parametrized by the auxiliary variable $s\in\mathbb{N}\cup\{0\}$.
By choosing arbitrarily $s=0,1,2,3,\ldots$ we are able to analyze
the function $h$, and as follows $\mathcal{G}(n,m)-\mathcal{G}(n,m+1)$,
for every interesting $m$. The cases we need to consider are: 

\medskip{}

\begin{center}
\begin{tabular}{|c|c|c|c|c|c|}
\hline 
Case & interval of $m$ & $\left\lfloor \frac{2m}{2q+1}\right\rfloor $ & $\left\lfloor \frac{m}{2q+1}\right\rfloor $ & $\left\lfloor \frac{2(m+1)}{2q+1}\right\rfloor $ & $\left\lfloor \frac{m+1}{2q+1}\right\rfloor $\tabularnewline
\hline 
\hline 
$1$b & $\frac{2s}{2}(2q+1)\leq m<\frac{2s+1}{2}(2q+1)-1$ & $2s$ & $s$ & $2s$ & $s$\tabularnewline
\hline 
$2$b & $m=\frac{2s+1}{2}(2q+1)-1$ & $2s$ & $s$ & $2s+1$ & $s$\tabularnewline
\hline 
$3$b & $\frac{2s+1}{2}(2q+1)\leq m<\frac{2s+2}{2}(2q+1)-1$ & $2s+1$ & $s$ & $2s+1$ & $s$\tabularnewline
\hline 
$4$b & $m=\frac{2s+2}{2}(2q+1)-1$ & $2s+1$ & $s$ & $2s+2$ & $s+1$\tabularnewline
\hline 
\end{tabular}
\par\end{center}

\smallskip\textsc{Case 1b}\smallskip

Let $\frac{2s}{2}(2q+1)\leq m<\frac{2s+1}{2}(2q+1)-1$. 

In general $0\leq m\leq q^{2}-1$, thus $0\leq\frac{2s}{2}(2q+1)$
and $q^{2}-1<\frac{2s+1}{2}(2q+1)-1$ which implies (providing that
$s\in\mathbb{N}\cup\{0\}$) that $0\leq s$ and $s$ should not be
greater than the smallest integer that meets the inequality $s>\frac{q^{2}}{2q+1}-1$.
This implies that $s=\left\lfloor \frac{q^{2}}{2q+1}-1\right\rfloor +1$,
thus $s\leq\frac{q^{2}}{2q+1}$. On the other hand, $\frac{2s}{2}(2q+1)\leq m$
and $m\leq q^{2}-1$. This suggests that $\frac{2s}{2}(2q+1)\leq q^{2}-1$,
i.e. $s\leq\frac{q^{2}-1}{2q+1}$. Since the second constraint is
more restrictive\footnote{as $\frac{q^{2}}{2q+1}-\frac{q^{2}-1}{2q+1}=\frac{1}{2q+1}$}
then we adopt $s\leq\frac{q^{2}-1}{2q+1}$ 

Thus, let us consider $h(q,m,r_{1},r_{2},r_{3}.r_{4})$ where, following
the assumptions of case 1, $r_{1}=2s$, $r_{2}=s$, $r_{3}=2s$ and
$r_{4}=s$. It is easy to calculate that

\begin{equation}
h(q,m,2s,s,2s,s)=12q-22s-6\label{eq:lemma-d:eq:23}
\end{equation}

The highest possible $s$ is $\frac{q^{2}-1}{2q+1}$, hence it holds
that 
\begin{equation}
h(q,m,2s,s,2s,s)\geq6(2q-1)-22\left(\frac{q^{2}-1}{2q+1}\right)\label{eq:lemma-d:eq:24}
\end{equation}

which is true if and only if

\begin{equation}
h(q,m,2s,s,2s,s)\geq\frac{2\left(q^{2}+8\right)}{2q+1}\label{eq:lemma-d:eq:25}
\end{equation}

It is clear that the above expression is strictly higher than $0$
for $q\geq1$. 

\smallskip\textsc{Case 2b}\smallskip

Let $m=\frac{2s+1}{2}(2q+1)-1$  

The highest possible value of $m$ is $q^{2}-1$ thus $m=\frac{2s+1}{2}(2q+1)-1\leq q^{2}-1$,
hence, $s\leq\frac{1}{2}\left(\frac{2q^{2}}{2q+1}-1\right)$.

Let us consider $h(q,m,r_{1},r_{2},r_{3}.r_{4})$ where (see case
2) $r_{1}=2s$, $r_{2}=s$, $r_{3}=2s+1$, $r_{4}=s$ and denote:
\begin{equation}
\widehat{h}(q,m,r_{1},r_{2},r_{3}.r_{4})\stackrel{\textit{df}}{=}h(q,\frac{2s+1}{2}(2q+1)-1,2s,s,2s+1,s)\label{eq:lemma-d:eq:26}
\end{equation}

Thus, we may calculate that

\begin{equation}
\widehat{h}(q,m,r_{1},r_{2},r_{3}.r_{4})=12q-22s-6\label{eq:lemma-d:eq:27}
\end{equation}

Adopting the upper bound of $s=\frac{1}{2}\left(\frac{2q^{2}}{2q+1}-1\right)$
we obtain

\begin{equation}
\widehat{h}(q,m,r_{1},r_{2},r_{3}.r_{4})\geq12q-22\left(\frac{1}{2}\left(\frac{2q^{2}}{2q+1}-1\right)\right)-6\label{eq:lemma-d:eq:28}
\end{equation}

which is equivalent to 

\begin{equation}
\widehat{h}(q,m,r_{1},r_{2},r_{3}.r_{4})\geq\frac{2q^{2}+22q+5}{2q+1}\label{eq:lemma-d:eq:29}
\end{equation}

It is clear that the right side of the above expression is strictly
higher than $0$ for $q\geq1$.

\smallskip\textsc{Case 3b}\smallskip

Let $\frac{2s+1}{2}(2q+1)\leq m<\frac{2s+2}{2}(2q+1)-1$. The highest
possible value of $m$ is $q^{2}-1$, thus the highest possible value
of $s$ cannot be greater than the smallest positive integer for which
$q^{2}-1<\frac{2s+2}{2}(2q+1)-1$. Hence $\frac{q^{2}}{2q+1}-2<s$,
which implies that $\left\lfloor \frac{q^{2}}{2q+1}-2\right\rfloor +1=s$.
Therefore $\frac{q^{2}}{2q+1}-1\geq s$. On the other hand, $\frac{2s+1}{2}(2q+1)\leq m$
and $m\leq q^{2}-1$. This suggests that $\frac{1}{2}\left(\frac{2\left(q^{2}-1\right)}{2q+1}-1\right)\geq s$.
Since the first condition is more restrictive\footnote{as $\frac{1}{2}\left(\frac{2\left(q^{2}-1\right)}{2q+1}-1\right)-\left(\frac{q^{2}}{2q+1}-1\right)=\frac{2q-1}{4q+2}$}
then we assume that $\frac{q^{2}}{2q+1}-1\geq s$. 

Let us consider $h(q,m,r_{1},r_{2},r_{3}.r_{4})$ where (following
case 2) $r_{1}=2s+1$, $r_{2}=s$, $r_{3}=2s+1$ and $r_{4}=s$. It
is easy to calculate that

\begin{equation}
h(q,m,2s+1,s,2s+1,s)=2(6q-11s-7)\label{eq:lemma-d:eq:30}
\end{equation}

The upper bound for $s$ is $\frac{q^{2}}{2q+1}-1$, thus 

\begin{equation}
h(q,m,2s+1,s,2s+1,s)\geq2\left(6q-11\left(\frac{q^{2}}{2q+1}-1\right)-7\right)\label{eq:lemma-d:eq:31}
\end{equation}

which is equivalent to 
\begin{equation}
h(q,m,2s+1,s,2s+1,s)\geq\frac{2\left(q^{2}+14q+4\right)}{2q+1}\label{eq:lemma-d:eq:32}
\end{equation}

It is clear that the above expression is strictly higher than $0$
for $q\geq1$.

\smallskip\textsc{Case 4b}\smallskip

Let $m=\frac{2s+2}{2}(2q+1)-1$. The highest possible value of $m$
is $q^{2}-1$. Thus $m=\frac{2s+2}{2}(2q+1)-1\leq q^{2}-1$, which
is equivalent to $s\leq\frac{1}{2}\left(\frac{q^{2}}{2q+1}-1\right)$.

Let us consider $h(q,m,r_{1},r_{2},r_{3}.r_{4})$ where (see case
4) $r_{1}=2s+1$, $r_{2}=s$, $r_{3}=2s+2$, $r_{4}=s+1$ and denote:

\begin{equation}
\widehat{h}(q,m,r_{1},r_{2},r_{3}.r_{4})\stackrel{\textit{df}}{=}h(q,\frac{2s+2}{2}(2q+1)-1,2s+1,s,2s+2,s+1)\label{eq:lemma-d:eq:33}
\end{equation}
It is easy to calculate that

\begin{equation}
\widehat{h}(q,m,r_{1},r_{2},r_{3}.r_{4})=2(6q-11s-7)\label{eq:lemma-d:eq:34}
\end{equation}

As the highest possible value of $s$ is $\frac{1}{2}\left(\frac{q^{2}}{2q+1}-1\right)$
then 
\begin{equation}
\widehat{h}(q,m,r_{1},r_{2},r_{3}.r_{4})\geq2\left(6q-11\left(\frac{1}{2}\left(\frac{q^{2}}{2q+1}-1\right)\right)-7\right)\label{eq:lemma-d:eq:35}
\end{equation}

Which is equivalent to 

\begin{equation}
\widehat{h}(q,m,r_{1},r_{2},r_{3}.r_{4})\geq\frac{13q^{2}+6q-3}{2q+1}\label{eq:lemma-d:eq:36}
\end{equation}

It is easy to verify that the above expression is strictly greater
than $0$ for $q\geq1$. The last observation completes the proof
of the lemma. \QED
\end{document}